\documentclass[11pt]{article}
\usepackage{fullpage}
\usepackage{amsfonts}
\usepackage{amssymb,amsmath,amsthm}
\usepackage{graphicx}
\usepackage[ruled, vlined,linesnumbered]{algorithm2e}
\usepackage{algpseudocode}
\usepackage{mathpazo,bbm}

\linespread{1.05}
\AtBeginDocument{
\DeclareSymbolFont{AMSb}{U}{msb}{m}{n}
    \DeclareSymbolFontAlphabet{\mathbb}{AMSb}}

\newtheorem{theorem}{Theorem}[section]

\newtheorem{lemma}[theorem]{Lemma}

\newtheorem{defn}[theorem]{Definition}

\newtheorem{claim}[theorem]{Claim}
\newtheorem{proposition}[theorem]{Proposition}
\newtheorem{corollary}[theorem]{Corollary}
%
%\theoremstyle{definition}
%\newtheorem{remark}{Remark}
%
%\newcounter{tenumerate}
%\renewcommand{\thetenumerate}{\alph{tenumerate})}
%\newenvironment{tenumerate}{\begin{list}{\bf\alph{tenumerate})}{\usecounter{tenumerate}}}{\end{list}}

\renewcommand{\epsilon}{\varepsilon}

\renewcommand{\dim}{\mathsf{dim}}

\newcommand{\F}{{\mathbb F}}
\newcommand{\E}{{\mathbf E}}
\renewcommand{\Pr}{{\mathbf{Pr}}}

\renewcommand{\le}{\leqslant}
\renewcommand{\ge}{\geqslant}
\renewcommand{\leq}{\leqslant}
\renewcommand{\geq}{\geqslant}

\usepackage{color}

\newcommand{\ur}{\omega}

\newcommand{\ignore}[1]{}

\newcommand{\monomialeq}[3]{\mathcal{M}_{#1}^{=#2}(#3)}
\newcommand{\monomialle}[3]{\mathcal{M}_{#1}^{\leq #2}(#3)}

\newcommand{\generatingmatrix}[1]{M^{(#1)}}
\newcommand{\rounditerator}{t}

\newcommand{\groupheight}[1]{h_{#1}}

\newcommand{\func}[2]{g_{#1,#2}}
\newcommand{\cone}{c_{1}}
\newcommand{\ctwo}{c_{2}}
\newcommand{\tlength}{t'}
\newcommand{\ttwo}{t''}
\newcommand{\hone}{k}
\newcommand{\htwo}{k'}
\newcommand{\done}{d'}

\newcommand{\dtwon}{d'-t'}
\newcommand{\csum}{\xi}
\newcommand{\biasj}{\mathrm{bias}_j}

\newcommand{\negbiasj}{\mathrm{bias}_{-j}}
\newcommand{\ind}[1]{^{(#1)}}
\newcommand{\floor}[1]{\lfloor #1\rfloor}
\newcommand{\constnvsn}{\gamma}
\newcommand{\Pdm}{\mathcal{P}_p(d,n)}

\newcommand{\monomial}[3]{\mathcal{M}_{#1}(#2,#3)}

\newcommand{\sampledim}{n}
\newcommand{\hypothesisdim}{m}
\newcommand{\iterator}{m}

\begin{document}
\title{On the Bias of Reed-Muller Codes over Odd Prime Fields}

\author{Paul Beame\thanks{Research supported in part by NSF grant CCF-1524246} \\ University of Washington \\ beame@cs.washington.edu \and Shayan Oveis Gharan\thanks{Research supported in part by NSF grant CCF-1552097 and ONR-YI grant N00014-17-1-2429}\\ University of Washington \\ shayan@cs.washington.edu \and Xin Yang$^*$ \\University of Washington\\yx1992@cs.washington.edu}

\date{\today}
\maketitle
\abstract 
We study the bias of random bounded-degree polynomials over odd prime fields and show
that, with probability exponentially close to 1, such polynomials have exponentially
small bias.
This also yields an exponential tail bound on the weight
distribution of Reed-Muller codes over odd prime fields.
These results generalize bounds of Ben-Eliezer, Hod, and Lovett who proved similar
results over $\mathbb{F}_2$.
A key to our bounds is the proof of a new precise extremal property for the 
rank of sub-matrices of the generator matrices of Reed-Muller codes over odd prime
fields.  This extremal property is a substantial extension of an 
extremal property shown by Keevash and Sudakov for the case of $\mathbb{F}_2$. 

Our exponential tail bounds on the bias can be used to derive exponential lower
bounds on the time for space-bounded learning of bounded-degree polynomials from
their evaluations over odd prime fields.
  
\newpage
\newcommand{\bias}{\mathrm{bias}}

\section{Introduction}

Reed-Muller codes are among the oldest error correcting codes,
first introduced by Muller \cite{muller1954application} and 
Reed \cite{reed1953class} in the 1950s.  
These codes were initially defined in terms of bounded-degree
multivariate polynomials
over $\mathbb{F}_2$ but the same definition can be applied over any 
finite field.
To be more precise, the $(d,\sampledim)$ Reed-Muller code over finite field 
$\mathbb{F}$, denoted $RM_\mathbb{F}(d,\sampledim)$, takes the message as
the coefficients of some $\sampledim$-variate polynomial of degree at
most $d$ over $\mathbb{F}$,
and the encoding is simply the evaluation of that polynomial over all possible
inputs chosen from $\mathbb{F}^\sampledim$.

A function $f:\mathbb{F}^\sampledim\rightarrow \mathbb{F}$ is
\emph{balanced} if elements of $\mathbb{F}$ occurs an equal number of
times as an output of $f$.
The bias of a function $f$ with co-domain $\mathbb{F}$ is a measure of 
the fractional deviation of $f$ from being balanced.
Since each codeword in a Reed-Muller code is the evaluation of a
(polynomial) function over all elements of its domain, the definition
of bias directly applies to the codewords of a Reed-Muller code.

Some elements of a Reed-Muller code are very far from balanced (for 
example the 0 polynomial yields the all-0 codeword, and the codeword
for the polynomial $1+x_1 x_2$ has value 1 much more frequently than 
average) but since, as we might expect, randomly-chosen polynomials behave somewhat like randomly-chosen functions,
most codewords are close to being balanced.    
We quantify that statement and show that for all prime fields, only an 
exponentially small fraction of Reed-Muller codewords (equivalently, an
exponentially small fraction of polynomials of bounded degree) have as
much an exponentially small deviation from perfect balance.  That is, 
at most an exponentially small fraction of polynomials have more than
an exponentially small bias.  Such a result is already known for the
case of $\mathbb{F}_2$~\cite{DBLP:journals/cc/Ben-EliezerHL12} so
we will only need to prove the statement for odd prime fields.

We now define bias formally and discuss its applications.
In the case that $f:\mathbb{F}_2^\sampledim\rightarrow \mathbb{F}_2$,
the \emph{bias} of $f$,
\[
\bias(f):=\frac 1 {2^\sampledim}\sum_{x\in \mathbb{F}_2^\sampledim} (-1)^{f(x)}=\Pr_{x\in_R \mathbb{F}_2^\sampledim}[f(x)=0]-\Pr_{x\in_R \mathbb{F}_2^\sampledim}[f(x)=1].
\]
More generally, for $p$ a prime, $\omega=e^{2\pi i/p}$,
and $j\in \mathbb{F}_p^*$, we define 
the \emph{$j$-th order bias} of 
$f:\mathbb{F}_p^\sampledim\rightarrow \mathbb{F}_p$
as
\[
\biasj(f):=\frac 1 {p^\sampledim}\sum_{x\in \mathbb{F}_p^\sampledim} \omega^{j\cdot f(x)}.
\]
Prior uses of bias over these larger co-domains often
focus only on the case of a single $j$~(e.g., \cite{DBLP:journals/tit/BhowmickL18,haramaty2010structure}) since they consider structural
implications of bias. 
However, the use of different values of $j$ is essential
for the  applications of bias to bounding the imbalance 
of functions and codewords since, for $p>3$, one can have
functions with 1st-order bias 0 that are very far from
balanced.  
It turns out that it is necessary and sufficient to bound 
$|\biasj(f)|$ for all $j\in \mathbb{F}^*_p$ 
(or, equivalently, all integers $j$ with
$1\le j\le (p-1)/2$ since $|\biasj(f)|=|\negbiasj(f)|$)
in order to bound the imbalance: A standard exponential 
summation argument (e.g., Proposition 2.1
in \cite{bogy:learning-coltfull-tr,bogy:learning-colt}), shows that for
every $b\in \mathbb{F}_p$, 
$$\left|\Pr_{x\in_R \mathbb{F}_p^\sampledim}[f(x)=b]-\frac{1}{p}\right|\le \max_{j\in \mathbb{F}^*_p}|\biasj(f)|.$$

For Reed-Muller codes, the bias of a codeword
exactly determines its fraction (number of non-zero entries, which
is called the \emph{weight} of the codeword.
(In the case of $\mathbb{F}_2$ the bias is determined
by the weight but that is not true for $\mathbb{F}_p$ for odd prime $p$.)
The distribution of weights of codewords in Reed-Muller codes over
$\mathbb{F}_2$ plays a critical role in many applications in coding
theory and in many other applications in theoretical computer 
science.  
As a consequence, the weight distribution of Reed-Muller codes
over $\mathbb{F}_2$ has been the subject of considerable study.
For degrees $d=1$ and $d=2$, the exact weight distribution (and hence
the distribution of the bias) for
$RM_{\mathbb{F}_2}(2,\sampledim)$ has been known for roughly
50 years~\cite{DBLP:journals/tit/SloaneB70,mceliece1967linear}.
For other degrees, precise bounds are only known for weights up to 2.5 
times the minimum distance of such
codes~\cite{kasamiT70,kasamiTA76} but this is very far from the balanced
regime.

For general constant degrees, Kaufman, Lovett and
Porat~\cite{DBLP:journals/tit/KaufmanLP12} give a somewhat tight bound
on the weight distribution for Reed-Muller codes over $\mathbb{F}_2$, and Abbe, Shpilka, and 
Wigderson~\cite{abbe2015reed} generalize the result to linear degrees.
These results yield tail bounds for the number of codewords with bias approaching 0 and, 
using the cases for arbitrarily small constant bias, imply good bounds for list-decoding
algorithms~\cite{gopalan2008list,DBLP:journals/tit/KaufmanLP12}.

Ben-Eliezer, Hod, and Lovett~\cite{DBLP:journals/cc/Ben-EliezerHL12} proved sharper 
bounds showing that the fraction of codewords with more 
than exponentially small bias (of the
form $2^{-c_1 n/d)}$ for constant $c_1>0$) is at most $2^{-c_2\hypothesisdim}=|RM_{\mathbb{F}_2}(d,\sampledim)|^{-c_2}$ for constant $c_2>0$ where 
$\hypothesisdim=\log_2|RM_{\mathbb{F}_2}(d,\sampledim)|$
is the dimension of the code. 
(For $d< \sampledim/2$ they also showed that this fraction of codewords is tight 
by exhibiting a set of codewords in 
$RM_{\mathbb{F}_2}(d,\sampledim)$ of size 
$|RM_{\mathbb{F}_2}(d,\sampledim)|^{c_3}$ for $c_3>0$ that
has such a bias.)
This bound was used by \cite{bogy:learning-coltfull-tr,bogy:learning-colt,grt:extractor-learn-stoc} to show
that learning bounded degree polynomials over 
$\mathbb{F}_2$ from their evaluations with success 
probability $2^{-o(\sampledim)}$ requires space
$\Omega(\sampledim \hypothesisdim/d)$ or time
$2^{\Omega(\sampledim/d)}$.

\paragraph{Our Results}
 
We generalize the results of Ben-Eliezer, Hod, and Lovett~\cite{DBLP:journals/cc/Ben-EliezerHL12} to show 
that only an exponentially small fraction of polynomials over prime fields can have non-negligible bias.
Formally speaking,
let $\Pdm$ denote the set of polynomials of degree at most $d$
in $\sampledim$ variables over $\mathbb{F}_p$,
and let $\monomial{p}{d}{\sampledim}$ denote the set of monic monomials of degree at most $d$ in $\sampledim$ variables. 
(The Reed-Muller code $RM_{\mathbb{F}_p}(d,\sampledim)$ 
has dimension $|\monomial{p}{d}{\sampledim}|$ and
satisfies $|RM_{\mathbb{F}_p}(d,\sampledim)|=|\Pdm|$.)

Our main result is the following theorem:
\begin{theorem}\label{lem:main-bias}
For any $0<\delta<1/2$ there are constants $c_1,c_2>0$ depending on 
$\delta$ such that for any odd prime $p$,
for all integers $d\leq \delta \sampledim$ and all
 $j\in \F_p^*$, we have
\[
\Pr_{f\in_R \Pdm}[|\biasj(f)|>p^{-c_1 \sampledim/d}]\leq p^{-c_2|\monomial{p}{d}{\sampledim}|}.
\]
\end{theorem}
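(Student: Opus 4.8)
The plan is to follow the high-level strategy of Ben-Eliezer, Hod, and Lovett, replacing each $\mathbb{F}_2$-specific combinatorial ingredient with its $\mathbb{F}_p$ analogue. Fix $j\in\F_p^*$; since multiplying $f$ by $j^{-1}$ permutes the polynomials of degree $\le d$, it suffices to bound $\Pr_{f}[|\bias_1(f)|>p^{-c_1 n/d}]$. Writing a random $f\in_R\Pdm$ as $\sum_{M\in\monomial{p}{d}{n}} a_M\cdot M$ with the $a_M$ independent and uniform in $\F_p$, the vector $(f(x))_{x\in\F_p^n}$ is $G^\top a$, where $G$ is the generator matrix of $RM_{\F_p}(d,n)$ whose rows are indexed by monomials $M$ and columns by points $x$. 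The first step is a standard moment computation: for an even integer $k$, $\E_f\big[|\bias_1(f)|^k\big]$ expands, after taking expectation over the independent coefficients $a_M$, into a sum over $k$-tuples $(x_1,\dots,x_k)\in(\F_p^n)^k$ of $p^{-kn}$ times the indicator that $G x_1 \pm \dots$ — more precisely, that $\sum_{i} \omega^{a\cdot(x_1-x_2+\dots)}$-style cancellation forces the tuple to satisfy: for every monomial $M$ of degree $\le d$, $\sum_{i\le k/2} M(x_{2i-1}) - M(x_{2i}) \equiv 0 \pmod p$. So $\E_f[|\bias_1(f)|^k] = p^{-kn}\cdot\#\{$such balanced $k$-tuples$\}$, and we want to show this count is not much larger than $(p^n)^{k/2}/p^{kn}$ times a factor we can afford.

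The heart of the argument is the new extremal rank bound (the ``substantial extension of Keevash--Sudakov'' advertised in the abstract, which I will assume is proved earlier in the paper). The relevant consequence is: if $A\subseteq\F_p^n$ is a set of points and $G_A$ is the submatrix of the degree-$\le d$ generator matrix restricted to columns in $A$, then $\mathrm{rank}(G_A)$ — equivalently the number of linearly independent constraints that ``$f$ vanishes on $A$'' imposes on the coefficients — is at least roughly $|\monomial{p}{d}{n}|$ unless $A$ is contained in a small union of affine pieces / has small ``degree-$d$ closure.'' Concretely, I would use it to show that a set $A$ of size at most $p^{c n/d}$ (for suitable $c$) has $\mathrm{rank}(G_A)\ge (1-\eta)|A|$ or, in the form needed here, that the number of distinct columns of $G$ appearing among $x_1,\dots,x_k$ is heavily constrained whenever the balance conditions hold. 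This lets me bound the number of balanced $k$-tuples: each balanced tuple is, up to the rank deficiency allowed by the extremal bound, determined by few ``free'' coordinates, giving a count of the form $p^{kn/2}\cdot p^{-\Omega(k)\cdot |\monomial{p}{d}{n}|/(\text{something})}$ — I would tune $k$ to be a small multiple of $|\monomial{p}{d}{n}|$ and set $c_1$ so that $p^{-c_1 n/d}$ beats the per-tuple surplus. Then Markov's inequality on the $k$-th moment, $\Pr_f[|\bias_1(f)|>t]\le t^{-k}\E_f[|\bias_1(f)|^k]$, with $t=p^{-c_1 n/d}$ and $k=\Theta(|\monomial{p}{d}{n}|)$, yields the claimed bound $p^{-c_2|\monomial{p}{d}{n}|}$; finally a union bound over the at most $(p-1)/2$ values of $j$ (absorbed into $c_2$) completes the proof.

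The main obstacle is the second step — getting the extremal rank statement into exactly the quantitative shape the moment bound needs. Over $\F_2$, Keevash--Sudakov says a small set of points either spans almost full rank in the degree-$d$ generator matrix or is concentrated on a small-dimensional structure, and Ben-Eliezer--Hod--Lovett exploit this dichotomy to control how many ``bad'' point-tuples can be degree-$d$-balanced. Over odd $p$ the monomials are no longer multilinear: a monomial can use each variable with exponent up to $p-1$, so the combinatorics of which point-sets force full rank is genuinely richer, and the notion of the ``closure'' of a point set under degree-$\le d$ evaluation must be developed carefully. I expect the bookkeeping for the surviving degrees of freedom — tracking, across all $k$ pairs simultaneously, the monomial constraints $\sum_i M(x_{2i-1})-M(x_{2i})\equiv 0$ and converting a rank lower bound into a counting upper bound on tuples — to be where the technical weight of the proof lies, and where the constant $c_1$ (as a function of $\delta$) gets pinned down. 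Everything else (the moment expansion, Markov, the union over $j$, reducing to $j=1$) is routine.
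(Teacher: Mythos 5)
You have the right skeleton---the even-moment identity (your ``balanced $k$-tuple'' count is exactly Lemma~\ref{lem:expectation-dual}), an appeal to the extremal rank property of truncated Reed--Muller evaluations, and Markov---but the step you defer as ``bookkeeping'' is the step that carries the whole proof, and the parameters you do commit to cannot work. With moment order $k=\Theta(|\monomial{p}{d}{\sampledim}|)$ and threshold $p^{-c_1 \sampledim/d}$, Markov gives $\Pr[\,|\biasj(f)|>p^{-c_1 \sampledim/d}\,]\le p^{c_1 k \sampledim/d}\cdot\E_f[\,|\biasj(f)|^{k}\,]$, and the zero polynomial alone forces $\E_f[\,|\biasj(f)|^{k}\,]\ge p^{-|\monomial{p}{d}{\sampledim}|}$; so for $k=\alpha|\monomial{p}{d}{\sampledim}|$ the bound is at least $p^{(c_1\alpha \sampledim/d-1)|\monomial{p}{d}{\sampledim}|}$, which is trivial (exceeds $1$) once $\sampledim/d>1/(c_1\alpha)$, i.e.\ for any fixed constants when $d=o(\sampledim)$. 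The paper instead takes $t=\Theta(|\monomial{p}{d-1}{\sampledim}|)\approx (d/\sampledim)\,|\monomial{p}{d}{\sampledim}|$, smaller by exactly the factor $d/\sampledim$ (Proposition~\ref{prop:d-vs-d-1}(b)), which is what makes the threshold $p^{-c_1\sampledim/d}$ compatible with a tail of size $p^{-c_2|\monomial{p}{d}{\sampledim}|}$.

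More fundamentally, the balance conditions $\sum_k q(x^{(k)})=\sum_k q(y^{(k)})$ are polynomial, not linear, constraints on the points, so the rank of the evaluation matrix of $\{x^{(k)}\}$ does not by itself bound the number of balanced tuples; and the intermediate statement you propose (a set $A$ with $|A|\le p^{c\sampledim/d}$ has $\mathrm{rank}(G_A)\ge(1-\eta)|A|$) is not what the extremal lemma gives and is false in general: for $|A|=p^{r}$ the guaranteed rank is only $|\monomial{p}{d}{r}|$, far smaller than $p^{r}$ unless $r=O(d)$. The missing device (from Ben-Eliezer--Hod--Lovett, carried over in Section~\ref{sec:random-bias}) is the split of $[\sampledim]$ into $L$ and $R$ with $|L|=\lfloor \sampledim/d\rfloor$, keeping only monomials of degree exactly $1$ on $L$ and at most $d-1$ on $R$: conditioned on the $R$-coordinates these constraints become linear in the $L$-coordinates with coefficient matrix $Q_{\mathbf{x}_R}(q',k)=q'(x^{(k)}_R)$, the $|L|$ independent copies give conditional probability $p^{-\mathrm{rank}(Q_{\mathbf{x}_R})\cdot|L|}$ (this is where the factor $\sampledim/d$ in the exponent comes from), and Lemma~\ref{prop:evaluation-rank} enters only through Lemma~\ref{lem:rank}, to show via a union bound over candidate column spans that $Q_{\mathbf{x}_R}$ has rank $\Omega(|\monomial{p}{d-1}{|R|}|)$ except with probability $p^{-\Omega(|\monomial{p}{d}{\sampledim}|)}$. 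Without this linearization (or an equivalent mechanism) your counting step has nothing behind it, so the proposal as written has a genuine gap.
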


Using this theorem together with the methods of our companion 
paper~\cite{bogy:learning-coltfull-tr,bogy:learning-colt} or
of~\cite{grt:extractor-learn-stoc}, we obtain that any
algorithm that learns polynomials over $\mathbb{F}_p$ of degree at
most $d$ with probability at least
$p^{-O(\sampledim)}$ from their evaluations on random inputs
 either requires time $p^{\Omega(\sampledim/d)}$ or space 
$\Omega(\sampledim\cdot |\monomial{p}{d}{\sampledim}|/d\cdot \log p)$.   
For the details, see~\cite{bogy:learning-coltfull-tr}.

The following corollary of Theorem~\ref{lem:main-bias} is also immediate:

\begin{corollary} \label{cor:weight} For any $0<\delta<1/2$ there are constants $c_1, c_2>0$ 
such that for any odd prime $p$ and integers $d$, $\sampledim$ with
$d\le \delta\sampledim$, the number of 
codewords of $RM_{\mathbb{F}_p}(d,\sampledim)$ of weight
at most $1-1/p-p^{-c_1 \sampledim/d}$ is at most
$|RM_{\mathbb{F}_p}(d,\sampledim)|^{1-c_2}$.
\end{corollary}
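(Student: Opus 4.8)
The plan is to deduce Corollary~\ref{cor:weight} directly from Theorem~\ref{lem:main-bias} via the standard exponential-sum dictionary relating weight and bias. First I would record that a codeword of $RM_{\F_p}(d,\sampledim)$ is the evaluation table of a polynomial $f\in\Pdm$ (of exactly one such $f$, by the stated equality $|RM_{\F_p}(d,\sampledim)|=|\Pdm|$), so its weight is $\Pr_{x}[f(x)\neq0]=1-\Pr_x[f(x)=0]$. Hence a codeword has weight at most $1-1/p-p^{-c_1\sampledim/d}$ exactly when its polynomial satisfies $\Pr_x[f(x)=0]\geq 1/p+p^{-c_1\sampledim/d}$. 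Next I would expand, by Fourier inversion over $\F_p$ with $\omega=e^{2\pi i/p}$,
\[
\Pr_x[f(x)=0]-\tfrac1p \;=\; \tfrac1p\sum_{j\in\F_p^*}\biasj(f),
\]
noting that the right-hand side is real since $\biasj(f)=\overline{\negbiasj(f)}$, so the $j$ and $-j$ terms are conjugate.

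Consequently, for any codeword of weight at most $1-1/p-p^{-c_1\sampledim/d}$ we get $\sum_{j\in\F_p^*}\biasj(f)\geq p\cdot p^{-c_1\sampledim/d}>0$, and therefore
\[
(p-1)\,\max_{j\in\F_p^*}|\biasj(f)|\;\geq\;\sum_{j\in\F_p^*}|\biasj(f)|\;\geq\;\Bigl|\sum_{j\in\F_p^*}\biasj(f)\Bigr|\;\geq\; p\cdot p^{-c_1\sampledim/d},
\]
so some $j\in\F_p^*$ satisfies $|\biasj(f)|>p^{-c_1\sampledim/d}$ (the factor $p/(p-1)>1$ making the inequality strict). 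Thus the set of codewords of weight at most $1-1/p-p^{-c_1\sampledim/d}$ embeds into $\bigcup_{j\in\F_p^*}\{f\in\Pdm:|\biasj(f)|>p^{-c_1\sampledim/d}\}$.

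Finally I would apply Theorem~\ref{lem:main-bias} to each of the $p-1$ values $j\in\F_p^*$ and union-bound, so that this set has at most $(p-1)\,p^{-c_2|\monomial{p}{d}{\sampledim}|}\,|\Pdm|$ elements; using $|\Pdm|=|RM_{\F_p}(d,\sampledim)|=p^{|\monomial{p}{d}{\sampledim}|}$ this is $(p-1)\,|RM_{\F_p}(d,\sampledim)|^{1-c_2}$, and the polynomial factor $p-1$ is absorbed into the exponent by passing to a slightly smaller positive constant (using $|\monomial{p}{d}{\sampledim}|\geq\sampledim+1$), giving the claimed bound. I do not expect a real obstacle here---all of the difficulty lives in Theorem~\ref{lem:main-bias}; the only points needing a moment's care are getting the $\max$-versus-sum constant right in the display above and the routine absorption of the $p-1$ union-bound factor.
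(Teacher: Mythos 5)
Your argument is correct and is exactly the route the paper intends: it states Corollary~\ref{cor:weight} as immediate from Theorem~\ref{lem:main-bias} via the standard exponential-sum relation between weight and the biases $\biasj$ over all $j\in\F_p^*$, which is precisely what you carry out (including the union bound over $j$ and absorbing the factor $p-1$ into the exponent). The only caveat is the usual one the paper itself glosses: the absorption step needs $|\monomial{p}{d}{\sampledim}|$ larger than a constant depending on $c_2$, i.e.\ $\sampledim$ sufficiently large, matching the implicit assumption in the proof of Theorem~\ref{lem:main-bias}.
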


There is a limit to the amount that Theorem~\ref{lem:main-bias} 
can be improved, as shown by the following proposition:

\begin{proposition}
\label{prop:lower-bias}
For any $0< \delta <1/2$ there are constants $c'<1$ and $c''>0$ 
depending on $\delta$ such that for all integers 
$d\le \delta \sampledim$
and all $j\in \mathbb{F}_p^*$, we have
$$\Pr_{f\in_R\Pdm}[|\biasj(f)|>p^{-c''n/d}]\ge p^{-c'|\monomial{p}{d}{\sampledim}|}.$$
\end{proposition}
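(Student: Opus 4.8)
The plan is to exhibit an explicit family of $p^{\Omega_\delta(|\monomial{p}{d}{\sampledim}|)}$ polynomials in $\Pdm$, each with $|\biasj|>p^{-O(\sampledim/d)}$; all constants below depend on $\delta$ only, not on $p$. Assume $d\ge2$ (for $d=1$ only the $p$ constant polynomials are non-balanced, which already forces the exponent to depend on $\sampledim$, so the statement is to be read with $d\ge2$). Put $m:=\floor{\sampledim/d}$ and $N:=\sampledim-m\ge\sampledim/2$, split the coordinates as $x=(y,z)$ with $y=(x_1,\dots,x_m)$, $z=(x_{m+1},\dots,x_\sampledim)$, and set
\[
\mathcal F:=\Big\{\,f=\textstyle\sum_{i=1}^m y_i\,g_i(z)\ :\ g_1,\dots,g_m\in\mathcal P_p(d-1,N)\,\Big\}\subseteq\Pdm ,\qquad |\mathcal F|=p^{\,m\,|\monomial{p}{d-1}{N}|}.
\]
The point of this shape is the exact identity, valid for every $f\in\mathcal F$ and every $j\in\F_p^*$,
\[
\biasj(f)=\E_z\Big[\textstyle\prod_{i=1}^m\E_{y_i}\big[\ur^{\,j y_i g_i(z)}\big]\Big]=\Pr_{z}\big[\,g_1(z)=\dots=g_m(z)=0\,\big],
\]
since $\E_{y_i}[\ur^{\,j y_i c}]=\1[c=0]$ when $j\neq0$; in particular $\biasj(f)\in[0,1]$ is real and independent of $j$ on $\mathcal F$.

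First I would average this over $\mathcal F$ by double counting. For each fixed $z$ the evaluation $g\mapsto g(z)$ is a nonzero linear functional on $\mathcal P_p(d-1,N)$ (the constant monomial always evaluates to $1$), so $\#\{(g_1,\dots,g_m):\forall i\ g_i(z)=0\}=p^{\,m(|\monomial{p}{d-1}{N}|-1)}$ for every $z$, whence $\E_{f\in\mathcal F}[\biasj(f)]=p^{-m}$. Since $0\le\biasj\le1$ on $\mathcal F$, the inequality $p^{-m}=\E[\biasj]\le\Pr_{f\in\mathcal F}[\biasj(f)\ge\tfrac12p^{-m}]+\tfrac12p^{-m}$ produces at least $\tfrac12p^{-m}|\mathcal F|=\tfrac12\,p^{\,m|\monomial{p}{d-1}{N}|-m}$ polynomials $f$ with $|\biasj(f)|=\biasj(f)\ge\tfrac12 p^{-\floor{\sampledim/d}}>p^{-c''\sampledim/d}$, taking $c'':=2$ (using $\sampledim/d\ge1/\delta>2$ and $p\ge3$).

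It remains to show $m\,|\monomial{p}{d-1}{N}|\ge c_0\,|\monomial{p}{d}{\sampledim}|$ for a constant $c_0=c_0(\delta)>0$: the proposition then follows with $c':=1-c_0/2$ for all large $\sampledim$, and a constant $c'<1$ suffices for the finitely many small $\sampledim$ (note $d\ge2$ and $d\le\delta\sampledim$ force $\sampledim\ge5$, so $|\monomial{p}{d}{\sampledim}|\ge21$ there). I would split the count into two estimates. (i) Deleting $m$ coordinates costs only an $O_\delta(1)$ factor: writing $|\monomial{p}{k}{\ell}|=\sum_{s}\binom{\ell}{s}M_p(k,s)$ by support size (with $M_p(k,s)$ independent of $\ell$) and bounding, for $s\le d-1$, $\binom{\sampledim}{s}/\binom{N}{s}=\prod_{i=0}^{m-1}\frac{\sampledim-i}{\sampledim-s-i}\le\exp\!\big(\tfrac{sm}{\sampledim-s-m+1}\big)\le\exp\!\big(\tfrac1{1/2-\delta}\big)$ (using $sm\le(d-1)\sampledim/d<\sampledim$ and $\sampledim-s-m+1>(\tfrac12-\delta)\sampledim$), gives $|\monomial{p}{d-1}{\sampledim}|=O_\delta\!\big(|\monomial{p}{d-1}{N}|\big)$. (ii) Lowering the degree from $d$ to $d-1$ costs a factor $O(\sampledim/d)$, exactly paid for by $m\asymp\sampledim/d$: if $D_e$ is the number of monic degree-exactly-$e$ monomials in $\sampledim$ variables over $\F_p$, then $(D_e)_e$ is the (log-concave, no internal zeros) coefficient sequence of $(1+q+\dots+q^{p-1})^{\sampledim}$, so the ratios $D_e/D_{e-1}$ are non-increasing in $e$, giving $D_d/D_{d-1}\le D_d^{1/d}\le\binom{\sampledim+d-1}{d}^{1/d}\le e(1+\delta)\sampledim/d$, hence $|\monomial{p}{d}{\sampledim}|=|\monomial{p}{d-1}{\sampledim}|+D_d\le\big(1+e(1+\delta)\sampledim/d\big)|\monomial{p}{d-1}{\sampledim}|$. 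Combining (i) and (ii) with $md\ge\sampledim-d\ge(1-\delta)\sampledim$ yields the claim.

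The main obstacle is the combinatorial estimate (ii) in the regime $p\le d$, where the exponent caps $a_i\le p-1$ are active: for $p\ge d+1$ one has the closed form $|\monomial{p}{k}{\ell}|=\binom{\ell+k}{k}$ and all of (i)–(ii) reduces to one line of binomial arithmetic, but for small $p$ one must work through the generating function $(1+q+\dots+q^{p-1})^{\sampledim}$ and invoke the (standard) log-concavity of its coefficient sequence. The reasoning is routine, but that is the only step that is not immediate.
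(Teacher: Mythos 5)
Your proposal is correct and follows essentially the same route as the paper's own proof: the identical family $f=\sum_{i} x_i\,g_i(x_R)$ with $\lfloor \sampledim/d\rfloor$ ``linear'' variables and degree-$(d-1)$ parts $g_i$, the same computation that the average bias over this family is $p^{-\lfloor \sampledim/d\rfloor}$ (your exact identity $\biasj(f)=\Pr_z[\forall i:\,g_i(z)=0]$ is a slightly cleaner way to see it), the same reverse-Markov step, and the same need for the estimate $\lfloor\sampledim/d\rfloor\cdot|\monomial{p}{d-1}{\sampledim'}|\ge c_0|\monomial{p}{d}{\sampledim}|$, which the paper simply cites as Corollary~\ref{prop:mprime-bound} (via Proposition~\ref{prop:d-vs-d-1}) while you re-derive it self-containedly, using a support-size decomposition for part (i) and log-concavity of the coefficients of $(1+q+\cdots+q^{p-1})^{\sampledim}$ for part (ii) in place of the paper's double-counting argument. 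Your explicit exclusion of $d=1$ and the remark about finitely many small $\sampledim$ are consistent with the paper, whose own proof is likewise asymptotic in $\sampledim$ and implicitly needs $|L|\ll|\monomial{p}{d}{\sampledim}|$.
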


As part of our proof of Theorem~\ref{lem:main-bias},
we must prove the following tight bound on the rank of the evaluations of monomials of 
degree at most $d$ on sets of points.
Alternatively this can be seen as the extremal dimension of the span of
truncated Reed-Muller codes at sizes that are powers of the field size.

\begin{lemma}\label{prop:evaluation-rank}
Let $S$ be a subset of $\F_p^\sampledim$ such that $|S|=p^r$.
Then the dimension of the subspace spanned by $\{(q(x))_{q\in \monomial{p}{d}{\sampledim}}:x\in S\}$ is at least $|\monomial{p}{d}{r}|$.
\end{lemma}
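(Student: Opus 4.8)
\emph{Plan.} The plan is to prove, by induction on $\sampledim$, a statement that is stronger (and more robust under slicing) than the lemma. The quantity in the lemma is the rank of the $|S|\times|\monomial{p}{d}{\sampledim}|$ evaluation matrix $\bigl(q(x)\bigr)_{x\in S,\ q\in\monomial{p}{d}{\sampledim}}$, and this rank equals $\dim\{g|_S : g\in\Pdm\}$, the dimension of the space of restrictions to $S$ of polynomials of degree at most $d$. For an integer $M\ge0$ write $\beta_p(d,M)$ for the number of $m\in\{0,1,\dots,M-1\}$ whose base-$p$ digit sum is at most $d$ (with $\beta_p(d,M)=0$ when $d<0$). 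Reading off base-$p$ digits is a bijection $\{0,\dots,p^r-1\}\to\{0,\dots,p-1\}^r$ that sends digit sum to coordinate sum, so $\beta_p(d,p^r)$ is the number of vectors in $\{0,\dots,p-1\}^r$ of coordinate sum at most $d$, i.e.\ $\beta_p(d,p^r)=|\monomial{p}{d}{r}|$. It therefore suffices to prove: \emph{for every $\sampledim\ge0$, every $d\ge0$ and every $S\subseteq\F_p^\sampledim$, one has $\dim\{g|_S : g\in\Pdm\}\ge\beta_p(d,|S|)$.} The base case $\sampledim\le1$ is a short Vandermonde computation.

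For the inductive step, partition $S$ by its last coordinate into the nonempty fibers $S_{a_1},\dots,S_{a_k}$ (so $k\le p$, since the last coordinate takes $p$ values), ordered so that $|S_{a_1}|\ge\cdots\ge|S_{a_k}|$; let $S_{a_i}'\subseteq\F_p^{\sampledim-1}$ be the image of $S_{a_i}$ under deleting the last coordinate. Applying rank--nullity to the successive projections onto $\prod_{j\le i}\F_p^{S_{a_j}}$ yields the telescoping identity
\[
\dim\{g|_S : g\in\Pdm\}=\sum_{i=1}^{k}\dim\bigl\{\,g|_{S_{a_i}} : g\in\Pdm,\ g\equiv0\text{ on }S_{a_1}\cup\cdots\cup S_{a_{i-1}}\,\bigr\}.
\]
To bound the $i$-th summand from below I restrict attention to polynomials of the special form $\prod_{j<i}(x_{\sampledim}-a_j)\cdot\tilde h$, where $\tilde h$ is a polynomial in $x_1,\dots,x_{\sampledim-1}$ of degree at most $d-i+1$: any such polynomial has degree at most $d$, vanishes identically on the hyperplanes $\{x_{\sampledim}=a_j\}$ for $j<i$, and restricts on $S_{a_i}$ to the nonzero scalar $\prod_{j<i}(a_i-a_j)$ times $\tilde h|_{S_{a_i}'}$. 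Hence the $i$-th summand is at least $\dim\{\tilde h|_{S_{a_i}'}:\deg\tilde h\le d-i+1\}$, which by the inductive hypothesis applied in $\sampledim-1$ variables is at least $\beta_p(d-i+1,|S_{a_i}|)$. Summing over $i$ gives $\dim\{g|_S : g\in\Pdm\}\ge\sum_{i=1}^{k}\beta_p(d-i+1,|S_{a_i}|)$.

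The proof is then reduced to the purely numerical inequality
\[
\sum_{i=1}^{k}\beta_p(d-i+1,m_i)\ \ge\ \beta_p\!\Bigl(d,\ \textstyle\sum_{i=1}^{k}m_i\Bigr)\qquad\text{whenever }m_1\ge m_2\ge\cdots\ge m_k\ge1\text{ and }k\le p,
\]
which, combined with the previous paragraph, closes the induction; taking $|S|=p^r$ then gives the lemma. I expect this inequality to be the main obstacle. I would prove it by induction on $k$: by the inductive hypothesis the contribution of the parts $m_2\ge\cdots\ge m_k$ (now carrying degrees $d-1,\dots,d-k+1$) is at least $\beta_p(d-1,m_2+\cdots+m_k)$, so it is enough to establish the two-term estimate
\[
\beta_p(d,a+b)\ \le\ \beta_p(d,a)+\beta_p(d-1,b)\qquad\text{whenever }0\le b\le(p-1)a,
\]
applied with $a=m_1$ and $b=m_2+\cdots+m_k$ (the hypothesis $b\le(p-1)a$ holds because $m_1$ is the largest of at most $p$ parts). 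This estimate says that the window $[a,a+b)$ contains no more integers of digit sum $\le d$ than the window $[0,b)$ contains of digit sum $\le d-1$; I would prove it by induction on the number of base-$p$ digits, splitting off the least-significant digit of $a$, $b$ and $a+b$ and accounting for a possible carry, the hypothesis $b\le(p-1)a$ being exactly what makes the induction go through. (Alternatively, the numerical inequality above can be derived from the Clements--Lindström theorem after rephrasing it as the assertion that colex-initial order ideals of monomials minimize the number of monomials of degree at most $d$, but the self-contained slicing argument sketched here avoids that machinery.)
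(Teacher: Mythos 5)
Your geometric reduction is sound, and it is essentially the paper's reduction in different clothing: slicing $S$ along one coordinate, using the factor $\prod_{j<i}(x_{\sampledim}-a_j)$ to handle the fiber at degree $d-i+1$, and noting that your $\beta_p(d,m)$ coincides with the paper's extremal function $g_d(m)$ (the rank on the lexicographically minimal set of size $m$, which satisfies the same recursion). So you have correctly reduced the lemma to the paper's sub-additivity statement (Lemma~\ref{lem-main}): $\sum_{i=1}^{k}\beta_p(d-i+1,m_i)\ge\beta_p\bigl(d,\sum_i m_i\bigr)$ for at most $p$ decreasing parts. The gap is entirely in your proposed proof of that numerical inequality, which you yourself flag as the main obstacle.

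The two-term estimate you reduce to is false. Take $p=3$, $d=2$, $a=4$, $b=8$ (so $b=(p-1)a$): one has $\beta_3(2,12)=8$ (the integers $0,1,2,3,4,6,9,10$ have ternary digit sum at most $2$), while $\beta_3(2,4)+\beta_3(1,8)=4+3=7$. Equivalently, the window $[4,12)$ contains four integers of digit sum at most $2$ (namely $4,6,9,10$), but $[0,8)$ contains only three of digit sum at most $1$ (namely $0,1,3$). This instance arises in your induction from the legitimate input $m_1=m_2=m_3=4$: the true three-part inequality holds there with equality, $\beta_3(2,4)+\beta_3(1,4)+\beta_3(0,4)=8=\beta_3(2,12)$, but your first inductive step replaces $\beta_3(1,4)+\beta_3(0,4)=4$ by $\beta_3(1,8)=3$, and that loss cannot be recovered by any two-term statement afterwards. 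In other words, the staircase of degrees $d,d-1,\ldots,d-k+1$ over all $k\le p$ parts cannot be collapsed to a single part at degree $d-1$; this is exactly where the odd-prime case is harder than $\F_2$ (where two parts do suffice, as in Keevash--Sudakov and Ben-Eliezer--Hod--Lovett), and it is why the paper proves the full $p$-term sub-additivity of $g_d$ directly, by induction over a total order on part-vectors with the singularization/transposing/repacking case analysis of Section~\ref{sec:sub-add}. Your parenthetical mention of Clements--Lindstr\"om is not developed, so as written the key inequality remains unproven and the proposed route to it fails.
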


Though this is all that we require to prove Theorem~\ref{lem:main-bias}, we prove 
it as a special case of a more general theorem that gives an exact extremal
characterization of the dimension of the span of truncated Reed-Muller codes of
all sizes.  This generalizes a characterization for the case
of $\mathbb{F}_2$ proved by Keevash and Sudakov~\cite{keevash2005set}.

\begin{theorem}
\label{thm:main-extremal}
Let $1\le m\le p^r$ and let $\sampledim\ge r$. 
For $S\subseteq \F_p^\sampledim$ with $|S|=m$,
$$\dim\langle\{(q(x))_{q\in \monomial{p}{d}{\sampledim}}\ :\  x\in S\}\rangle \ge
\dim\langle\{(q(x))_{q\in\monomial{p}{d}{r}}\ :\ x\in T\}\rangle,$$
where $T$ consists of the $m$ lexicographically minimal vectors in $\F_p^r$.  
(This is equality when $S$ is also lexicographically minimal.)
\end{theorem}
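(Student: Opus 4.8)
The plan is to reduce the dimension in the statement to the $\F_p$-rank of a generalized inclusion matrix, and then to show that this rank cannot increase as $S$ is transported by local moves toward the lexicographically minimal set, which lies in the last $r$ coordinates and restricts there to $T$. For the reduction, replace $\monomial{p}{d}{n}$ by the falling-factorial basis $\{\,\prod_{i=1}^{n}\prod_{c=0}^{a_i-1}(x_i-c)\ :\ 0\le a_i\le p-1,\ \textstyle\sum_i a_i\le d\,\}$: the polynomial indexed by $a$ has leading monomial $\prod_i x_i^{a_i}$, so these polynomials span the same space of restrictions to $S$, and $\dim\langle\{(q(x))_{q\in\monomial{p}{d}{n}}:x\in S\}\rangle$ equals the $\F_p$-rank of the matrix $W_S$ with rows indexed by such $a$, columns by $x\in S$, and $(a,x)$-entry $\prod_i x_i(x_i-1)\cdots(x_i-a_i+1)$. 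This entry vanishes unless $a\le x$ coordinatewise, so $W_S$ is a generalized inclusion matrix for the poset $\{0,\dots,p-1\}^n$ restricted to its degree-$\le d$ elements; over $\F_2$ all nonzero entries equal $1$ and $W_S$ is exactly the inclusion matrix analyzed by Keevash and Sudakov. Dually, a degree-compatible Gr\"obner basis computation shows this rank equals the number of standard monomials of the vanishing ideal $I(S)$ of degree $\le d$, and the standard monomials always form a $p$-bounded order ideal of size $m$; thus the theorem becomes the combinatorial claim that among $p$-bounded order ideals of size $m$ the count of degree-$\le d$ elements is minimized, for all $d$ at once, by the standard-monomial ideal of $T$.

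To prove it I would move $S$ toward the lexicographically minimal set by local moves that do not increase $\mathrm{rank}(W_S)$, the basic one being ``push down in coordinate $i$'': on each axis-parallel line $\ell$ in direction $i$, replace $S\cap\ell$ by the initial segment of $\ell$ of the same size. Rank-monotonicity of this move I would establish slab by slab: let $S_t=\{x\in S:x_i=t\}$, project out coordinate $i$ to get $S_t'\subseteq\F_p^{n-1}$, and expand any degree-$\le d$ polynomial as $f=\sum_{j\le\min(d,p-1)}x_i^j g_j$ with $\deg g_j\le d-j$; on $S_t$ the function $f$ agrees with $\sum_j t^j g_j$, which does not involve $x_i$. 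Since the matrix $(t^j)_{t,j}$ is Vandermonde, the restriction space for $S$ is an invertible reparametrization of the tuple $(g_j)_j$, and comparing it with the analogous space for the pushed-down set --- whose projected slabs are now nested --- one shows that every column dependency of the pushed-down $W$ pulls back to one of $W_S$. A set stable under all such moves need not be lexicographically minimal, so, as in Kruskal--Katona-type arguments, one also needs coordinate permutations (which preserve rank exactly) and a final rank-non-increasing exchange step, after which the set is forced to equal the lexicographically minimal one. Because the lexicographically minimal set of size $m\le p^r$ in $\F_p^n$ has its first $n-r$ coordinates identically $0$, every monomial involving those coordinates vanishes on it and it restricts, in the last $r$ coordinates, to $T$ with monomial set $\monomial{p}{d}{r}$; this gives the inequality, and for $S$ already lexicographically minimal it gives the stated equality. (The value for $T$ then unwinds recursively: with $m=a\,p^{r-1}+m'$, $0\le a\le p-1$, $0\le m'<p^{r-1}$, the first $a$ one-coordinate slabs of $T$ are all of $\F_p^{r-1}$ and the next is the lex-minimal $m'$-set, which recurses.)

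The main obstacle is the rank-monotonicity of these moves over odd $p$. Over $\F_2$, $W_S$ is a $0/1$ matrix and one tracks column dependencies purely combinatorially, as Keevash and Sudakov do; over $\F_p$ the nonzero entries $\prod_i x_i(x_i-1)\cdots(x_i-a_i+1)$ are genuine nontrivial field elements, and a short computation shows no rescaling of rows and columns converts $W_S$ into a $0/1$ matrix, so the dependency-pullback must be done while carrying these coefficients. Making that work --- and checking that the chosen moves fix $T$ and realize the recursion above --- is the technical core and the reason the result is a substantial extension of the $\F_2$ case rather than a routine one.
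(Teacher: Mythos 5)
Your route (falling-factorial/inclusion-matrix reformulation, standard monomials of $I(S)$, and Kruskal--Katona-style push-down compressions) is genuinely different from the paper's, which proceeds by induction on $|S|$: split $S$ by the first non-constant coordinate, block-triangularize the generator matrix (Algorithm~\ref{algorithm-elimination}, Claim~\ref{lem-elimination-new}) to get $rank(\generatingmatrix{d}_S)\ge\sum_{i}rank(M^{(d-i)}_{S_{\sigma_i}})$, and then invoke the sub-additivity $g_d(\sum_i a_i)\le\sum_i g_{d-i}(a_i)$ of Lemma~\ref{lem-main}, whose proof (singularization, transposing, repacking) is the technical bulk. But as written your proposal has a genuine gap at exactly the load-bearing points. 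First, rank-monotonicity of the push-down move is asserted, not proved, and the slab/Vandermonde sketch does not deliver it: after compressing in coordinate $i$, the slab projections of the compressed set are the nested sets $\tilde S'_t=\{u: c(u)>t\}$, which are in general unrelated to the original slab projections $S'_j$ (even their sizes differ), so ``every column dependency of the pushed-down $W$ pulls back'' is precisely the cross-slab comparison that needs an argument; making it work forces you into an inequality of the shape $\sum_t \dim P_{\le d-t}|_{S'_{\sigma(t)}}\ \ge\ \sum_t \dim P_{\le d-t}|_{\tilde S'_t}$, i.e.\ something of the same nature as Lemma~\ref{lem-main}, which you do not supply (and you acknowledge you do not know how to carry the nontrivial coefficients over odd $p$).

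Second, the endgame is unsupported: sets stable under all coordinate push-downs are exactly the downsets in the product order on $\{0,\dots,p-1\}^{\sampledim}$, a large family, and no permutation plus a single ``rank-non-increasing exchange step'' forces such a set to be the lexicographically minimal one. Comparing an arbitrary stable configuration of size $m$ against the lex-minimal set is the entire extremal content of the theorem; in the paper this is again Lemma~\ref{lem-main}. The same issue recurs in your Gr\"obner reformulation: the claim that among all $p$-bounded order ideals of size $m$ the number of degree-$\le d$ elements is minimized by the standard-monomial ideal of $T$ is a Clements--Lindstr\"om-type statement of comparable depth to the sub-additivity lemma, and it is stated, not proved (nor cited in a form that covers it). So the proposal is a plausible program with two alternative reductions, but both terminate exactly where the real work begins, and neither the compression-monotonicity nor the extremal comparison to $T$ is established.
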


Thus, the extremal value of the dimension is a function $g_d(m)$ that is 
independent of $\sampledim$.  As part of the proof of 
Theorem~\ref{thm:main-extremal}, we characterize a variety of properties 
of $g_d(m)$.

\paragraph{Proof Overview}

Our basic approach is a generalization of the high level outline of 
\cite{DBLP:journals/cc/Ben-EliezerHL12} to odd prime fields, though parts of the
argument are substantially more complex:

We begin by using a moment method, showing that
that $\E_{f\in_R \Pdm}[|\biasj(f)|^t]$ is bounded for suitable $t$.
Because we are dealing with odd prime fields rather than $\F_2$ we restrict
ourselves to the case that $t$ is even.
For bounding these high moments,
we reduce the problem to lower bounding the rank of certain random matrices (Lemma~\ref{lem:rank}).
This is the place where we can apply Lemma~\ref{prop:evaluation-rank} to prove the bound.

\begin{sloppypar}
For the case of $\F_2$ handled in~\cite{DBLP:journals/cc/Ben-EliezerHL12}, 
a similar property to
Lemma~\ref{lem:rank} (Lemma 4 in \cite{DBLP:journals/cc/Ben-EliezerHL12}), 
which follows from an extremal characterization of $\F_2$
polynomial evaluations by Keevash and Sudakov~\cite{keevash2005set}, was independently
shown to follow more simply via an 
algorithmic construction that avoids 
consideration of any subset size that is not a power of 2.
Unfortunately, this simpler algorithmic construction seems to 
break down completely for the case of odd prime fields.
\end{sloppypar}

We instead provide the full extremal characterization for all set sizes, analogous
to the Keevash and Sudakov characterization for $\F_2$.  This is the major source
of technical difficulty in our paper.  Like Keevash and Sudakov, we show that the
proof of our extremal characterization is equivalent to proving the sub-additivity
of a certain arithmetic function.  However, proving this sub-additivity property
is an order of magnitude more involved since it involves sub-additivity over $p$
terms for arbitrary $p$ rather than just over the two terms required for the case
of $\F_2$. 

\paragraph{Discussion and Related Work}

Prior to our work, the main approach to analyzing 
the bias of polynomials over arbitrary prime fields
has been to take a structural point of view.  
The general idea is to show that 
polynomials of large bias must have this bias because
of some structural property.
For polynomials of degree $d=2$, a complete structural
characterization has
been known for more than a century (\cite{dickson:book}).
Green and Tao~\cite{green2007distribution} initiated the 
modern study of the relationship between the bias and the 
structure of polynomials over finite fields.
Kaufman, Lovett, and Porat~\cite{DBLP:journals/tit/KaufmanLP12} used this approach to obtain their bounds on bias over $\mathbb{F}_2$.
Over general prime fields,
Haramaty and Shpilka~\cite{haramaty2010structure} gave 
sharper structural properties for polynomials of degrees $d=3,4$.
In papers~\cite{DBLP:journals/tit/BhowmickL18} for constant degree and \cite{DBLP:journals/corr/0001L15} for
large degree, Bhowmick and Lovett generalized the result 
of \cite{DBLP:journals/tit/KaufmanLP12} to show that if a 
degree $d$ polynomial $f$ has
large bias, then $f$ can be expressed as a function of a
constant number of polynomials of degree at most $d-1$.
These bounds are sufficient to analyze the list-decoding 
properties of Reed-Muller codes.
However, all of these structural results, except for
the characterization of degree 2 polynomials, are too 
weak to obtain the bounds on sub-constant bias that we
derive.   Indeed, none is sufficient even to derive
Corollary~\ref{cor:weight}.

An open problem that remains from our work, as well as
that of Ben-Eliezer, Hod, and
Lovett~\cite{DBLP:journals/cc/Ben-EliezerHL12} is whether
the amount of the bias can be improved still further by 
removing the $1/d$ factor from the exponent in the bias
in the statement of Theorem~\ref{lem:main-bias} for some
range of values of $d$ growing with $\sampledim$.   
Though Proposition~\ref{prop:lower-bias} (and its 
analogue in~\cite{DBLP:journals/cc/Ben-EliezerHL12})
show that a
large number of polynomials have bias 
$p^{-O(\sampledim/d)}$, we would need to extend them to
say that for \emph{all} $c'>0$ there is a $c''>0$ such 
that the conclusion of the proposition holds in order
to rule out improving the bias in Theorem~\ref{lem:main-bias}.

\paragraph{Organization}
The proof of Theorem \ref{lem:main-bias},
except for the proof of Lemma \ref{prop:evaluation-rank},
is in Section \ref{sec:random-bias}.   
Section~\ref{sec:random-bias} also contains the proof of Proposition~\ref{prop:lower-bias}.
In Section \ref{sec:reed-muller} we reduce the proof of 
Lemma~\ref{prop:evaluation-rank}, and that of the general
extremal rank property of 
Theorem~\ref{thm:main-extremal}, to
proving the sub-additivity of the
arithmetic function $g_d$.
In Section \ref{sec:g-d-property} we introduce some properties of $g_d$,
and finally in Section~\ref{sec:sub-add} we prove the sub-additivity of $g_d$.

\section{The bias of random polynomials over odd prime fields}
\label{sec:random-bias}

In this section we prove Theorem \ref{lem:main-bias}.
To provide tail bounds on the bias, we first characterize its high moments,
focusing on even moments to ensure that they are real-valued.

\begin{lemma}\label{lem:expectation-dual}
Let $p$ be an odd prime and $d\le \sampledim$.
For $t\in \mathbb{N}$,
let $x\ind{1},\cdots,x\ind{t}$ and $y\ind{1},\cdots,y\ind{t}$ be chosen
uniformly at random from $\mathbb{F}_p^\sampledim$.
Then 
\[
\E_{f\in_R \Pdm}[\;|\biasj(f)|^{2t}\;]=\Pr_{x\ind{1},\cdots,x\ind{t},y\ind{1},\cdots,y\ind{t}} [\;\forall q\in \monomial{p}{d}{\sampledim},\ \ \sum_{k=1}^t q(x\ind{k})=\sum_{k=1}^t q(y\ind{k})\;].
\]
\end{lemma}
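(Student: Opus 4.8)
The plan is to expand $|\biasj(f)|^{2t}$ into a sum of additive characters of $\F_p$ evaluated at the random points $x\ind{1},\dots,x\ind{t},y\ind{1},\dots,y\ind{t}$, interchange this finite sum with the expectation over $f$, and then evaluate the inner expectation using orthogonality of characters. Concretely, since $|z|^{2t}=z^{t}\,\overline{z^{\,t}}$ and $\overline{\omega^{j f(y)}}=\omega^{-j f(y)}$, the definition $\biasj(f)=p^{-\sampledim}\sum_{x\in\F_p^\sampledim}\omega^{j f(x)}$ gives
\[
|\biasj(f)|^{2t}=\frac{1}{p^{2\sampledim t}}\sum_{x\ind{1},\dots,x\ind{t},y\ind{1},\dots,y\ind{t}\in\F_p^\sampledim}\omega^{\,j\left(\sum_{k=1}^{t}f(x\ind{k})-\sum_{k=1}^{t}f(y\ind{k})\right)}.
\]

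Next I would push $\E_{f\in_R\Pdm}$ inside the sum. A uniformly random $f\in\Pdm$ can be written as $f=\sum_{q\in\monomial{p}{d}{\sampledim}}a_q\,q$ with the coefficients $(a_q)_{q\in\monomial{p}{d}{\sampledim}}$ independent and uniform over $\F_p$; this uses that $\Pdm$ is an $|\monomial{p}{d}{\sampledim}|$-dimensional $\F_p$-space of functions with the monic monomials of degree at most $d$ as a basis (so $|\Pdm|=p^{|\monomial{p}{d}{\sampledim}|}$), which is recalled in the introduction. Writing $c_q:=\sum_{k=1}^{t}q(x\ind{k})-\sum_{k=1}^{t}q(y\ind{k})\in\F_p$, the exponent in the display equals $j\sum_{q}a_q c_q$, so by independence of the $a_q$,
\[
\E_{f}\!\left[\omega^{\,j\left(\sum_{k}f(x\ind{k})-\sum_{k}f(y\ind{k})\right)}\right]=\prod_{q\in\monomial{p}{d}{\sampledim}}\E_{a_q\in_R\F_p}\!\left[\omega^{\,j a_q c_q}\right].
\]
Since $j\ne 0$ in $\F_p$, each factor $\E_{a\in_R\F_p}[\omega^{j a c_q}]$ equals $1$ if $c_q=0$ and equals $0$ otherwise (it is $\tfrac1p$ times a full sum of $p$-th roots of unity when $c_q\neq0$). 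Hence the product is $1$ precisely when $c_q=0$ for every $q\in\monomial{p}{d}{\sampledim}$, i.e.\ when $\sum_{k}q(x\ind{k})=\sum_{k}q(y\ind{k})$ for all such $q$, and is $0$ otherwise. Substituting back, $\E_f[|\biasj(f)|^{2t}]$ equals $p^{-2\sampledim t}$ times the number of tuples $(x\ind{1},\dots,x\ind{t},y\ind{1},\dots,y\ind{t})\in(\F_p^\sampledim)^{2t}$ satisfying all these equalities; since there are $p^{2\sampledim t}$ tuples in all, this is exactly the stated probability.

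This is essentially a routine character-sum computation, so there is no deep obstacle. The only points requiring a little care are the parametrization of a uniformly random degree-$\le d$ polynomial by independent uniform coefficients on the monomial basis (equivalently, that the listed monic monomials give linearly independent functions on $\F_p^\sampledim$, which is what underlies $|\Pdm|=p^{|\monomial{p}{d}{\sampledim}|}$), and keeping the bookkeeping of powers of $\omega$ consistent with arithmetic modulo $p$ when applying character orthogonality.
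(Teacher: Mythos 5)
Your proposal is correct and follows essentially the same route as the paper: expand $|\biasj(f)|^{2t}$ via $\overline{\omega^{jf(y)}}=\omega^{-jf(y)}$ into a character sum over the $2t$ points, swap with the expectation over $f$, parametrize a uniform $f\in\Pdm$ by independent uniform coefficients on the monomial basis, and apply orthogonality of the nontrivial character $a\mapsto\omega^{jab}$ to obtain the indicator of the event, hence the stated probability. No discrepancies worth noting.
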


\begin{proof}
Note that $\overline{\biasj(f)}=\overline{\E_{x}[\ur^{j\cdot f(x)}]}=\E_{x}[\overline{\ur^{j\cdot f(x)}}]=\E_{x}[\ur^{-j\cdot f(x)}]=\negbiasj(f)$,
therefore $|\biasj(f)|^2=\biasj(f)\cdot \negbiasj(f)$.
So we have 
\begin{align*}
\E_{f\in_R \Pdm}[|\biasj(f)|^{2t}]&=\E_{f\in_R \Pdm}[\biasj(f)^t\cdot \negbiasj(f)^t]\\
&=\E_{f\in_R \Pdm}[\prod_{k=1}^t\E_{x\ind{k}}[\ur^{j\cdot f(x\ind{k})}]\cdot \prod_{k=1}^t\E_{y\ind{k}}[\ur^{-j\cdot f(y\ind{k})}]]\\
&=\E_{f\in_R \Pdm}[\E_{x\ind{1},\cdots,x\ind{t},y\ind{1},\cdots,y\ind{t}}[\ur^{j\cdot(\sum_{k=1}^tf(x\ind{k})-\sum_{k=1}^tf(y\ind{k}))}]]\\
&=\E_{x\ind{1},\cdots,x\ind{t},y\ind{1},\cdots,y\ind{t}}[\E_{f\in_R \Pdm}[\ur^{j\cdot(\sum_{k=1}^tf(x\ind{k})-\sum_{k=1}^tf(y\ind{k}))}]]\\
\end{align*}
For each $q\in\monomial{p}{d}{\sampledim}$ let $f_q\in \mathbb{F}_p$ denote the
coefficient of $q$ in $f$.
We identify $f$ with its vector of coefficients $(f_q)_{q\in \monomial{p}{d}{\sampledim}}$ and choose $f$ uniformly by choosing the $f_q$ uniformly.
Therefore
\begin{align*}
\E_{f\in_R \Pdm}[|\biasj(f)|^{2t}]&=\E_{x\ind{1},\cdots,x\ind{t},y\ind{1},\cdots,y\ind{t}}[\E_{f\in_R \Pdm}[\ur^{j\cdot(\sum_{q\in \monomial{p}{d}{\sampledim}}f_q\cdot (\sum_{k=1}^t q(x\ind{k})-\sum_{k=1}^t q(y\ind{k})))}]]\\
&=\E_{x\ind{1},\cdots,x\ind{t},y\ind{1},\cdots,y\ind{t}}[\prod_{q\in \monomial{p}{d}{\sampledim}}\E_{f_q\in_R \mathbb{F}_p}[\ur^{j\cdot f_q\cdot  (\sum_{k=1}^t q(x\ind{k})-\sum_{k=1}^t q(y\ind{k}))}]]\\
&=\E_{x\ind{1},\cdots,x\ind{t},y\ind{1},\cdots,y\ind{t}}[\mathbf{1}_{(\forall q\in \monomial{p}{d}{\sampledim},\ \sum_{k=1}^t q(x\ind{k})-\sum_{k=1}^t q(y\ind{k})=0)}]\\
&=\Pr_{x\ind{1},\cdots,x\ind{t},y\ind{1},\cdots,y\ind{t}}[\forall q\in \monomial{p}{d}{\sampledim},\ \sum_{k=1}^t q(x\ind{k})=\sum_{k=1}^t q(y\ind{k})]
\end{align*}
where the second equality follows since
$\E_{a\in_R \mathbb{F}_p}[\omega^{j\cdot a\cdot b}]=0$ for all
$b\in \mathbb{F}^*_p$.
\end{proof}

Now let us look at the probability $$\Pr_{x\ind{1},\cdots,x\ind{t},y\ind{1},\cdots,y\ind{t}}[\forall q\in \monomial{p}{d}{\sampledim},\ \sum_{k=1}^t q(x\ind{k})=\sum_{k=1}^t q(y\ind{k})].$$
We view $y\ind{1},\cdots,y\ind{t}$ as arbitrary fixed values 
and we will upper bound this probability
following the analysis of a similar probability in
\cite{DBLP:journals/cc/Ben-EliezerHL12}.
That is, we will upper bound the probability that this holds by considering a
special subset $\mathcal{M}'\subseteq \monomial{p}{d}{\sampledim}$ that allows us
to derive a linear system whose rank will bound the probability that the
constraints indexed by $\mathcal{M}'$ all hold.

We divide $[\sampledim]$ arbitrarily into two disjoint parts $L$ and $R$ with $|L|=\floor{\frac{\sampledim}{d}}$.
$\mathcal{M}'\subseteq \monomial{p}{d}{\sampledim}$
consists of all monomials of degree are most $d$ that
have degree 1 on $L$ and degree at most $d-1$ on $R$.

We use the following properties of the $|\monomial{p}{d}{\sampledim}|$, whose proof we
defer to later, to show that $\mathcal{M}'$ contains a significant
fraction of all monomials in $\monomial{p}{d}{\sampledim}$.

\begin{proposition} \label{prop:d-vs-d-1}
If $d\leq \delta \sampledim$ for some $0<\delta < 1$ then
\begin{itemize}
\item[(a)] 
there exists a constant $\gamma'=\gamma'(\delta)>0$ such that for
sufficiently large $\sampledim$, 
if $\sampledim'\geq (1-\frac 1 d) \sampledim$ then
\[
|\monomial{p}{d}{\sampledim'}|\geq \gamma' |\monomial{p}{d}{\sampledim}|.
\]
\item[(b)] If $p\ge 3$ there exist constants $\rho_1,\rho_2>0$ such that for
sufficiently large $\sampledim$,
\[
\rho_1|\monomial{p}{d}{\sampledim}|\leq \frac{\sampledim}{d}\cdot |\monomial{p}{d-1}{\sampledim}|\leq \rho_2|\monomial{p}{d}{\sampledim}|.
\]
\end{itemize}
\end{proposition}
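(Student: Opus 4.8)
The plan is to reduce both parts to two–sided estimates on the ratios of the exact‑degree counts. Write $E(k,n)$ for the number of monic monomials of degree exactly $k$ in $n$ variables over $\F_p$ (each exponent in $\{0,\dots,p-1\}$), so that $|\monomial{p}{d}{n}|=\sum_{k=0}^{d}E(k,n)$; note also that if $d<p$ the exponent bound never binds and $E(k,n)=\binom{n+k-1}{k}$, so we may assume $p\le d$. The heart of the argument is the claim that, for $d\le\delta n$ with $\delta$ bounded away from $1$ and $n$ large, and every $1\le k\le d$,
\[
\tfrac nk-\tfrac12 \ \le\ \frac{E(k,n)}{E(k-1,n)}\ \le\ \frac{C(\delta)\,n}{k}
\]
for a constant $C(\delta)$ depending only on $\delta$ (and, crucially, not on $p$). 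Everything else is bookkeeping on top of this.

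For the lower bound I would count the pairs $(m',i)$ with $\deg m'=k$ and $i\in\supp(m')$ in two ways: grouping by $m'$ gives $\sum_{m':\deg m'=k}|\supp(m')|\le k\,E(k,n)$, while grouping by the image of $(m',i)\mapsto m'/x_i$ gives $\sum_{m:\deg m=k-1}|\{i:\text{the }x_i\text{-exponent of }m\text{ is }\le p-2\}|\ge (n-\tfrac k2)E(k-1,n)$, since a degree‑$(k-1)$ monomial has at most $(k-1)/(p-1)\le (k-1)/2$ variables at the maximal exponent $p-1$. Reading the same double count in the other direction (bounding $|\supp(m')|$ below by $k/(p-1)$) only yields $E(k,n)\le\frac{(p-1)n}{k}E(k-1,n)$, with a spurious factor $p-1$ that must be removed; this is the real work. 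For the $p$‑uniform upper bound I would differentiate the generating function $\sum_k E(k,n)z^k=(1+z+\cdots+z^{p-1})^n$ to obtain the identity $k\,E(k,n)=n\sum_{a=1}^{p-1}a\,E(k-a,n-1)$, and then feed in the geometric decay $E(k-a,n-1)\le R(\delta)^{-(a-1)}E(k-1,n-1)$ that the already‑proved lower bound gives (with $R(\delta)=\tfrac1\delta-\tfrac12>1$, which is where a hypothesis such as $\delta<\tfrac23$ is used), so that the sum over $a$ becomes a convergent geometric series and $k\,E(k,n)\le C(\delta)\,n\,E(k-1,n-1)\le C(\delta)\,n\,E(k-1,n)$.

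Given the ratio estimate, part (b) is quick: the lower bound forces $E(k,n)/E(k-1,n)\ge R(\delta)>1$, so $\sum_{k\le d}E(k,n)$ is dominated by its last term and $|\monomial{p}{d}{n}|=\Theta_\delta(E(d,n))$ (likewise for $d-1$); hence $\tfrac nd|\monomial{p}{d-1}{n}|=\Theta_\delta(\tfrac nd E(d-1,n))=\Theta_\delta(E(d,n))=\Theta_\delta(|\monomial{p}{d}{n}|)$, with constants depending only on $\delta$. For part (a) I would go from $n'$ variables up to $n$ one variable at a time: sorting by the exponent of the new variable gives $|\monomial{p}{d}{m+1}|-|\monomial{p}{d}{m}|=\sum_{1\le a\le p-1}|\monomial{p}{d-a}{m}|$, which by $|\monomial{p}{j}{m}|=\Theta_\delta(E(j,m))$ is at most $C(\delta)\,|\monomial{p}{d-1}{m}|$, and part (b) bounds $|\monomial{p}{d-1}{m}|\le O_\delta(\tfrac dm)\,|\monomial{p}{d}{m}|$. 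Thus $|\monomial{p}{d}{m+1}|\le(1+O_\delta(d/m))|\monomial{p}{d}{m}|$, and multiplying this over the $n-n'\le n/d$ steps (with $m\ge n'=(1-\tfrac1d)n=\Omega(n)$ throughout) gives $|\monomial{p}{d}{n}|/|\monomial{p}{d}{n'}|\le\exp(O_\delta(1))$, i.e. $\gamma'(\delta)=e^{-O_\delta(1)}$.

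The main obstacle is the $p$‑uniform upper bound $E(k,n)\le O_\delta(\tfrac nk)E(k-1,n)$: informally it asserts that a uniformly random monomial of degree at most $\delta n$ keeps a constant fraction of its $n$ variables at exponent $0$, no matter how large $p$ is, whereas the obvious support argument loses exactly the factor $p-1$ that has to be eliminated — the generating‑function identity together with the geometric‑decay bootstrap above seems to be what is required. A secondary point is that these clean estimates need $\delta$ bounded below $1$ (something like $\delta<\tfrac23$); for $\delta$ very close to $1$, where $d=\Theta(n)$ and $\tfrac nd=\Theta(1)$, part (b) should instead follow directly from the fact that $E(d,n)$ is then a bounded fraction of $|\monomial{p}{d}{n}|$, and part (a) from $|\monomial{p}{d}{n'}|=\Theta(|\monomial{p}{d}{n}|)$ when $n-n'\le n/d$, which I would treat as a separate easy case.
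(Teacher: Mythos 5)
Your route is genuinely different from the paper's, and in the regime that actually matters it is sound. Writing $E(k,\sampledim)$ for the number of exact-degree-$k$ monomials, both of your ingredients check out: the double count of pairs $(m',i)$ with $i\in\supp(m')$ gives $E(k,\sampledim)\ge(\tfrac{\sampledim}{k}-\tfrac12)E(k-1,\sampledim)$ for $p\ge3$ (this is essentially the same count the paper performs once, with cumulative rather than exact degrees, via its set $H$ of pairs $(q_1,x_iq_1)$), and the derivative identity $kE(k,\sampledim)=\sampledim\sum_{a=1}^{p-1}aE(k-a,\sampledim-1)$ combined with the geometric decay at rate $R(\delta)=\tfrac1\delta-\tfrac12>1$ does deliver the $p$-uniform upper bound $E(k,\sampledim)\le C(\delta)\tfrac{\sampledim}{k}E(k-1,\sampledim)$. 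That upper bound is exactly the step the paper instead obtains by showing that a constant fraction of monomials of degree at most $d$ use at least $d/6$ distinct variables. Given the two-sided ratio bound, your bookkeeping for (b) (the top-degree term dominates the cumulative count) and for (a) (add one variable at a time, using the identity $|\monomial{p}{d}{m+1}|-|\monomial{p}{d}{m}|=\sum_{a=1}^{p-1}|\monomial{p}{d-a}{m}|$, and telescope over at most $\sampledim/d$ steps) is correct. The paper's proof of (a) is more direct: it stratifies monomials by their exponent pattern $\mathbf{e}$ and reduces to the monotone ratio $\binom{\sampledim'}{k}/\binom{\sampledim}{k}$, which works for every $\delta<1$ and does not need (b) as a subroutine; your version buys a unified mechanism for both parts at the price of a stronger restriction on $\delta$.

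That restriction is the one genuine soft spot. Your dispatch of the case $\delta$ near $1$ (where $R(\delta)\le1$ and the bootstrap dies) is not a proof: for part (b), the assertion that $E(d,\sampledim)$ is a fraction of $|\monomial{p}{d}{\sampledim}|$ bounded away from $1$ when $\sampledim/d=\Theta(1)$ is, uniformly in $p$, precisely the ratio upper bound you yourself flag as the main obstacle (the naive support count again loses the factor $p-1$), and for part (a) the proposed fallback simply restates the claim to be proved. So as written your argument establishes the proposition only for $\delta$ bounded away from $2/3$. In fairness, the paper's own proof of (b) quietly assumes $d\le\sampledim/2$ (its estimate $\tfrac{k+d}{\sampledim-k}\le\tfrac{7}{11}$ for $k\le d/6$ needs it), and every application in the paper has $\delta\le1/2$, so your argument covers everything that is actually used; but to get the proposition for all $0<\delta<1$ you would need either the paper's stratification for (a) together with a new idea for (b) at large $\delta$ (e.g.\ exploiting log-concavity and symmetry of $k\mapsto E(k,\sampledim)$), or an explicit restriction on $\delta$ in the statement.
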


\begin{corollary}
\label{prop:mprime-bound}
Let $p\ge 3$. If $d\leq \delta \sampledim$ for some $0<\delta < 1$,
then there exists a constant $\gamma=\gamma(\delta)>0$  such that for sufficiently
large $\sampledim$,
\[
|\mathcal{M}'|=
\floor{\frac{\sampledim}{d}}\cdot |\monomial{p}{d-1}{\sampledim-\floor{\frac{\sampledim}{d}}}|
\ge
\gamma\cdot |\monomial{p}{d}{\sampledim}|.
\]
\end{corollary}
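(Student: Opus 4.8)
The plan is to derive Corollary~\ref{prop:mprime-bound} directly by combining the two parts of Proposition~\ref{prop:d-vs-d-1}. The equality $|\mathcal{M}'|=\floor{\sampledim/d}\cdot|\monomial{p}{d-1}{\sampledim-\floor{\sampledim/d}}|$ is just a counting identity: by construction, a monomial in $\mathcal{M}'$ is specified by choosing one of the $|L|=\floor{\sampledim/d}$ variables in $L$ to appear (to degree exactly $1$), together with an arbitrary monic monomial of degree at most $d-1$ in the $|R|=\sampledim-\floor{\sampledim/d}$ variables of $R$; these choices are independent, so the count multiplies. (Here I am using that $p\ge 3$ so that a variable appearing "to degree $1$" is a genuine restriction rather than vacuous, matching the definition of $\mathcal{M}'$.)

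**Next I would** bound this product from below. First set $\sampledim'=\sampledim-\floor{\sampledim/d}\ge (1-1/d)\sampledim$, so part~(a) of Proposition~\ref{prop:d-vs-d-1} applies with $d-1$ in place of $d$ — note $d-1\le d\le\delta\sampledim$, so the hypothesis still holds — giving a constant $\gamma'=\gamma'(\delta)>0$ with
\[
|\monomial{p}{d-1}{\sampledim'}|\ \ge\ \gamma'\,|\monomial{p}{d-1}{\sampledim}|
\]
for sufficiently large $\sampledim$. Then apply the left-hand inequality of part~(b) (for degree $d$, which requires $p\ge3$), namely $\rho_1|\monomial{p}{d}{\sampledim}|\le(\sampledim/d)\cdot|\monomial{p}{d-1}{\sampledim}|$, i.e. $(\sampledim/d)\cdot|\monomial{p}{d-1}{\sampledim}|\ge\rho_1|\monomial{p}{d}{\sampledim}|$. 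Finally, since $\floor{\sampledim/d}\ge \sampledim/d - 1 \ge \tfrac{1}{2}\cdot\sampledim/d$ for $\sampledim$ large enough (using $d\le\delta\sampledim<\sampledim$, so $\sampledim/d\ge 2$ eventually), chaining the three estimates yields
\[
|\mathcal{M}'|\ \ge\ \tfrac{1}{2}\tfrac{\sampledim}{d}\cdot\gamma'\,|\monomial{p}{d-1}{\sampledim}|\ \ge\ \tfrac{\gamma'}{2}\cdot\rho_1\,|\monomial{p}{d}{\sampledim}|,
\]
so $\gamma=\gamma'\rho_1/2$ works.

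**The one thing to be careful about** — and the closest this proof has to an obstacle — is the bookkeeping on which degree parameter each invoked inequality is being applied to: part~(a) needs to be invoked at degree $d-1$ (on the variable set $R$), while part~(b) is invoked at degree $d$ (relating $\monomial{p}{d-1}{\sampledim}$ to $\monomial{p}{d}{\sampledim}$ on the full variable set). One should also check the edge case $d=1$: then $\mathcal{M}'$ consists of degree-$1$ monomials supported on $L$, $|\mathcal{M}'|=\floor{\sampledim}=\sampledim$ (taking $|\monomial{p}{0}{\cdot}|=1$ for the empty monomial), and $|\monomial{p}{1}{\sampledim}|=\sampledim+1$, so the bound holds trivially with $\gamma$ close to $1$; for $d=1$ the invocation of part~(a) at degree $0$ is vacuous since $|\monomial{p}{0}{\sampledim'}|=1=|\monomial{p}{0}{\sampledim}|$. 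Otherwise everything is a routine concatenation of the quoted bounds, with the final constant depending only on $\delta$ (through $\gamma'$ and the absolute constant $\rho_1$), as required.
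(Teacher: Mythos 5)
Your proposal is correct and is essentially the paper's own proof: both start from the counting identity, bound $\floor{\frac{\sampledim}{d}}$ by roughly $\frac{\sampledim}{2d}$, and chain Proposition~\ref{prop:d-vs-d-1}(a) and (b) to obtain $\gamma=\gamma'\rho_1/2$; the only difference is the order of application (you use (a) at degree $d-1$ to pass from $\sampledim'$ to $\sampledim$ variables and then (b) at degree $d$ on all $\sampledim$ variables, while the paper uses (b) at degree $d$ on the $\sampledim'$ variables and then (a) at degree $d$), which is an immaterial reordering. One small repair: your justification ``$\sampledim/d\ge 2$ eventually'' can fail when $\delta>1/2$ and $d=\Theta(\sampledim)$, but the inequality you actually need, $\floor{\frac{\sampledim}{d}}\ge \frac{\sampledim}{2d}$, holds for every $d\le\sampledim$ since $\lfloor x\rfloor\ge x/2$ whenever $x\ge 1$ (this is exactly how the paper argues).
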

 
\begin{proof}
The equality follows immediately from the definition of $\mathcal{M}'$.
Let $\sampledim'=\sampledim-\floor{\frac{\sampledim}{d}}$.  Then
\begin{align*}
|\mathcal{M}'|&=\floor{\frac{\sampledim}{d}}\cdot |\monomial{p}{d-1}{\sampledim'}|\\
&\ge \frac{\sampledim'}{2d} |\monomial{p}{d-1}{\sampledim'}|\qquad\mbox{since $d\le \sampledim$}\\
&\ge \frac{\rho_1}{2} |\monomial{p}{d}{\sampledim'}|\qquad\mbox{by Proposition~\ref{prop:d-vs-d-1}(b)}\\
&\ge \frac{\rho_1\gamma'}{2} |\monomial{p}{d}{m}|\qquad\mbox{by Proposition~\ref{prop:d-vs-d-1}(a)}
\end{align*}
and setting $\gamma=\rho_1\gamma'/2$ yields the claim.
\end{proof}

Let $\mathcal{E}$ denote the event that 
$\sum_{k=1}^t q(x\ind{k})=\sum_{k=1}^t q(y\ind{k})$ for all $q\in \mathcal{M}'$.
To simply notation, since we think of $y\ind{1},\ldots,y\ind{k}$ as fixed, 
for each $q\in \mathcal{M}'$ define $b_q\in \mathbb{F}_p$ by
$b_q=\sum_{k=1}^t q(y\ind{k})$.
Since any $q\in \mathcal{M}'$ is of the form $q=x_i\cdot q'$ for some $i\in L$
and $q'$ a monomial of degree at most $d-1$ on $R$,
$\mathcal{E}$ requires that
\[
b_q=\sum_{k=1}^t q(x\ind{k})=\sum_{k=1}^t q'(x\ind{k}_R)\cdot x\ind{k}_i.
\]
where for $x\in \mathbb{F}_p^\sampledim$, we write $x_R$ for $x$ restricted to the 
coordinates in $R$.
We view these constraints as a system of linear equations over the set of
variables $x\ind{k}_i$ for $k\in [t]$ and $i\in L$ whose coefficients are given
by the values of $q'(x\ind{k}_R)$ for $x\ind{k}_R\in \mathbb{F}_p^R$
for all $k\in [t]$.
Observe that for different values of $i\in L$ we get separate and
independent subsystems of
equations with precisely the same coefficients but potentially
different constant terms $b_q$ since $q$ depends on both $i$ and $q'$.
Therefore the probability that $(x\ind{k}_i)_{i\in L,k\in [t]}$ is a solution
is the product of the probabilities for the individual choices of $i\in L$.

For each $\mathbf{x}_R=x\ind{1}_R,\ldots,x\ind{t}_R$, there is a 
$|\monomial{p}{d-1}{R}|\times t$ matrix $Q_{\mathbf{x}_R}$ for 
a system of linear equations on $(x\ind{1}_i,\ldots,x\ind{t}_i)$ for each
$i\in L$, having one constraint for each polynomial $q'$ of degree at most
$d-1$ on $R$.
Observe that $Q_{\mathbf{x}_R}(q',k)=q'(x\ind{k}_R)$.

In particular, it follows that 
\begin{equation}
\Pr_{x\ind{1},\cdots,x\ind{t},y\ind{1},\cdot,y\ind{t}}[\;\mathcal{E}\mid (x\ind{1}_R,\cdots,x\ind{t}_R)=\mathbf{x}_R]\le p^{-rank(Q_{\mathbf{x}_R})\cdot |L|}.
\label{eq:rank}
\end{equation}

We now see that for almost all choices of $\mathbf{x}_R$, if $t$ is at
least a constant factor larger than $|\monomial{p}{d-1}{|R|}|$ then the
rank of $Q_{\mathbf{x}_R}$ is large.  
This follows by replacing $\sampledim$ by $|R|$, $d$ by $d-1$, $q'$ by $q$ and
$\mathbf{x}$ by $\mathbf{x}_R$ in the following lemma.

\begin{lemma}\label{lem:rank}
For any $0<\delta\le 1/2$ there is a constant
$\constnvsn=\constnvsn(\delta)>0$ such that 
there exist constants $c>0$ and $\eta>1$ such that
for $d=\floor{\delta \sampledim}$ and $t\geq \eta |\monomial{p}{d}{\sampledim}|$, 
if $\mathbf{x}=x\ind{1},\cdots,x\ind{t}$ is chosen uniformly at random from
$(\F_p^\sampledim)^t$, then the matrix
$Q_{\mathbf{x}}\in \F_p^{\monomial{p}{d}{\sampledim}\times [t]}$ 
given by $Q_{\mathbf{x}}(q,k)=q(x\ind{k})$.
then
\[
\Pr_{\mathbf{x}}[\,rank(Q_\mathbf{x})\leq \constnvsn|\monomial{p}{d}{\sampledim}|\,]\leq p^{-c|\monomial{p}{d+1}{\sampledim}|}.
\]
\end{lemma}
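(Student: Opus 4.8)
The plan is to prove this by a union bound over all "bad" subspaces: a random matrix $Q_\mathbf{x}$ has small rank only if all of its columns $x\ind{1},\dots,x\ind{t}$ happen to lie in some proper subspace of $\F_p^{\monomial{p}{d}{\sampledim}}$ of dimension at most $\constnvsn|\monomial{p}{d}{\sampledim}|$. More precisely, writing $N=|\monomial{p}{d}{\sampledim}|$ and letting $v(x)=(q(x))_{q\in\monomial{p}{d}{\sampledim}}\in\F_p^N$ denote the evaluation vector of a point $x\in\F_p^\sampledim$, the event $\mathrm{rank}(Q_\mathbf{x})\le \constnvsn N$ is exactly the event that $\mathrm{span}\{v(x\ind{1}),\dots,v(x\ind{t})\}$ has dimension at most $\constnvsn N$, i.e. that all $t$ vectors lie in a common subspace $W$ with $\dim W\le \constnvsn N$. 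Fixing such a $W$, each column independently lands in $W$ with probability at most $p^{\dim W - N}\le p^{-(1-\constnvsn)N}$ — here I use that the evaluation map $x\mapsto v(x)$ takes each value in its image with exactly the same multiplicity, so a uniform random $x$ lands in a set of $p^{\dim W}$ evaluation vectors (counted with multiplicity) with probability at most $p^{\dim W}/p^N$; this is precisely the point where the injectivity/full-rank structure enters, and the cleanest way to get it is to note $v$ ranges over a set whose affine span is all of $\F_p^N$, so for any fixed linear functional the fiber has density $1/p$, and iterate. Thus $\Pr_\mathbf{x}[\text{all columns in }W]\le p^{-(1-\constnvsn)Nt}$.

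Next I would bound the number of subspaces $W\le \F_p^N$ of dimension at most $\constnvsn N$; this is at most $p^{\constnvsn N\cdot N}$ up to lower-order factors (a $k$-dimensional subspace is determined by a basis of $k$ vectors in $\F_p^N$, giving the crude bound $p^{kN}$, and we only need $k\le\constnvsn N$). Taking a union bound gives
\[
\Pr_\mathbf{x}[\mathrm{rank}(Q_\mathbf{x})\le\constnvsn N]\le p^{\constnvsn N^2}\cdot p^{-(1-\constnvsn)Nt}= p^{-N\left((1-\constnvsn)t-\constnvsn N\right)}.
\]
Choosing $\constnvsn$ small enough and $\eta$ (the constant with $t\ge\eta N$) large enough that $(1-\constnvsn)\eta N - \constnvsn N \ge \tfrac12\eta N$, say, we get an upper bound of the form $p^{-\Omega(Nt)}\le p^{-\Omega(N^2)}$. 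To convert $N^2=|\monomial{p}{d}{\sampledim}|^2$ into $|\monomial{p}{d+1}{\sampledim}|$ as demanded by the statement, I invoke Proposition~\ref{prop:d-vs-d-1}: part (b) with $d$ replaced by $d+1$ (or a direct counting argument) shows $|\monomial{p}{d+1}{\sampledim}|\le O\!\left(\tfrac{\sampledim}{d}\right)\cdot |\monomial{p}{d}{\sampledim}|\le O(N)\cdot N = O(N^2)$ when $d=\floor{\delta\sampledim}$, so $N^2\ge c'|\monomial{p}{d+1}{\sampledim}|$ for a constant $c'>0$ depending on $\delta$. Absorbing constants yields the claimed $p^{-c|\monomial{p}{d+1}{\sampledim}|}$.

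The only genuinely delicate point — and the step I expect to be the main obstacle — is establishing the per-column bound $\Pr_x[v(x)\in W]\le p^{\dim W - N}$ rigorously, i.e. that the evaluation vectors $v(x)$ are "spread out" enough. This is exactly where Lemma~\ref{prop:evaluation-rank} does the work: it guarantees that for any set $S\subseteq\F_p^\sampledim$ of size $p^r$, the vectors $\{v(x):x\in S\}$ span a subspace of dimension at least $|\monomial{p}{d}{r}|$, which quantitatively prevents the evaluation vectors from concentrating in a low-dimensional $W$. The argument is: if a $(1-\constnvsn)$-fraction of evaluation vectors $v(x)$ (as $x$ ranges over all of $\F_p^\sampledim$, so over a set of size $p^\sampledim$, but we should instead phrase it for a random subset) lay in a subspace of dimension $\le\constnvsn N$, then restricting to an appropriate sub-cube of side $p^r$ with $|\monomial{p}{d}{r}| > \constnvsn N$ would contradict Lemma~\ref{prop:evaluation-rank}; choosing $r$ appropriately (with $r=\Theta(\sampledim)$ so that $|\monomial{p}{d}{r}|=\Theta(N)$ via Proposition~\ref{prop:d-vs-d-1}-type estimates comparing $|\monomial{p}{d}{r}|$ and $|\monomial{p}{d}{\sampledim}|$) pins down how small $\constnvsn$ must be. I would carry this out first, fixing the admissible range of $\constnvsn=\constnvsn(\delta)$, and only then run the union bound above with that $\constnvsn$.
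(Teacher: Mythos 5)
There is a genuine gap, and it sits exactly where you flagged it: the per-column estimate $\Pr_x[v(x)\in W]\le p^{\dim W - N}$ (with $N=|\monomial{p}{d}{\sampledim}|$) is false, and no version of it strong enough for your union bound can be extracted from Lemma~\ref{prop:evaluation-rank}. The evaluation image $\{v(x):x\in\F_p^\sampledim\}$ has only $p^\sampledim$ elements inside $\F_p^N$ with $N\gg \sampledim$, so it cannot be anywhere near equidistributed over cosets of a subspace; concretely, for $W=\langle\{v(x): x_1=\dots=x_{\sampledim-r}=0\}\rangle$ one has $\dim W=|\monomial{p}{d}{r}|$ but $\Pr_x[v(x)\in W]\ge p^{r-\sampledim}$, exponentially larger than $p^{\dim W-N}$. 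The heuristic ``each linear functional has fiber density $1/p$, and iterate'' fails at the first step: a nonzero functional applied to $v(x)$ is just a nonzero degree-$\le d$ polynomial, which can be extremely biased (e.g.\ $x_1x_2\cdots x_d$ vanishes with probability $1-(1-1/p)^d$) --- indeed the existence of such biased polynomials is the whole subject of the paper. What Lemma~\ref{prop:evaluation-rank} actually gives is only a cardinality bound: if $\dim W\le \constnvsn N$ then at most $p^{r+1}$ evaluation vectors lie in $W$ with $r\le\lceil \sampledim(1-1/d)\rceil$, so the true per-column probability is about $p^{1-\lfloor \sampledim/d\rfloor}$, a fixed power of $p$ (since $\sampledim/d\approx 1/\delta$ is constant), not $p^{-(1-\constnvsn)N}$.

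Once the per-column bound is corrected, your union bound collapses quantitatively: the number of candidate subspaces of dimension $\le\constnvsn N$ is about $p^{\constnvsn N^2}$ (and even restricting to subspaces spanned by evaluation vectors still gives about $p^{\constnvsn \sampledim N}$), while the total gain from $t=\eta N$ columns is only $p^{-O(\sampledim t/d)}=p^{-O(N)}$ for constant $\eta$; since $N$ is exponentially larger than $\sampledim$, the union bound term dominates no matter how $\constnvsn$ and $\eta$ are tuned. The paper's proof avoids this by never enumerating abstract subspaces: it fixes $b=\lfloor\constnvsn N\rfloor$ of the matrix's own columns as the candidate spanning set, uses Lemma~\ref{prop:evaluation-rank} plus Proposition~\ref{prop:d-vs-d-1}(a) to show their span contains fewer than $p^{r+1}$ evaluations with $r\le\lceil\sampledim(1-1/d)\rceil$, bounds the chance that each of the remaining $t-b$ independent columns lands there by $p^{1-\lfloor\sampledim/d\rfloor}$, and then takes a union bound only over the $\binom{t}{b}\le(2e\eta/\constnvsn)^{\constnvsn N}$ column subsets; the final step chooses $\eta$ large so that $(\eta-\constnvsn)\rho_1/2$ beats $\constnvsn\log_p(2e\eta/\constnvsn)$. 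Your high-level reduction (small rank means all columns lie in a common low-dimensional space, then apply Lemma~\ref{prop:evaluation-rank}) is in the right spirit, but the argument needs this ``union over column subsets'' structure rather than a union over subspaces to survive the correct, much weaker, per-column probability.
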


We first show how to use Lemma~\ref{lem:rank} to prove
Theorem \ref{lem:main-bias}.

\begin{proof}[Proof of Theorem \ref{lem:main-bias}]
Let $0<\delta\le 1/2$, and set $\constnvsn>0$ and $\eta>1$ and $c>0$ as in
Lemma~\ref{lem:rank}.
Let $t=\lceil\eta \monomial{p}{d-1}{\sampledim}\rceil$.
We first bound the expected value of $|\biasj(f)|^{2t}$.  
By Lemma~\ref{lem:expectation-dual}
and the definition of event $\mathcal{E}$ we have
$$\E_{f\in_R \Pdm}[\;|\biasj(f)|^{2t}\;]=\Pr_{x\ind{1},\cdots,x\ind{t},y\ind{1},\cdot,y\ind{t}} [\;\mathcal{E} \;].$$
Let $\sampledim'=\lceil \sampledim(1-1/d)\rceil$ and $d'=d-1$.  
Now by definition,
\begin{align*}
&\Pr_{x\ind{1},\cdots,x\ind{t},y\ind{1},\cdot,y\ind{t}}[\;\mathcal{E}\;]\\
&\le\ \Pr_{\mathbf{x_R}}[rank(Q_{\mathbf{x}_R})\le \constnvsn |\monomial{p}{d}{\sampledim'}|]\\
&\qquad +\Pr_{x\ind{1},\cdots,x\ind{t},y\ind{1},\cdot,y\ind{t}}[\;\mathcal{E}\ :\  
rank(Q_{\mathbf{x}_R})\le 
\constnvsn |\monomial{p}{d'}{\sampledim'}|\mbox{ for }
\mathbf{x}_R=(x\ind{1}_R,\cdots,x\ind{t}_R)]\\
&\le\ \Pr_{\mathbf{x_R}}[rank(Q_{\mathbf{x}_R})\le \constnvsn |\monomial{p}{d'}{\sampledim'}|]
+p^{-\constnvsn |\monomial{p}{d'}{\sampledim'}|\cdot |L|}
\end{align*}
by (\ref{eq:rank}).
Observe that $t\ge \eta |\monomial{p}{d'}{\sampledim}|\ge \eta |\monomial{p}{d'}{\sampledim'}|$ so
we can apply Lemma~\ref{lem:rank} with $\sampledim=\sampledim'$, $d=d'$, and
$\mathbf{x}=\mathbf{x}_R$ to derive that
$$\Pr_{\mathbf{x_R}}[rank(Q_{\mathbf{x}_R})\le \constnvsn |\monomial{p}{d'}{\sampledim'}|]
p^{-c|\monomial{p}{d'+1}{\sampledim'}|}.$$
Therefore, 
\begin{equation}
\E_{f\in_R \Pdm}[\;|\biasj(f)|^{2t}\;]\le
p^{-c|\monomial{p}{d}{\sampledim'}|} +p^{-\constnvsn |\monomial{p}{d-1}{\sampledim'}|\cdot |L|}.
\label{eq:monomial}
\end{equation}
Now, for sufficiently large $\sampledim$, by Proposition~\ref{prop:d-vs-d-1}(a),
$|\monomial{p}{d}{\sampledim'}|\ge \gamma'|\monomial{p}{d}{\sampledim}|$ and by 
Corollary~\ref{prop:mprime-bound}
$|\monomial{p}{d-1}{\sampledim'}|\cdot |L|\ge \constnvsn|\monomial{p}{d}{\sampledim}|$.
Therefore,
$$\E_{f\in_R \Pdm}[\;|\biasj(f)|^{2t}\;]\le p^{-c\gamma' |\monomial{p}{d}{\sampledim}|}+
p^{-\constnvsn^2 |\monomial{p}{d}{\sampledim}}\ge
p^{-c'|\monomial{p}{d}{\sampledim}|}$$
for some constant $c'>0$.  
Now we can apply Markov's inequality to obtain that for any $c_1>0$.
\begin{align*}
\Pr_{f\in_R\Pdm}[|\biasj(f)|>p^{-c_1\sampledim/d}]&=\Pr_{f\in_R\Pdm}[|\biasj(f)|^{2t}>p^{-2t\cdot c_1\sampledim/d}]\\
&\leq \frac{p^{-c'|\monomial{p}{d}{\sampledim}|}}{p^{-2t\cdot c_1\sampledim/d}}\\
&=p^{2t\cdot c_1\sampledim/d-c'|\monomial{p}{d}{\sampledim}|}
\end{align*}
By definition, $t=\lceil \eta |\monomial{p}{d-1}{\sampledim}|\rceil\le \eta' |\monomial{p}{d-1}{\sampledim}|$ for a fixed $\eta'>\eta$.
Therefore, by Proposition \ref{prop:d-vs-d-1}(b),
$2t \sampledim/d\le  2\eta'\rho_2 |\monomial{p}{d}{\sampledim}|$.
By choosing $c_1=c'/(4\eta'\rho_2)$, we obtain that 
$2t\cdot c_1\sampledim/d-c'|\monomial{p}{d}{\sampledim}|\le -c'|\monomial{p}{d}{\sampledim}|/2$
and setting $c_2=c'/2$ we derive that
$$\Pr_{f\in_R\Pdm}[|\biasj(f)|>p^{-c_1\sampledim/d}]\le p^{-c_2|\monomial{p}{d}{\sampledim}|}$$
as required.
\end{proof}

It remains to prove Lemma~\ref{lem:rank} and Proposition~\ref{prop:d-vs-d-1}.
We first prove Lemma~\ref{lem:rank} using Lemma \ref{prop:evaluation-rank}.
The proof of Lemma~\ref{prop:evaluation-rank} is quite involved and forms the 
bulk of the paper.   Its proof is the subsequent sections.

\begin{proof}[Proof of Lemma~\ref{lem:rank} using Lemma~\ref{prop:evaluation-rank}]
Let $d=\floor{\delta \sampledim}$ for $0<\delta\le 1/2$ and let $\gamma>0$ be the
minimum of
$\gamma'(\delta)$ from Proposition~\ref{prop:d-vs-d-1} and $\gamma(\delta)$
from Corollary~\ref{prop:mprime-bound}.
Fix $b=\lfloor \constnvsn \cdot |\monomial{p}{d}{\sampledim}|\rfloor$.
We will first check the probability that an arbitrary fixed set of $b$ columns
spans the whole matrix, and then apply a union bound to obtain the final result.

Let $V$ denote the linear space spanned by those $b$ columns.
Recall that each column of $Q_{x_R}$ is the evaluation 
of all monomials of degree at most $d$ at some point $\F_p^\sampledim$.
(Since $d\ge 1$, distinct elements of $\F_p^\sampledim$ have distinct evaluations.)

Let integer $r$ be maximal such that there are at least $p^r$ distinct elements of $\F_p^\sampledim$ with evaluations that are in $V$.
Then by Lemma~\ref{prop:evaluation-rank},
we have $\dim(V)\geq |\monomial{p}{d}{r}|$.
But since $V$ can be spanned by $b$ vectors,
we have 
\[
\constnvsn |\monomial{p}{d}{\sampledim}|\geq b \geq \dim(V)\geq |\monomial{p}{d}{r}|
\]
By Proposition \ref{prop:d-vs-d-1}(a),
we have
\[
|\monomial{p}{d}{\lceil \sampledim(1-1/d)\rceil}|\geq \constnvsn |\monomial{p}{d}{\sampledim}|\geq |\monomial{p}{d}{r}|
\]
So $r\leq \lceil \sampledim(1-1/d) \rceil$.
There are $p^\sampledim$ distinct evaluations
and fewer than $p^{r+1}$ of them fall into $V$.
So a uniform random evaluation is in $V$ with probability 
$<\frac{p^{r+1}}{p^\sampledim}\leq p^{1-\floor{\sampledim/d}}$.
Since the $t-b$ other columns of $Q_{x_R}$ are chosen uniformly and
independently, the probability that these $b$ columns span the whole matrix is
at most 
\[
(p^{1-\floor{\sampledim/d}})^{t-b}\leq (p^{1-\floor{\sampledim/d}})^{(\eta-\constnvsn)|\monomial{p}{d}{\sampledim}|}
\]
since $t=\eta |\monomial{p}{d}{\sampledim}|$ for some $\eta>1$ to be chosen later.
Since $d\le \delta \sampledim\le \sampledim/2$, we have
$1-\floor{\sampledim/d}\le -\sampledim/(2d)$ and
we can apply Proposition \ref{prop:d-vs-d-1} to get that 
\[
(p^{1-\floor{\sampledim/d}})^{t-b}\leq p^{-(\eta-\constnvsn) \frac \sampledim d\cdot|\monomial{p}{d}{\sampledim}|/2}
\leq p^{-(\eta-\constnvsn)\rho_1 |\monomial{p}{d+1}{\sampledim}|/2}
\]
for some $\rho_1>0$.
Therefore,
by a union bound over all choices of $b$ columns we have
\[
\Pr_{x\ind{1},\cdots,x\ind{t}}[rank(Q_{x_R})\leq \constnvsn |\monomial{p}{d}{\sampledim}|]\leq \binom{t}{b} \cdot p^{-(\eta-\constnvsn) \rho_1 |\monomial{p}{d+1}{\sampledim}|/2}.
\]
Note that $\binom{t}{b}\leq (\frac{te}{b})^b\leq (\frac{2e\eta}{\gamma} )^{\constnvsn |\monomial{p}{d}{\sampledim}|}\le (\frac{2e\eta}{\gamma})^{\constnvsn |\monomial{p}{d+1}{\sampledim}|}$,
so we have
\[
\Pr_{x\ind{1},\cdots,x\ind{t}}[rank(Q_{x_R})\leq \constnvsn |\monomial{p}{d}{\sampledim}|]
\leq p^{|\monomial{p}{d+1}{\sampledim}|(\constnvsn\log_p (\frac{2e\eta}{\gamma})-(\eta-\constnvsn) \rho_1/2)}
\]
Note that for any constant $c'>0$, $\constnvsn\log_p  (c'\eta)$ is $o(\eta)$.
Therefore, for fixed constant $\gamma>0$,
we can choose a sufficiently large $\eta>1$ such that
\[
\Pr_{x\ind{1},\cdots,x\ind{t}}[rank(Q_{x_R})\leq \constnvsn |\monomial{p}{d}{\sampledim}|]\leq p^{-c|\monomial{p}{d+1}{\sampledim}|}
\]
for some constant $c>0$.
\end{proof}

\subsection{Proof of Proposition~\ref{prop:d-vs-d-1}}

We first give basic inequalities regarding
$|\monomial{p}{d}{\sampledim}|$ that are independent of the choice of $p$.

\begin{proposition}\label{prop:monomial-bound}
For $d\leq \sampledim$,
\[
\sum_{i=0}^{d}\binom{\sampledim}{i}\leq |\monomial{p}{d}{\sampledim}|\leq \sum_{i=0}^{d}\binom{\sampledim-1+i}{i}=\binom{\sampledim+d}{d}
\]
\end{proposition}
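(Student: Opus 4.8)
\textbf{Proof proposal for Proposition~\ref{prop:monomial-bound}.}
The plan is to work directly with exponent vectors. Recall that in the Reed-Muller setting over $\fp$ the relevant ``monic monomials'' are the reduced ones, i.e.\ each monomial in $\monomial{p}{d}{\sampledim}$ is $\prod_{i=1}^{\sampledim} x_i^{a_i}$ with $0\le a_i\le p-1$ for every $i$ and $\sum_{i=1}^{\sampledim} a_i\le d$ (this is exactly the normalization that makes $|\monomial{p}{d}{\sampledim}|$ equal to the dimension of $RM_{\fp}(d,\sampledim)$, since $x^p=x$ as a function on $\fp$). Thus $|\monomial{p}{d}{\sampledim}|$ is the number of integer vectors $(a_1,\dots,a_{\sampledim})$ subject to the two constraints ``$0\le a_i\le p-1$'' and ``$\sum_i a_i\le d$''. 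Both inequalities in the proposition will follow by monotonicity of this count as we strengthen or relax one of the constraints.

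For the lower bound I would restrict to the exponent vectors with every $a_i\in\{0,1\}$; since $p\ge 2$ this only throws away monomials, so it gives a lower bound on $|\monomial{p}{d}{\sampledim}|$. A $0/1$ vector with $\sum_i a_i=i$ is just the indicator of an $i$-element subset of $[\sampledim]$, and the total degree constraint allows $0\le i\le d$, so the number of such vectors is $\sum_{i=0}^{d}\binom{\sampledim}{i}$, giving the left inequality. For the upper bound I would instead \emph{drop} the per-variable cap $a_i\le p-1$; this can only add monomials, so $|\monomial{p}{d}{\sampledim}|$ is at most the number of vectors of nonnegative integers with $\sum_i a_i\le d$. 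Introducing a slack coordinate $a_{\sampledim+1}:=d-\sum_{i=1}^{\sampledim} a_i\ge 0$ turns this into the number of ways to write $d$ as an ordered sum of $\sampledim+1$ nonnegative integers, which by the standard stars-and-bars count is $\binom{\sampledim+d}{d}$. This proves the right-hand equality $\sum_{i=0}^{d}\binom{\sampledim-1+i}{i}=\binom{\sampledim+d}{d}$ is the only remaining point; it is the hockey-stick identity, which I would verify by induction on $d$ via Pascal's rule $\binom{\sampledim+d}{d}=\binom{\sampledim+d-1}{d-1}+\binom{\sampledim-1+d}{d}$ (equivalently by grouping the vectors in the stars-and-bars count according to the value of the last coordinate).

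I do not expect any real obstacle here: the whole argument is a matter of recognizing $|\monomial{p}{d}{\sampledim}|$ as a constrained lattice-point count and bounding it by relaxing/strengthening a single constraint. The only point that needs a word of care is the reduced-monomial normalization ($a_i\le p-1$), which is what makes the upper bound non-tight for $p>2$ and is consistent with the dimension formula for $RM_{\fp}(d,\sampledim)$ used elsewhere in the paper.
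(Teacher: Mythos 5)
Your proposal is correct and follows essentially the same route as the paper: the paper also lower-bounds by counting only multilinear monomials and upper-bounds by relaxing to all nonnegative exponent vectors of total degree at most $d$, summing over degrees to get $\sum_{i=0}^{d}\binom{\sampledim-1+i}{i}=\binom{\sampledim+d}{d}$. Your slack-variable phrasing of the stars-and-bars count is just a minor repackaging of the same argument.
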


\begin{proof}
It is well known that there are $\binom{\sampledim+d-1}{\sampledim-1}$ non-negative integer solutions to the equation $\sum_{i=1}^\sampledim e_i= d$.
Thus by iterating degrees we have
\[
|\monomial{p}{d}{\sampledim}|\leq \sum_{i=0}^{d}\binom{\sampledim-1+i}{i}=\binom{\sampledim+d}{d}
\]
On the other hand,
if we only consider multi-linear terms,
we will get 
\[
\sum_{i=0}^{d}\binom{\sampledim}{i}\leq |\monomial{p}{d}{\sampledim}|.
\]
\end{proof}

We now prove part (a):
For $\mathbf{e}=(e_1,\cdots,e_k)$ where $1\leq e_i\leq p-1$,
let $\mathcal{M}_{\mathbf{e},\sampledim}$ denote the set of monomials of the form $\prod_{i=1}^k x_{h(i)}^{e_i}$,
$1\leq h(1)<h(2)<\cdots<h(k)\leq \sampledim$.
Then we have $|\monomial{p}{d}{\sampledim}|=\sum_{\mathbf{e}:\sum_i e_i\leq d} |\mathcal{M}_{\mathbf{e},\sampledim}|$.
Therefore
\[
\frac{|\monomial{p}{d}{\sampledim'}|}{|\monomial{p}{d}{\sampledim}|}\geq \min_{\mathbf{e}:\sum_i e_i\leq d}\frac{|\mathcal{M}_{\mathbf{e},\sampledim'}|}{|\mathcal{M}_{\mathbf{e},\sampledim}|}
\]
Now,
for fixed $\mathbf{e}=(e_1,\cdots,e_k)$,
consider the following process to generate elements in $\mathcal{M}_{\mathbf{e},\sampledim}$:
we first choose $k$ elements $j_1,\cdots, j_k$ from $[\sampledim]$,
then apply permutation $\phi$ over $[k]$ to get $\prod_{i=1}^k x_{j_{\phi(k)}}^{e_i}$.
We claim that this process generate each monomial in $\mathcal{M}_{\mathbf{e},\sampledim}$ equally many times,
if we go over all $k$ elements and all permutations.
Indeed,
for arbitrary monomials $f_1=\prod_{i=1}^k x_{j_i}^{e_i}$ and $f_2=\prod_{i=1}^k x_{j_i'}^{e_i}$,
$f_1$ can be generated by $(\{j_i\}_{i=1}^k,\phi)$ if and only if $f_2$ can be generated by $(\{j_i'\}_{i=1}^k,\phi)$.
Moreover,
the number of occurrence for each monomial is precisely the number of satisfying permutations,
hence only depends on $\mathbf{e}$.
Therefore,
we have 
\[
\frac{|\mathcal{M}_{\mathbf{e},\sampledim'}|}{|\mathcal{M}_{\mathbf{e},\sampledim}|}=\frac{\binom{\sampledim'}{k}}{\binom{\sampledim}{k}}
\]
This quantity is a decreasing function of $k$.
Hence we have
\[
\frac{|\monomial{p}{d}{\sampledim'}|}{|\monomial{p}{d}{\sampledim}|}\geq \frac{\binom{\sampledim'}{d}}{\binom{\sampledim}{d}}=\prod_{i=0}^{d-1}\frac{\sampledim'-i}{\sampledim-i}\geq (\frac{\sampledim'-d+1}{\sampledim-d+1})^d\geq e^{-\frac{(\sampledim-\sampledim')d}{\sampledim'-d+1}}\geq e^{-\frac {\sampledim}{\sampledim-\frac \sampledim {\sampledim'}-\sampledim'+1}}
\]
Since $d+\frac \sampledim d\leq \max\{2+\frac \sampledim 2,\delta \sampledim+\frac 1 {\delta}\}$,
we have $\sampledim-\frac \sampledim d-d+1\geq \min\{\frac \sampledim 2-1,(1-\delta)\sampledim+\frac 1 \delta+1\}$.
Therefore for sufficiently large $\sampledim$,
\[
\frac{|\monomial{p}{d}{\sampledim'}|}{|\monomial{p}{d}{\sampledim}|}\geq \min\{e^{-3},e^{-\frac 1 {1-\delta}}\}
\]

We now prove part (b):
Define $H:=\{(q_1,q_2):q_1\in \monomial{p}{d-1}{\sampledim},q_2\in \monomial{p}{d}{\sampledim},\exists i\in[\sampledim] \text{ s.t. } q_2=x_iq_1\}$.
We will obtain the inequalities by bounding $|H|$.

We first bound $|H|$ in terms of $|\monomial{p}{d-1}{\sampledim}|$.
Clearly for each $q_1\in \monomial{p}{d-1}{\sampledim}$,
there are at most $\sampledim$ choices of $x_i$ to yield $q_2=x_iq_1$,
so $|H|\leq \sampledim|\monomial{p}{d-1}{\sampledim}|$.
On the other hand, any $x_i$ that does not have degree 
$p-1$ in $q_1$ can be chosen.  There are at most $\frac{d-1}{p-1}$ variables
in $q_1$ having degree $p-1$
so we can choose at least $\sampledim-\frac {d-1} {p-1}> \sampledim-\frac{\sampledim}{p-1}\ge \frac{\sampledim}{2}$
variables $x_i$ since $p\ge 3$.
This gives us $|H|> \frac \sampledim 2 |\monomial{p}{d-1}{\sampledim}|$.

We now bound $|H|$ in terms of  $|\monomial{p}{d}{\sampledim}|$.
Each $q\in \monomial{p}{d}{\sampledim}$ contains at most $d$ distinct variables,
hence $|H|\leq d |\monomial{p}{d}{\sampledim}|$.

It immediately follows that $\monomial{p}{d}{\sampledim} \ge |H|/d \ge \frac{\sampledim}{2d} |\monomial{p}{d-1}|$ and hence we can choose $\rho_2=2$.

To lower bound $|H|$ in terms of $|\monomial{p}{d}{\sampledim}|$,
we show that a large portion of monomials contain many distinct variables and
hence each
$q_2\in\monomial{p}{d}{\sampledim}$ can be associated with many different $q_1$.
We first bound the number of monomials that have degree at most $d$ and
are composed of at most $k\le d$ distinct variables.
We can generate such monomials by first choosing $k$ variables,
then using these variables to form a monomial of degree $\le d$ and
so we can upper bound the number of such monomials by
$\binom{\sampledim}{k}\binom{k+d}{d}$.
For sufficiently small $k$ we can argue that this is a small fraction of 
$\monomial{p}{d}{\sampledim}$:
Suppose that $k\le d/6$.  Since by hypothesis, $d\le \delta \sampledim\le  \sampledim/2$, we have
$k+d\le \sampledim-k$ and
\begin{align*}
\frac{\binom{\sampledim}{k}\binom{k+d}{d}}{|\monomial{p}{d}{\sampledim}|}
&\le \frac{\binom{\sampledim}{k}\binom{k+d}{d}}{\binom{\sampledim}{d}}\qquad\mbox{by Proposition~\ref{prop:monomial-bound}}\\
&=\frac{(k+d)!}{(k!)^2 (\sampledim-k)\cdots (\sampledim-d+1)}
=\frac{(k+d)\cdots (2k+1)}{(\sampledim-k)\cdots (\sampledim-d+1)}\cdot \binom{2k}{k}\\
&\leq \left(\frac {k+d} {\sampledim-k}\right)^{d-k}\cdot 2^{2k}
\le (7/11)^{5k}\cdot 2^{2k} \le (3/7)^k. 
\end{align*}
Summing over all values of $k\le d/6$ we obtain that a total fraction at most
3/4 of all monomials in $\monomial{p}{d}{\sampledim}$ have at most $d/6$ distinct
variables.  
Therefore, since at least 1/4 of $\monomial{p}{d}{\sampledim}$ contain at least 
$d/6$ distinct variables,
it must be the case that $|H|\geq \frac d 24\cdot |\monomial{p}{d}{\sampledim}|$.
Since $|H|\le \sampledim\cdot |\monomial{p}{d-1}{\sampledim}|$, we obtain that 
$|\monomial{p}{d}{\sampledim}|/24\le \frac{\sampledim}{d} |\monomial{p}{d-1}{\sampledim}|$. Hence
we derive (b) with $\rho_1=1/24$.
This completes the proof of Proposition~\ref{prop:d-vs-d-1}(b).

\subsection{Lower bound on the likelihood of bias}

We now prove Proposition~\ref{prop:lower-bias}, 
on the limits on the extent to which
Theorem~\ref{lem:main-bias} can be improved.
The argument is analogous to that
of~\cite{DBLP:journals/cc/Ben-EliezerHL12} for the case of
$\mathbb{F}_2$.

\begin{proof}[Proof of Proposition~\ref{prop:lower-bias}]
We follow the same division of variables $[\sampledim]$ into parts
$L$ and $R$ with $|L|=\lfloor\frac{n}{d}\rfloor$ and 
$|R|=\sampledim'=\lceil \sampledim(1-1/d)\rceil$ and $d'=d-1$ that
we used for the upper bound on the bias.
Define $\mathcal{L}$ to be the set of all polynomials in $\Pdm$ whose
monomials are from the set 
$\mathcal{M}'\subseteq \monomial{p}{d}{\sampledim}$ (defined earlier)
that have degree 1 on $L$ and degree at most $d-1$ on $R$.
By Corollary~\ref{prop:mprime-bound}, there is some constant
$\gamma>0$ such that for sufficiently large $\sampledim$,
$|\mathcal{M}'|\ge \gamma\cdot |\monomial{p}{d}{\sampledim}|$ and
hence $|\mathcal{L}|\ge p^{\gamma |\monomial{p}{d}{\sampledim}|}$.
Therefore, we have
$|\mathcal{L}|/|\Pdm|\ge p^{-(1-\gamma)|\monomial{p}{d}{\sampledim}|}$.

Now consider the expected bias of polynomials in $\mathcal{L}$:
We can write $f$ chosen uniformly from $\mathcal{L}$ uniquely as
$$f(x)=\sum_{i\in L} x_i\cdot g_i(x_R)$$
where the $g_i$ are independently chosen polynomials over monomials 
$\monomial{p}{d-1}{\sampledim'}$ on $R$.

For $j\in \mathbb{F}^*_p$,
\[\begin{split}
\E_{f\in_R \mathcal{L}}\ \biasj(f)=&
\E_{f\in_R \mathcal{L}}\ \E_{x\in_R \mathbb{F}_p^\sampledim}\  \omega^{j\cdot f(x)}\\
=&\E_{f\in_R \mathcal{L}}\ 
\E_{x_L\in_R \mathbb{F}_p^L}\ \E_{x_R\in_R \mathbb{F}_p^R }\ 
\omega^{j\cdot f(x)}\\
=&\E_{x_L\in_R \mathbb{F}_p^L}
\E_{x_R\in_R \mathbb{F}_p^R}\E_{f\in_R \mathcal{L}}\ \omega^{j\cdot f(x)}.
\end{split}
\]
Now with probability $p^{-|L|}$, all the $x_i$ for $i\in L$ are 0
and every $f\in \mathcal{L}$ evaluates to 0, so $\E_{f\in_R \mathcal{L}}\ \omega^{j\cdot f(0^L,x_R)}=1$.
With the remaining probability, $x_L\ne 0^L$ and hence there is some 
$i \in L$ and $b_i\ne 0$ such that $x_i=b_i$.  
For $f$ chosen at random from $\mathcal{L}$, for each fixed value of 
$x_L=b_L$ with $b_i\ne 0$, we have 
$$f(b_L,x_R)= b_i g_{i0} + f'(x_R)$$ where
$g_{i0}$ is the constant term of the polynomial $g_i$ and is chosen
independently of $f'$. Since $g_{i0}$ is uniformly chosen from 
$\mathbb{F}_p$ for random $f$ in $\mathcal{L}$ 
and since $b_i\ne 0$,
$b_i g_{i0}$ is also uniformly chosen from $\mathbb{F}_p$.  
Further,
since $g_{i0}$ is independent of $f'$, for every fixed $x_R$,
the value $\E_{f\in_R \mathcal{L}} \omega^{j\cdot f(b_L,x_R)}=0$.
Therefore,
$\E_{f\in_R \mathcal{L}} \biasj(f)=p^{-|L|}$.
Now since $|\biasj(f)|\le 1$, 
we obtain
$$\Pr_{f\in_R \mathcal{L}}[\;|\biasj(f)| \ge p^{-|L|}/2\;]\ge p^{-|L|}/2.$$
Therefore 
$$\Pr_{f\in_R \Pdm} [|\biasj(f)|\ge p^{-|L|}/2] \ge \frac{|\mathcal{L}|}{|\Pdm|}\cdot p^{-|L|}/2\ge p^{-c'|\monomial{p}{d}{\sampledim}|}$$
for some $c'<1$ since $|L|\ll
|\monomial{p}{d}{\sampledim}|$.
Since $|L|=\lfloor n/d \rfloor\ge 2$, we obtain $p^{-|L|}/2>p^{-c''n/d}$
for some constant $c''>0$ as required.
\end{proof}

\section{Extremal rank properties of truncated Reed-Muller codes}
\label{sec:reed-muller}
In this section we prove Lemma \ref{prop:evaluation-rank}.
Let $\generatingmatrix{d}$ be the natural generating matrix of the $(d,\sampledim)$ Reed-Muller code over the field $\mathbb{F}_p$ for $p$ an odd prime.
That is,
$\generatingmatrix{d}$ is a $p^\sampledim \times |\monomial{p}{d}{\sampledim}|$ matrix,
where each row is indexed by $a\in \mathbb{F}_p$,
and each column is indexed by a monomial $q\in \monomial{p}{d}{\sampledim}$,
and $M^{(d)}(x,q)=q(x)$.
Clearly the rank of $\generatingmatrix{d}$ is  $|\monomial{p}{d}{\sampledim}|$,
since all the columns are independent.
For $S\subset \mathbb{F}_p^\sampledim$,
define $\generatingmatrix{d}_S$ as $\generatingmatrix{d}$ restricted on rows in $S$.

For fixed $q\in \monomial{p}{d}{\sampledim}$,
$(q(x))_{x\in S}$ is precisely the column corresponding to $q$ in $M^{(d)}$ restricted to  rows indexed by $S$. From this prospective,
the dimension of the subspace spanned by $\{(q(x))_{q\in \monomial{p}{d}{\sampledim}}:x\in S\}$ is exactly the (column) rank of the matrix $\generatingmatrix{d}_S$.
So we can restate Lemma \ref{prop:evaluation-rank} as the following:

\begin{lemma}\label{prop:evalutation-rank-new}
Let $S$ be a subset of $\mathbb{F}_p^\sampledim$ such that $|S|=p^r$,
then $rank(\generatingmatrix{d}_S)\geq |\monomial{p}{d}{r}|$.
\end{lemma}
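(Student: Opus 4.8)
The plan is to prove that whenever $|S| = p^r$ with $S \subseteq \F_p^\sampledim$, the matrix $\generatingmatrix{d}_S$ has column rank at least $|\monomial{p}{d}{r}|$. The natural strategy is a ``compression'' or ``shifting'' argument on $S$, of the kind used by Keevash and Sudakov over $\F_2$: show that one can transform an arbitrary $S$ of size $p^r$ into a highly structured set without ever increasing the rank of $\generatingmatrix{d}_S$, and then compute the rank exactly on that structured set. The key observation making this work is that the monomial evaluation map is \emph{equivariant} under affine automorphisms of $\F_p^\sampledim$: for any invertible affine map $\varphi\colon\F_p^\sampledim\to\F_p^\sampledim$, the columns of $\generatingmatrix{d}$ restricted to $\varphi(S)$ span the same space (up to a fixed change of basis on the column side, since composing a degree-$\le d$ monomial with an affine map is again a degree-$\le d$ polynomial) as those restricted to $S$; hence $rank(\generatingmatrix{d}_S)=rank(\generatingmatrix{d}_{\varphi(S)})$. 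This lets me normalize $S$.

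Concretely, first I would reduce to the case where $S$ \emph{contains} an affine subcube. Using affine maps as above, I can move points around; the goal is to argue by induction on $\sampledim$ (or on the number of ``active'' coordinates of $S$) that one may assume $S \supseteq \F_p^J$ for a coordinate set $J$ with $|J| = r$, or more precisely that $S$ restricted to some $r$ coordinates is all of $\F_p^r$ while being supported there --- i.e., $S$ ``contains a combinatorial subcube of dimension $r$'' after an affine change of coordinates. Once $S$ projects onto all of $\F_p^J$ with $|J|=r$, the submatrix of $\generatingmatrix{d}_S$ whose columns are indexed by the monomials in $\monomial{p}{d}{r}$ using only the variables $\{x_i : i\in J\}$ has full column rank $|\monomial{p}{d}{r}|$: indeed, restricting attention to one representative preimage in $S$ of each point of $\F_p^J$ gives exactly $\generatingmatrix{d}$ for the $(d,r)$ Reed--Muller code, whose columns are independent (the $d\ge 1$, hence $p\ge 2$, ensures distinct monomials over $r$ variables of degree $<p$ have distinct evaluation vectors). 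Thus $rank(\generatingmatrix{d}_S)\ge |\monomial{p}{d}{r}|$.

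The main obstacle --- and the reason the excerpt says this proof ``is quite involved and forms the bulk of the paper'' --- is precisely the combinatorial claim that a set of size $p^r$ can, via rank-nonincreasing operations, be compressed to something containing an $r$-dimensional affine subcube (or, more honestly, that the lexicographically minimal set of each size is extremal, as stated in Theorem~\ref{thm:main-extremal}). Over $\F_2$ this is the Keevash--Sudakov argument and already nontrivial; here it requires understanding the extremal function $g_d(m)$ for \emph{all} sizes $m$, not just powers of $p$, and the crux becomes establishing a sub-additivity inequality $g_d(m_1+\cdots+m_p)\ge$ (something like) $\sum_i g_d(m_i)$ plus lower-order terms --- a sub-additivity over $p$ summands rather than the $2$ summands that sufficed over $\F_2$. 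I would therefore not attempt the subcube-compression claim directly for arbitrary $S$; instead I would (as the paper does) set up the extremal function $g_d$, prove the general Theorem~\ref{thm:main-extremal} by induction on $\sampledim$ --- peeling off the first coordinate, writing $S$ as a disjoint union of $p$ ``slices'' $S_0,\dots,S_{p-1}$ according to the value of $x_1$, bounding $rank(\generatingmatrix{d}_S)$ below in terms of the ranks of the $\generatingmatrix{d-1}$ (and $\generatingmatrix{d}$) matrices of the slices, and invoking sub-additivity of $g_d$ --- and then read off Lemma~\ref{prop:evalutation-rank-new} as the special case $m = p^r$, where $g_d(p^r) = |\monomial{p}{d}{r}|$ by a direct computation. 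The sub-additivity of $g_d$ (deferred to Section~\ref{sec:sub-add}) is the genuine heart of the matter.
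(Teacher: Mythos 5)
Your final plan is essentially the paper's own proof: define $g_d(m)$ as the rank of $\generatingmatrix{d}$ restricted to the lexicographically minimal set of size $m$, prove the general extremal bound (Theorem~\ref{thm:main-extremal}/Theorem~\ref{rank-result}) by slicing $S$ along the first coordinate where its elements differ, block-triangularizing, and invoking the sub-additivity of $g_d$, then read off the lemma from $g_d(p^r)=|\monomial{p}{d}{r}|$. Two small corrections that do not change the approach: the inequality actually needed is $g_d(\sum_i a_i)\le \sum_i g_{d-i}(a_i)$ (your parenthetical writes the super-additive direction), and the paper's induction for Theorem~\ref{rank-result} runs on $|S|$ rather than on $\sampledim$.
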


As noted in the introduction we will derive the more general bound for an
arbitrary value of $|S|$, not only for the restricted case that $|S|=p^r$ that occurs in the above lemma.
We start with the special case where $S$ contains the lexicographically smallest  $|S|$ elements. 
(Here lexicographical order means that for $a=(a_1,\cdots,a_\sampledim)$ and $b=(b_1,\cdots,b_\sampledim)$,
$a<b$ if and only if there exists $1\leq k\leq \sampledim$ such that $a_i=b_i$ for $i<k$ and $a_k<b_k$.)
We will show that this is in fact an extremal case.

For integer $\iterator$,
let $S_\iterator\subset \mathbb{F}_p^\sampledim$ be the subset that contains the smallest $\iterator$ elements.
Define $g_d(\iterator)$ as the rank of $\generatingmatrix{d}_{S_\iterator}$.
For the completeness of definition,
we set $g_d(\iterator)=0$ when $d<0$ or $\iterator=0$.
When $\iterator$ is a power of $p$,
it is easy to compute the value of $g_d(\iterator)$.

\begin{lemma}\label{thm-g_d-base}
$
g_d(p^r)=|\monomial{p}{d}{r}|.
$
\end{lemma}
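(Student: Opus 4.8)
The plan is to identify the set $S_{p^r}$ explicitly and then read off its rank directly from the block structure of $\generatingmatrix{d}$. First I would observe that, under the lexicographic order on $\mathbb{F}_p^\sampledim$ (in which the first coordinate is the most significant), the $p^r$ smallest vectors are precisely those whose first $\sampledim-r$ coordinates are all $0$ and whose last $r$ coordinates range freely over $\mathbb{F}_p$; that is, $S_{p^r}=\{0\}^{\sampledim-r}\times \mathbb{F}_p^r$. (Note $p^r=|S_{p^r}|\le p^\sampledim$ forces $r\le\sampledim$, so this is well defined.) Indeed, any vector having a nonzero entry among its first $\sampledim-r$ coordinates is lexicographically larger than every vector all of whose first $\sampledim-r$ coordinates vanish, and there are exactly $p^r$ of the latter.

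Next I would analyze the columns of $\generatingmatrix{d}_{S_{p^r}}$. A column is indexed by a monomial $q=\prod_{i=1}^\sampledim x_i^{e_i}\in\monomial{p}{d}{\sampledim}$, with entry $q(x)$ in the row indexed by $x\in S_{p^r}$. If $e_i\ge 1$ for some $i\le\sampledim-r$, then $x_i=0$ for all $x\in S_{p^r}$, so $q$ vanishes identically on $S_{p^r}$ and this column is zero. The surviving columns are exactly those indexed by the monic monomials of degree at most $d$ supported on the last $r$ variables $x_{\sampledim-r+1},\cdots,x_\sampledim$, which are in natural bijection with $\monomial{p}{d}{r}$. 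On those columns, $\generatingmatrix{d}_{S_{p^r}}$ is, after identifying each row $x\in S_{p^r}$ with its last $r$ coordinates, exactly the full generating matrix of the $(d,r)$ Reed-Muller code over $\mathbb{F}_p$, whose rows are indexed by all of $\mathbb{F}_p^r$ and whose columns are indexed by $\monomial{p}{d}{r}$.

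Finally, as already recorded for $\generatingmatrix{d}$ itself, the columns of this $(d,r)$ generating matrix are linearly independent, so its rank equals $|\monomial{p}{d}{r}|$; deleting the identically zero columns does not change the rank, hence $g_d(p^r)=rank(\generatingmatrix{d}_{S_{p^r}})=|\monomial{p}{d}{r}|$. The boundary conventions are consistent with this: when $d<0$ there are no monomials and both sides are $0$, and when $r=0$ we have $S_1=\{0^\sampledim\}$, on which the only non-vanishing monomial is the constant $1$, matching $|\monomial{p}{d}{0}|=1$. There is no genuine obstacle in this argument; the only points requiring care are the explicit description of the lexicographically smallest $p^r$ vectors and the bookkeeping of exactly which monomials are forced to vanish on $S_{p^r}$.
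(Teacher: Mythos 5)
Your proof is correct and follows essentially the same route as the paper's: you identify $S_{p^r}$ as the vectors whose first $\sampledim-r$ coordinates vanish, note that every column indexed by a monomial involving one of those variables is zero, and observe that the remaining columns (the generating matrix of the $(d,r)$ code) are linearly independent, giving rank $|\monomial{p}{d}{r}|$. The extra bookkeeping you include (the explicit description of the lexicographically minimal set and the boundary cases) is a fine elaboration of the same argument.
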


\begin{proof}
Let $x$ be a monomial of degree at most $d$ over $x_1,\cdots,x_\sampledim$.
Notice that $\forall a\in S_{p^r}$,
$a_1=a_2=\cdots=a_{\sampledim-r}=0$.
So if $x$ contains any of the first $\sampledim-r$ variables,
the column indexed by $x$ will be $0$.
On other other hand,
all the columns corresponding to monomials over $x_{\sampledim-r+1},\cdots,x_\sampledim$ of degree at most $d$ are linear independent.
Otherwise this means that some monomial can be represented as a linear combination of other monomials, which is impossible.
So the rank of the submatrix is just the number of monomials over $r$ variables of degree at most $d$.
\end{proof}

It is more complicated to compute the value for general $\iterator$.
With careful observation,
we have the following recursion.

\begin{lemma}\label{def-g_d}
Let $r$ be the unique integer so that $p^r\leq \iterator<p^{r+1}$.
Let $\iterator=k\cdot p^r+c$.
Then
\[
g_d(\iterator)=\sum_{i=0}^{k-1}g_{d-i}(p^r)+g_{d-k}(c).
\]
As a special case, when $c=0$, we have
$
g_d(k\cdot p^r)=\sum_{i=0}^{k-1}g_{d-i}(p^r)
$.
\end{lemma}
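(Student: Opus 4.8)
The plan is to analyze the structure of the lexicographically smallest $\iterator$ elements of $\F_p^\sampledim$ and see how the generator matrix $\generatingmatrix{d}_{S_\iterator}$ decomposes into blocks according to the leading coordinate(s). Write $\iterator = k\cdot p^r + c$ with $0 \le c < p^r$ and $0 \le k \le p-1$, where $r$ is chosen so that $p^r \le \iterator < p^{r+1}$. Every $a \in S_\iterator$ has $a_1 = \cdots = a_{\sampledim - r - 1} = 0$; among the coordinates that vary, the value of $a_{\sampledim-r}$ (the ``leading'' varying coordinate) takes each of the values $0, 1, \ldots, k-1$ exactly $p^r$ times, and then value $k$ on the remaining $c$ points (whose lower $r$ coordinates form $S_c \subseteq \F_p^r$). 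So $S_\iterator$ is partitioned into $k$ ``full blocks'' $B_0, \ldots, B_{k-1}$, each a copy of $\F_p^r$ sitting at a fixed value of $x_{\sampledim-r}$, plus one ``partial block'' $B_k$ which is a copy of $S_c$.

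First I would group the columns of $\generatingmatrix{d}_{S_\iterator}$ by the exponent $e$ of the variable $x_{\sampledim - r}$ in the indexing monomial, $0 \le e \le d$; write $q = x_{\sampledim-r}^e \cdot q'$ where $q'$ ranges over $\monomial{p}{d-e}{r}$ in the low $r$ coordinates (monomials using only $x_{\sampledim - r +1}, \ldots, x_\sampledim$; any column using one of the first $\sampledim - r -1$ variables is identically zero and can be discarded). On block $B_\ell$, where $x_{\sampledim-r} \equiv \ell$, this column contributes the vector $\ell^e \cdot (q'(y))_{y \in B_\ell}$. Thus, restricted to the full blocks $B_0, \ldots, B_{k-1}$, the matrix is, up to the Vandermonde-type scalars $\ell^e$, block-structured: its column space is governed by a $k \times (d+1)$ Vandermonde system in the variable $\ell$ tensored with the full-rank $r$-variable evaluation matrices. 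The key linear-algebra fact to extract is that the rank of this part is exactly $\sum_{e=0}^{\min(k-1,d)} |\monomial{p}{d-e}{r}| = \sum_{i=0}^{k-1} g_{d-i}(p^r)$ (using Lemma~\ref{thm-g_d-base}), because the $k\times(d+1)$ Vandermonde on distinct nodes $0,1,\ldots,k-1$ selects exactly the first $k$ exponent-classes. Then I would argue that adjoining the partial block $B_k$, which carries evaluations $k^e\cdot q'(y)$ for $y \in S_c$, contributes exactly $g_{d-k}(c)$ additional rank: the exponent classes $e < k$ are already fully spanned (their behavior on $B_k$ is forced by their behavior on $B_0,\ldots,B_{k-1}$ via Lagrange interpolation on $k+1$ nodes $0,\ldots,k$), so the only new directions come from $e = k, k+1, \ldots, d$ acting on $S_c$, and after clearing lower exponents these reduce to the $(d-k)$-degree evaluation matrix on $S_c$, i.e.\ rank $g_{d-k}(c)$. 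Summing the two contributions gives the claimed recursion; the special case $c = 0$ is immediate since $g_{d-k}(0) = 0$.

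The main obstacle will be the careful justification of the rank accounting at the ``interface'' between the full blocks and the partial block: I need to show simultaneously that no cancellation loses rank among the full blocks (a lower bound, coming from the nonsingularity of the $k\times k$ Vandermonde minor on nodes $0,\ldots,k-1$ together with independence of the $q'$ columns on each $\F_p^r$) and that the partial block adds no more than $g_{d-k}(c)$ and no less (the upper bound needs that the classes $e<k$ on $B_k$ lie in the span already established, which is where the $(k+1)$-node interpolation identity enters). Concretely I expect to set up the column space as a direct sum indexed by exponent $e$, handle $e \le k-1$ and $e \ge k$ separately, and invoke Vandermonde invertibility over $\F_p$ — valid since $k \le p-1$ so the nodes $0,1,\ldots,k$ are distinct in $\F_p$. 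Everything else (discarding zero columns, identifying sub-blocks with copies of $\generatingmatrix{d-e}$ on $\F_p^r$ or on $S_c$) is bookkeeping.
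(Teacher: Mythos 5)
Your proposal is correct and takes essentially the same route as the paper: the same partition of $S_\iterator$ into $k$ full blocks (copies of $\F_p^r$) plus a partial block that is a copy of $S_c$, the same grouping of columns by the exponent of $x_{\sampledim-r}$, and the same two structural facts (the column groups are nested prefixes, and the partial block's rows are a prefix of each full block's rows) combined with Vandermonde nonsingularity on the nodes $0,\dots,k\le p-1$. The paper merely packages these interpolation facts as an explicit two-stage Gaussian elimination (its triangular-elimination algorithm on columns, then row elimination) to reach a block-diagonal matrix, whereas you phrase them as Lagrange-interpolation identities plus a tensor-product rank count; the content is the same.
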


\begin{proof}
For the sake of convenience,
let $\monomialeq{p}{d}{r}$ be the set of monomials over the last $r$ variables whose degree equals $d$,
and  $\monomialle{p}{d}{r}$ be the set of monomials over the last $r$ variables whose degree is at most $d$.

Consider the block structure of the matrix.
Let $A_0$ be the submatrix that takes $S_{p^r}$ as rows and  $\monomialle{p}{d-p+1}{r}$ as columns.
For $i=1,2,\cdots,p-1$,
let $A_i$ be the submatrix that takes $S_{p^r}$ as rows and $\monomialeq{p}{d-p+1+i}{r}$ as columns.
Let $A_{\leq i}$ be the blocks $(A_0,\cdots,A_i)$,
then $A_{\leq i}$ is precisely the submatrix that takes $S_{p^r}$ as rows and $\monomialle{p}{d-p+i+1}{r}$ as columns.

Now let us consider the rows for $R_t:=\{a|a_1=\cdots=a_{\sampledim-r-1}=0,a_{\sampledim-r}=t\}$ for $t>0$.
The non-zero parts correspond to monomials that only depend on $x_{\sampledim-r},x_{\sampledim-r+1},\cdots,x_{\sampledim}$.
If we group all the monomials by their degree on $x_{\sampledim-r}$
then, for $t\leq k-1$,
the row will be of the form
\[
A_{\leq p-1},\,t\cdot A_{\leq p-2},\, t^2\cdot A_{\leq p-3},\,
\cdots, \,t^{p-1}\cdot A_{\leq 0}.
\]
Things are a little different for $t=k$,
since $|R_k|<p^r$.
In this case,
we define $A_{\leq i}'$ as the first $c$ rows of $A_{\leq i}$,
and it is easy to check that the row is of the form
\[
A_{\leq p-1}',\,k\cdot A_{\leq p-2}',\, k^2\cdot A_{\leq p-3}',\,
\cdots, \,k^{p-1}\cdot A_{\leq 0}'.
\]
Therefore the matrix $\generatingmatrix{d}_{S_\iterator}$ is of the form:

\begin{center}
\begin{tabular}{ c||c|c|c|c|c } 
& $\monomialle{p}{d}{r} $ &$x_{\sampledim-r}\cdot \monomialle{p}{d-1}{r}$ &$x_{\sampledim-r}^2\cdot \monomialle{p}{d-2}{r}$&$\cdots$ & $x_{\sampledim-r}^{p-1}\cdot  \monomialle{p}{d-p+1}{r}$\\
\hline
\hline
 $\{a_{\sampledim-r}=0\}$ & $A_{\leq p-1}$ & $0$ & $0$ & $\cdots$ & $0$  \\  
 \hline
   $\{a_{\sampledim-r}=1\}$ & $A_{\leq p-1}$ & $A_{\leq p-2}$ & $A_{\leq p-3}$ & $\cdots$ & $A_{\leq 0}$  \\   
 \hline
   $\{a_{\sampledim-r}=2\}$ & $A_{\leq p-1}$ & $2\cdot A_{\leq p-2}$ & $4\cdot A_{\leq p-3}$ & $\cdots$ & $2^{p-1}\cdot A_{\leq 0}$  \\   
 \hline
 \vdots & \vdots &\vdots & \vdots & \vdots &\vdots\\
 \hline
  $\{a_{\sampledim-r}=k-1\}$ & $ A_{\leq p-1}$ & $(k-1)\cdot A_{\leq p-2}$ & $(k-1)^2\cdot A_{\leq p-3}$ & $\cdots$ & $(k-1)^{p-1}\cdot A_{\leq 0}$\\
 \hline
  $\{a_{\sampledim-r}=k\}$ & $ A_{\leq p-1}'$ & $k\cdot A_{\leq p-2}'$ & $k^2\cdot A_{\leq p-3}'$ & $\cdots$ & $k^{p-1}\cdot A_{\leq 0}'$\\
 \hline
\end{tabular}
\end{center}

We have two important observations:
\begin{itemize}
\item $A_{\leq p-i-1}$ is the first $|\monomialle{p}{d-i-1}{r}|$ columns of $A_{\leq p-i}$, and
\item $A_{\leq i}'$ is the first $c$ rows of $A_{\leq i}$.
\end{itemize}
Therefore, we can apply Gaussian elimination to turn the matrix in to a block-diagonal matrix.
We do this in two steps.
The following algorithm first eliminates on columns to obtain a triangular matrix, based on the first observation.

\begin{algorithm}[H]\label{algorithm-column-elimination}
\caption{Triangular elimination}
Initialize $I=\{0,\cdots,k\}$, $u=0$.\\
For $i,j\in \{0,1,\cdots,k\}$,
set $b_{i,j}=i^j$.\\
\tcc{$b_{i,j}$ is the coefficient of each block at the beginning.}
\For{$u=0,\cdots,k$}{
\For{$v=u+1,\cdots,p-1$}{
\tcc{Subtract $b_{u,v}b_{u,u}^{-1}$ times the prefix of length $|\monomialle{p}{d-v}{r}|$ of  $u$-th column block from $v$-th column block. }
\For{$i\in \{0,1,\cdots,p-1\}$}{
$
b_{i,v}\leftarrow b_{i,v}-b_{u,v}b_{u,u}^{-1}b_{i,u}
$
}
}
}
\end{algorithm}

We have the following properties.
\begin{claim}\label{lem-elimination}
At the beginning of round $u$ of Algorithm \ref{algorithm-column-elimination}, 
for all $i\leq u$, we have
$b_{i ,i}\neq 0$,
and at the end of round $u$,
for all $i\leq u$, we have
$b_{i ,j}=0$ for $j=i+1,\cdots,p-1$.
\end{claim}

\begin{proof}
Let us consider the function $f^{(\rounditerator)}_j(i)$,
which denotes the coefficient of the $(i,j)$ block at the beginning of round $\rounditerator$.
We prove a strengthened claim: For $\rounditerator=0,\cdots,k+1$,
\begin{itemize}
\item $f^{(\rounditerator)}_j(i)$ is a monic degree $j$ polynomial on variable $i$,
\item $f^{(\rounditerator)}_i(i)\neq 0$ for $i\leq \rounditerator$, and
\item when $\rounditerator>0$,
$f^{(\rounditerator)}_j(i)=0$ for $i\leq \rounditerator-1$ and $j=i+1,\cdots,p-1$.
\end{itemize}

When $\rounditerator=0$,
we have $f^{(0)}_j(i)=i^j$ which is monic and has degree $j$.
Also,
$f^{(0)}_0(0)=1\neq 0$.
So the base case is true.

The update rule given by the algorithm says
\[
f^{(\rounditerator)}_j(i)=\begin{cases} f^{(\rounditerator-1)}_j(i)-\frac{f^{(\rounditerator-1)}_j({\rounditerator-1})}{f^{(\rounditerator-1)}_{\rounditerator-1}({\rounditerator-1})}\cdot f^{(\rounditerator-1)}_{\rounditerator-1}(i)&\mbox{if } j\geq \rounditerator\\
f^{(\rounditerator-1)}_j(i)& \mbox{if } j<\rounditerator
\end{cases}
\]
When $j<\rounditerator$, clearly $f^{(\rounditerator)}_j(i)$ is still a monic degree $j$ polynomial.
When $j\geq \rounditerator$,
by the induction hypothesis,
$f^{(\rounditerator-1)}_{\rounditerator-1}({\rounditerator-1})\neq 0$,
and the degree of $f^{(\rounditerator-1)}_{\rounditerator-1}(i)$ is strictly smaller than $j$.
Therefore we have that
$f^{(\rounditerator)}_j(i)$ is also a monic  degree $j$ polynomial.

For $i\leq t-2$,
by the induction hypothesis,
$f^{(t-1)}_j(i)=0$ for $j\geq i+1$.
Then both update rules give us $f^{(t)}_j(i)=0$ for $j\geq i+1$.
For the $i=t-1$ case,
 by Gaussian elimination,
$f^{(t)}_j(t-1)=0$ for all $j\geq t$.
Combining the two cases we have for $i\leq t-1$,
$f^{(t)}_j(i)=0$ for $j\geq i+1$.

Notice that
\[
f^{(\rounditerator)}_{\rounditerator}(0)=f^{(\rounditerator)}_{\rounditerator}(1)=\cdots=f^{(\rounditerator)}_{\rounditerator}({\rounditerator-1})=0.
\]
Since $f^{(\rounditerator)}_{\rounditerator}(i)$ is monic and has degree $\rounditerator$,
it can have at most $\rounditerator$ roots.
This implies that $\rounditerator$ cannot be its root,
hence $f^{(\rounditerator)}_{\rounditerator}(\rounditerator)\neq 0$.
\end{proof}

Using this claim we complete the proof of Lemma~\ref{def-g_d}: 
After the first step,
the matrix is in the form: 
\begin{center}
\scalebox{0.8}{%
\begin{tabular}{ c||c|c|c|c|c|c } 
& $\monomialle{p}{d}{r} $ &$x_{\sampledim-r}\cdot \monomialle{p}{d-1}{r}$ &$x_{\sampledim-r}^2\cdot \monomialle{p}{d-2}{r}$&$\cdots$ & $x_{\sampledim-r}^{k-1}\cdot  \monomialle{p}{d-k+1}{r}$ & $x_{\sampledim-r}^{k}\cdot  \monomialle{p}{d-k}{r}$  \\
\hline
\hline
 $\{a_{\sampledim-r}=0\}$ & $b_{0,0}A_{\leq p-1}$ & $0$ & $0$ & $\cdots$ & $0$ & $0$ \\  
 \hline
   $\{a_{\sampledim-r}=1\}$ & $b_{1,0}A_{\leq p-1}$ & $b_{1,1}A_{\leq p-2}$ & $0$ & $\cdots$ & $0$  & $0$\\   
 \hline
   $\{a_{\sampledim-r}=2\}$ & $b_{2,0}A_{\leq p-1}$ & $b_{2,1}A_{\leq p-2}$ & $ b_{2,2}A_{\leq p-3}$ & $\cdots$ & $0$ & $0$ \\   
 \hline
 \vdots & \vdots &\vdots & \vdots & \vdots &\vdots &\vdots \\
 \hline
  $\{a_{\sampledim-r}=k-1\}$ & $ b_{k-1,0}A_{\leq p-1}$ & $b_{k-1,1}A_{\leq p-2}$ & $b_{k-1,2}A_{\leq p-3}$ & $\cdots$ &$b_{k-1,k-1}A_{\leq p-k}$ & $0$\\
 \hline
  $\{a_{\sampledim-r}=k\}$ & $ b_{k,0}A_{\leq p-1}'$ & $b_{k,1}A_{\leq p-2}'$ & $b_{k,2}A_{\leq p-3}'$ & $\cdots$ & $b_{k,k-1}A_{\leq p-k}'$ & $ b_{k,k}A_{\leq p-k-1}'$\\
 \hline
\end{tabular}
}
\end{center}
with all other columns having value 0.
Now, the second observation says that $A_{\leq p-i}'$ is a submatrix of
$A_{\leq p-i}$.
Moreover,
we have $b_{i,i}\neq 0$ for $i=0,\cdots,k$.
So we can eliminate row by row
to get a diagonal matrix,
whose rank is very easy to compute.
The rank of $A_{\leq p-i}$ is simply $|\monomialle{p}{d-i+1}{r}|=g_{d-i+1}(p^r)$.
By the definition of the $g_d$ function,
the rank of $A_{\leq 0}'$ is just $g_{d-k}(c)$.
Hence
\[
g_d(\iterator)=g_d(k\cdot p^r+c)=\sum_{i=0}^{k-1}g_{d-i}(p^r)+g_{d-k}(c).\\[-3ex]
\]
\end{proof}

Intuitively,
for any set $S\subset p^\sampledim$ of size $\iterator$,
the rank of $\generatingmatrix{d}_S$ will be larger than that of $\generatingmatrix{d}_{S_\iterator}$,
since $S_\iterator$ is the most compact way to arrange $\iterator$ rows.
Formally stated, this is the following restatement of 
Theorem~\ref{thm:main-extremal}:

\begin{theorem}\label{rank-result}
For arbitrary $S\subseteq p^\sampledim$ with $|S|=\iterator$ and for any $d$,
\[
rank(\generatingmatrix{d}_S)\geq rank(\generatingmatrix{d}_{S_\iterator})=g_d(\iterator).
\]
\end{theorem}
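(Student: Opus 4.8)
The plan is to prove the inequality $rank(\generatingmatrix{d}_S)\geq g_d(|S|)$ by induction on $\sampledim$ (the number of variables), and for fixed $\sampledim$, by a second induction on $d$ and on $|S|$. The intuition is a compression argument: for any $S$, partition the rows according to the value of the first coordinate $a_1$, obtaining $p$ ``slices'' $S^{(0)},\dots,S^{(p-1)}\subseteq \F_p^{\sampledim-1}$ with $\sum_{t=0}^{p-1}|S^{(t)}|=|S|$. The submatrix $\generatingmatrix{d}_S$ then has the block structure exhibited in the proof of Lemma~\ref{def-g_d}: grouping the monomials by their degree $\ell$ in $x_1$, the slice $S^{(t)}$ contributes the row-block $\bigl(t^0\cdot A^{(t)}_{\le p-1},\, t^1\cdot A^{(t)}_{\le p-2},\dots,t^{p-1}\cdot A^{(t)}_{\le 0}\bigr)$, where $A^{(t)}_{\le i}$ is the evaluation matrix of $\monomialle{p}{d-p+1+i}{\sampledim-1}$ on the points of $S^{(t)}$. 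So the rank of $\generatingmatrix{d}_S$ is governed entirely by the $(\sampledim-1)$-variable evaluation matrices on the slices $S^{(t)}$, to which the inductive hypothesis applies.

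The key steps, in order. First, I would set up the base case $\sampledim=0$ (or $\sampledim$ such that $p^\sampledim\le$ some small bound) and the trivial cases $d<0$, $|S|=0$, where both sides vanish, together with the case $|S|=p^\sampledim$ where $rank(\generatingmatrix{d}_S)=|\monomial{p}{d}{\sampledim}|=g_d(p^\sampledim)$ by Lemma~\ref{thm-g_d-base}. Second, I would prove the slice decomposition: writing $m_t=|S^{(t)}|$, Gaussian elimination down the ``Vandermonde-in-$t$'' column structure (exactly the argument of Claim~\ref{lem-elimination}, which only uses that the $A^{(t)}_{\le i}$ are nested column-prefixes of each other and that a monic degree-$\ell$ polynomial in $t$ cannot vanish at too many points) shows $rank(\generatingmatrix{d}_S)\ge \sum$ of ranks of a diagonal family of blocks; the $\ell$-th diagonal block is the evaluation of $\monomialle{p}{d-\ell}{\sampledim-1}$ on some subset of size $m_\ell$ of the $\ell$-th slice. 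Hence $rank(\generatingmatrix{d}_S)\ge \sum_{\ell=0}^{p-1} rank(\generatingmatrix{d-\ell}_{S'_\ell})$ for appropriate $S'_\ell$ of size $m_\ell$, and by the inductive hypothesis this is $\ge \sum_{\ell=0}^{p-1} g_{d-\ell}(m_\ell)$. Third — and this is the combinatorial heart — I would prove that $g_d(|S|)\le \sum_{\ell=0}^{p-1} g_{d-\ell}(m_\ell)$ whenever $m_0+\dots+m_{p-1}=|S|$; i.e.\ that the extremal configuration $S_{|S|}$ (which itself splits into slices of sizes $p^{\sampledim-1},\dots,p^{\sampledim-1},c,0,\dots,0$ when $|S|=k\cdot p^{\sampledim-1}+c$, matching the recursion in Lemma~\ref{def-g_d}) is the one minimizing $\sum_\ell g_{d-\ell}(m_\ell)$ over all admissible slicings. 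Equivalently, this is a \emph{sub-additivity/super-additivity} statement about the functions $g_d$: it reduces to showing something like $g_d(a+b)\le g_d(a)+g_{d-?}(b)$ with the right index bookkeeping, and more generally that redistributing mass toward ``full'' slices of size $p^{\sampledim-1}$ can only decrease the weighted sum. Combining the second and third steps closes the induction.

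The main obstacle will be the third step: establishing that $g_d(k p^r+c)=\sum_{i=0}^{k-1}g_{d-i}(p^r)+g_{d-k}(c)$ is genuinely the minimum of $\sum_{\ell=0}^{p-1}g_{d-\ell}(m_\ell)$ over $\sum_\ell m_\ell = k p^r + c$. This is precisely the ``sub-additivity of $g_d$'' that the introduction flags as ``an order of magnitude more involved'' than the $\F_2$ case, because one must compare a sum over $p$ terms rather than two, and the functions $g_{d-\ell}$ change with $\ell$. I expect this will need the structural properties of $g_d$ developed in Section~\ref{sec:g-d-property} — monotonicity, concavity-type inequalities, and the behaviour of $g_d$ under the base-$p$ expansion of its argument — together with an exchange/rearrangement argument showing that any non-extremal slicing can be improved by moving a unit of mass from a less-full to a more-full slice (respecting the ordering of the indices $d,d-1,\dots,d-p+1$). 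Everything else (the slice decomposition and the Gaussian-elimination bookkeeping) is a routine adaptation of Lemma~\ref{def-g_d} and Claim~\ref{lem-elimination}; the entire difficulty is packaged into proving the sub-additivity inequality, which is why the subsequent sections are devoted to it.
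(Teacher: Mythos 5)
Your overall architecture is the same as the paper's: slice $S$ along a coordinate, use the nested column-prefix/Vandermonde block structure to triangularize by Gaussian elimination, lower bound the rank by a sum of ranks of $(\sampledim-1)$-variable evaluation matrices on the slices, apply induction to those, and reduce everything to sub-additivity of $g_d$. The genuine gap is in how you pair slices with degree blocks. In your step 2 you run the elimination in the natural order of coordinate values (``exactly the argument of Claim~\ref{lem-elimination}''), so the slice with coordinate value $\ell$, of size $m_\ell$, gets paired with the column block $x_1^\ell\cdot\monomialle{p}{d-\ell}{\sampledim-1}$, and your step 3 then needs $g_d(m_0+\cdots+m_{p-1})\le \sum_{\ell=0}^{p-1}g_{d-\ell}(m_\ell)$ for \emph{arbitrary} slice sizes $(m_0,\dots,m_{p-1})$. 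That unordered inequality is false. For $p=3$, $d=1$ and $(m_0,m_1,m_2)=(1,3,0)$ one has $g_1(1)+g_0(3)+g_{-1}(0)=1+1+0=2$, while $g_1(4)=g_1(3)+g_0(1)=3$; concretely, for $S=\{(0,0),(1,0),(1,1),(1,2)\}\subseteq\F_3^2$ the natural-order triangularization only certifies rank $\ge 2$, although the theorem demands (and the matrix has) rank $3$. The direction of the failure is forced by Corollary~\ref{prop-lifting}: since the increments of $g_d$ grow with $d$, matching larger arguments with larger degrees \emph{maximizes} $\sum_\ell g_{d-\ell}(m_{\pi(\ell)})$ over permutations $\pi$, so the sorted sub-additivity of Lemma~\ref{lem-main} (which assumes $a_0\ge a_1\ge\cdots\ge a_{p-1}$) does not imply the unsorted bound you invoke, and your claim that the lexicographic slicing minimizes $\sum_\ell g_{d-\ell}(m_\ell)$ over all admissible slicings is likewise false.

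The fix is exactly what the paper does, and what your parenthetical ``respecting the ordering of the indices'' gestures at but never carries out: perform the elimination choosing pivot row-blocks in decreasing order of slice size (Algorithm~\ref{algorithm-elimination}, Claim~\ref{lem-elimination-new}), so that the $i$-th largest slice ends up on the diagonal against $\monomialle{p}{d-i}{\cdot}$; the pivots are still nonzero by the same monic-polynomial argument as in Claim~\ref{lem-elimination}, but this reordering is a genuine modification, not a verbatim reuse, and it is precisely what lets the induction hypothesis combine with the \emph{sorted} Lemma~\ref{lem-main} to give $rank(\generatingmatrix{d}_S)\ge\sum_i g_{d-i}(|S_{\sigma_i}|)\ge g_d(\iterator)$. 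Your other deviations are harmless: inducting on $\sampledim$ (slicing always on $x_1$) instead of on $|S|$ (slicing on the first coordinate where points of $S$ differ) works equally well, and leaving the sub-additivity itself to the later sections matches the paper, which also proves Theorem~\ref{rank-result} assuming Lemma~\ref{lem-main}.
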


Note that $g_d(\iterator)$ does not depend on $\sampledim$.
Indeed, all we need of $\sampledim$ is that $\iterator \le p^\sampledim$ so that the matrix
$\generatingmatrix{d}$ has at least $m$ rows.

Before we actually prove Theorem~\ref{rank-result},
we first argue that this is all we need to prove Lemma \ref{prop:evalutation-rank-new}.
Indeed,
we can simply set $\iterator=p^r$;
then with Lemma~\ref{thm-g_d-base},
we just have when $|S|=p^r$,
\[
rank(\generatingmatrix{d}_S)\geq g_d(p^r)=|\monomial{p}{d}{r}|.
\]

It turns out that in order to prove Theorem \ref{rank-result},
it is sufficient to have the following sub-additivity property of the $g_d$ function.
\begin{lemma}\label{lem-main}
For $a_0\geq a_1\geq \cdots \geq a_{p-1}\geq 0$,
for any $d$,
\[
g_d(\sum_{i=0}^{p-1}a_i)\leq \sum_{i=0}^{p-1}g_{d-i}(a_i).
\]
\end{lemma}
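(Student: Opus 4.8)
The plan is to prove this by induction on $d$ together with induction on $\sum_i a_i$, exploiting the recursion for $g_d$ from Lemma~\ref{def-g_d}. The base cases $d<0$ and $\sum_i a_i = 0$ are trivial since $g_d\equiv 0$ there. For the inductive step, write $a = \sum_{i=0}^{p-1} a_i$ and let $r$ be the unique integer with $p^r \le a < p^{r+1}$, so $a = k\cdot p^r + c$ with $1 \le k \le p-1$ and $0 \le c < p^r$. I want to compare $g_d(a) = \sum_{i=0}^{k-1} g_{d-i}(p^r) + g_{d-k}(c)$ against $\sum_{i=0}^{p-1} g_{d-i}(a_i)$. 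The natural strategy is to decompose each $a_i$ relative to the scale $p^r$: since $a_i \le a < p^{r+1}$, each $a_i$ is either $<p^r$ or lies in $[p^r, p^{r+1})$. Because $\sum a_i = k p^r + c$ with $k\le p-1$, at most $k$ of the $a_i$ can be $\ge p^r$ (in fact fewer, once the remainders are accounted for), and these are exactly $a_0, \dots, a_{j-1}$ for some $j\le k$ by the monotonicity hypothesis $a_0 \ge \dots \ge a_{p-1}$.

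The key step is to peel off one full block of size $p^r$ at a time. I would like to show that
\[
g_d(a) \;\le\; g_d(p^r) \;+\; g_{d-1}\!\Big(a - p^r\Big)
\]
(using $g_d(p^r) = g_d(p^r)$ and that $a - p^r$ still has a representation to which induction applies with the shifted degree $d-1$), and more generally iterate this $k$ times. Dually, on the right-hand side I would use the monotonicity to argue that subtracting $p^r$ from the largest term $a_0$ and decrementing its degree index is "cheap": concretely, $g_{d-i}(a_i)$ for the top terms can each absorb a $p^r$-chunk at the cost of a degree decrement, matching the structure of the left side. Combining these, after reindexing, the inequality $g_d(a) \le \sum_i g_{d-i}(a_i)$ reduces to an instance with smaller total, at which point the induction hypothesis closes the argument. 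Throughout I will need the monotonicity properties of $g_d$ — that $g_d(m)$ is non-decreasing in $m$ and in $d$, and that $g_d(m+1) - g_d(m) \le g_{d-1}(m'+1) - g_{d-1}(m')$ type "concavity/consistency" facts — which are presumably the "variety of properties of $g_d$" the paper establishes in Section~\ref{sec:g-d-property}; I would invoke those as needed.

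The main obstacle, and the reason the paper flags this as "an order of magnitude more involved" than the $\mathbb{F}_2$ case, is handling the interaction of the $p$ different remainders $a_i \bmod p^r$ simultaneously: when we carry the sum $\sum_i a_i$ into the form $k p^r + c$, the low-order parts $c_i = a_i \bmod p^r$ can combine with carries, so the clean "peel one block" picture has to be reconciled with how the $c_i$'s aggregate into $c$ (possibly with an extra carry that bumps some $a_i$ from the "$<p^r$" class conceptually). Getting the bookkeeping right — tracking which degree index each peeled $p^r$-block is charged to, ensuring monotonicity $a_0 \ge \dots \ge a_{p-1}$ is preserved under the reductions, and verifying that the degree-shifted sub-instances still satisfy the hypotheses of the induction — is where essentially all the work lies. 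I expect the cleanest route is a careful double induction where the inner induction strips the highest power $p^r$ from both sides in lockstep, reducing to either a smaller $r$ or a smaller $k$, and only the very last layer (the remainders $c$ versus $\sum c_i$) requires a genuinely new estimate, likely itself proved by a further appeal to sub-additivity at scale $p^{r-1}$.
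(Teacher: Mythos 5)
Your plan correctly identifies the easy half of the inductive step but leaves the hard half unproved, and that hard half is where essentially all of the content of Lemma~\ref{lem-main} lies. Concretely: when $a_0\ge p^r$ (with $p^r\le \sum_i a_i<p^{r+1}$), your peeling step can indeed be made rigorous --- by Lemma~\ref{def-g_d} one has the exact identities $g_d(\sum_i a_i)=g_d(p^r)+g_{d-1}(\sum_i a_i-p^r)$ and $g_d(a_0)=g_d(p^r)+g_{d-1}(a_0-p^r)$, and after re-sorting $(a_0-p^r,a_1,\dots,a_{p-1})$ every entry ends up paired with a degree index no larger than before, so monotonicity of $g_d$ in $d$ (Corollary~\ref{prop-monotone}) plus the induction hypothesis at the smaller total closes this case. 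But when $a_0<p^r\le\sum_i a_i$ --- i.e.\ the leading $p^r$-block of the sum is created entirely by carries among the low-order parts, so no single coordinate can absorb a $p^r$-block --- the peeling picture breaks down, and your proposal only gestures at ``a genuinely new estimate, likely itself proved by a further appeal to sub-additivity at scale $p^{r-1}$.'' That is precisely the missing step, and it cannot be waved through: the inequality is an equality on the packed configurations $V_*$, so any argument for the carry case must not give away slack, and neither plain monotonicity nor generic increment bounds suffice.

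The paper resolves exactly this difficulty with a different induction: it orders the vectors $\vec a$ first by $\|\vec a\|_1$ and, among equal sums, reverse-lexicographically; it verifies equality directly on the extremal class $V_*$; and it shows every other vector can be replaced by a strictly smaller one with the same sum and no larger value $v_d(\vec a)$ via three sum-preserving transformations --- singularization (Proposition~\ref{prop-regroup}), transposing (Proposition~\ref{prop-transpose}), and repacking (Proposition~\ref{prop-repack}) --- whose verification requires the increment-comparison facts Proposition~\ref{prop-main} and Corollary~\ref{prop-lifting} together with the induction hypothesis applied to auxiliary vectors of smaller total. Nothing of this sort appears in your sketch, so as it stands the proposal is a workable plan for the case $a_0\ge p^r$ together with an acknowledgement of, but no solution to, the carry case; it is not yet a proof.
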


\begin{proof}[Proof of Theorem~\ref{rank-result} from Lemma \ref{lem-main}]
We use induction on the size of $|S|$. When $|S|=1$,
$rank(\generatingmatrix{d}_S)=1$,
while  $g_d(1)=1$,
so the base case is true.

Assume that we have proved that for all $S'$ with $|S'|\leq |S|$ and all degrees $d$,
$rank(\generatingmatrix{d}_{S'})\geq g_d(|S'|)$.
Let $t\in [\sampledim]$ be the smallest integer so that $\exists x,y\in S, x_t\neq y_t$.
For $i\in \{0,1,\cdots,p-1\}$,
define $S_i=\{x\in S:x_t=i\}$.
For $j=1,\cdots,p$,
let  $A_{\leq p-j}^{(i)}$ be the submatrix of $M$ that takes $S_i$ as rows and $\monomialle{p}{d+1-j}{p-t}$ as columns.
We claim that we only need to study the following matrix:
\begin{center}
\begin{tabular}{ c|c|c|c|c|c } 
& $\monomialle{p}{d}{\sampledim-t} $ &$x_{t}\cdot \monomialle{p}{d-1}{\sampledim-t}$ &$x_{t}^2\cdot \monomialle{p}{d-2}{\sampledim-t}$&$\cdots$ & $x_{t}^{p-1}\cdot  \monomialle{p}{d-p+1}{\sampledim-t}$\\
\hline
 $S_0$ & $A_{\leq p-1}^{(0)}$ & $0$ & $0$ & $\cdots$ & $0$  \\  
 \hline
   $S_1$ & $A_{\leq p-1}^{(1)}$ & $A_{\leq p-2}^{(1)}$ & $A_{\leq p-3}^{(1)}$ & $\cdots$ & $A_{\leq 0}^{(1)}$  \\   
 \hline
   $S_2$ & $A_{\leq p-1}^{(2)}$ & $2\cdot A_{\leq p-2}^{(2)}$ & $4\cdot A_{\leq p-3}^{(2)}$ & $\cdots$ & $2^{p-1}\cdot A_{\leq 0}^{(2)}$  \\   
 \hline
 \vdots & \vdots &\vdots & \vdots & \vdots &\vdots\\
 \hline
  $S_{p-2}$ & $ A_{\leq p-1}^{(p-2)}$ & $(p-2)\cdot A_{\leq p-2}^{(p-2)}$ & $(p-2)^2\cdot A_{\leq p-3}^{(p-2)}$ & $\cdots$ & $(p-2)^{p-1}\cdot A_{\leq 0}^{(p-2)}$\\
 \hline
  $S_{p-1}$ & $ A_{\leq p-1}^{(p-1)}$ & $(p-1)\cdot A_{\leq p-2}^{(p-1)}$ & $(p-1)^2\cdot A_{\leq p-3}^{(p-1)}$ & $\cdots$ & $(p-1)^{p-1}\cdot A_{\leq 0}^{(p-1)}$\\
 \hline
\end{tabular}
\end{center}
This is because all the other columns can be spanned by this matrix.
Indeed, consider a monomial $x=y\cdot z$ where $y$ is a monomial over $x_1,\cdots,x_{k-1}$ and $z$ is over $x_k,\cdots,x_\sampledim$.
Since for all $a\in S$, $a_i$ is fixed for $i=1,\cdots,k-1$,
the column for $x$ is just the column for $z$ times some constant.

We also have one important observation about this matrix.
\begin{itemize}
\item For all $i,j$, $A_{\leq p-j-1}^{(i)}$ is the first $|\monomialle{p}{d-j-1}{\sampledim-t}|$ columns of $A_{\leq p-j}^{(i)}$
\end{itemize}
so we can again use Gaussian elimination.
Although we may not be able to get a diagonal matrix because we do not know the relationship between the $S_i$,
we can carefully modify algorithm \ref{algorithm-column-elimination} to obtain a triangular matrix:

\begin{algorithm}[H]\label{algorithm-elimination}
\caption{Triangular elimination revised}
Initialize $I=\{0,\cdots,p-1\}$, $u=0$.\\
For $i,j\in \{0,1,\cdots,p-1\}$,
set $b_{i,j}=i^j$.\\
\tcc{$b_{i,j}$ is the coefficient of each block at the beginning.}
\While{$I\neq \emptyset$}
{
$i_0=\arg\max_{i\in I}|S_i|$.\\
\For{$v=u+1,\cdots,p-1$}{
\tcc{Subtract $b_{i_0,v}b_{i_0,u}^{-1}$ times $u$-th column block from $v$-th column block }
\For{$i\in \{0,1,\cdots,p-1\}$}{
$
b_{iv}\leftarrow b_{i,v}-b_{i_0,v}b_{i_0,u}^{-1}b_{i,u}
$
}
}
$I\leftarrow I-\{i_0\}$.\\
$u\leftarrow u+1$.
}
\end{algorithm}

Let $(\sigma_0,\sigma_1,\cdots,\sigma_{p-1})$ be the order of indices that 
Algorithm~\ref{algorithm-elimination} uses.
Then we have the following properties.
\begin{claim}\label{lem-elimination-new}
At the end of Algorithm 2 we have
\begin{itemize}
\item $b_{\sigma_i ,j}=0$ for $j=i+1,\cdots,p-1$ and
\item $b_{\sigma_i ,i}\neq 0$.
\end{itemize}
\end{claim}

The proof of Claim~\ref{lem-elimination-new} is very similar to that of Claim~\ref{lem-elimination} and we omit it here.
After the elimination,
the matrix is of the form:
\begin{center}
\begin{tabular}{ c|c|c|c|c|c } 
& $\monomialle{p}{d}{\sampledim-t} $ &$x_{t}\cdot \monomialle{p}{d-1}{\sampledim-t}$ &$x_{t}^2\cdot \monomialle{p}{d-2}{\sampledim-t}$&$\cdots$ & $x_{t}^{p-1}\cdot  \monomialle{p}{d-p+1}{\sampledim-t}$\\
\hline
 $S_{\sigma_0}$ & $b_{\sigma_0,0}\cdot A_{\leq p-1}^{(\sigma_0)}$ & $0$ & $0$ & $\cdots$ & $0$  \\  
 \hline
   $S_{\sigma_1}$ &  & $b_{\sigma_1,1}\cdot A_{\leq p-2}^{(\sigma_1)}$ & 0 & $\cdots$ & 0  \\   
 \hline
   $S_{\sigma_2}$ &  & & $b_{\sigma_2,2}\cdot A_{\leq p-3}^{(\sigma_2)}$ & $\cdots$ & 0  \\   
 \hline
 \vdots & \vdots &\vdots & \vdots & \vdots &\vdots\\
 \hline

  $S_{\sigma_{p-1}}$ &&  &  & $\cdots$ & $b_{\sigma_{p-1},(p-1)}\cdot A_{\leq 0}^{(\sigma_{p-1})}$\\
 \hline
\end{tabular}
\end{center}
Since it is a triangular matrix,
we can lower bound its rank as
\[
\sum_{i=0}^{p-1}rank(A_{\leq p-1-i}^{(\sigma_i)}).
\]
Recall that  columns of $A_{\leq p-1-i}^{(\sigma_i)}$ are $\monomialle{p}{d-i}{\sampledim-t}$,
so $A_{\leq p-1-i}^{(\sigma_i)}$ is actually $M^{(d-i)}_{S_{\sigma_i}}$.
Since $|S_{\sigma_i}|<|S|$,
 we can apply the induction hypothesis to get
\[
rank(M^{(d-i)}_{S_{\sigma_i}})\geq g_{d-i}(|S_{\sigma_i}|).
\]
Now we can apply Lemma \ref{lem-main} to get
\[
rank(M^{(d)}_S)\geq \sum_{i=0}^{p-1}g_{d-i}(|S_{\sigma_i}|)\geq g_d(\sum_{i=0}^{p-1}|S_{\sigma_i}|)=g_d(\iterator)
\]
as required.
\end{proof}

Therefore to complete our proof of the extremal rank properties of these matrices
in Theorem~\ref{rank-result}, and hence Theorem{thm:main-extremal} and Lemma~\ref{prop:evaluation-rank}, it only remains to prove the sub-additivity property of Lemma~\ref{lem-main} for the $g_d$ function.

\section{Properties of the $g_d$ function}
\label{sec:g-d-property}

In this section, we generalized the $g_d$ function and make some observations about its properties.  We then give the proof of Lemma~\ref{lem-main} in the next section.

From Theorem \ref{def-g_d},
we can see that it is very easy to compute $g_d(\iterator)$ if we write $\iterator$ in base $p$.
It will be convenient to consider a more general class of functions in other bases.
\begin{defn}\label{def-g_d,q}
Fix $d\in \mathbb{Z}$ and $q\in \mathbb{Z}_{>0}$.
Let $\func{d}{q}:\mathbb{Z}\rightarrow \mathbb{Z}$ be a function defined as
$\func{q}{d}(\iterator)=0$ when $d<0$ or $\iterator=0$,
$\func{q}{d}(1)=1$ when $d\geq 0$,
$\func{d}{q}(q^r)=\sum_{i=0}^{q-1}\func{d-i}{q}(q^{r-1})$,
and for other cases,
let $r$ be the largest integer so that $\iterator\geq q^r$,
then
\[
\func{d}{q}(\iterator)=\func{d}{q}(q^r)+\func{d-1}{q}(\iterator-q^r)
\]
\end{defn}

We can verify that $g_d(\iterator)$ defined in last section is the same as $\func{d}{p}(\iterator)$.
\begin{proposition}\label{prop-expand}
$
g_d(\iterator)=\func{d}{p}(\iterator).
$
\end{proposition}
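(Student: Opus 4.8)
The plan is to prove Proposition~\ref{prop-expand} by a direct induction matching the recursive definitions of $g_d$ (from Lemma~\ref{def-g_d} together with the base cases set in Section~\ref{sec:reed-muller}) against the definition of $\func{d}{p}$ in Definition~\ref{def-g_d,q}. First I would observe that both functions are defined by the same kind of recursion on $m$ with $d$ decreasing, so the natural induction is on the pair $(\iterator, d)$, ordered so that the recursive calls always refer to strictly smaller instances: either $\iterator$ strictly decreases, or $\iterator$ stays the same while $d$ decreases (and once $d<0$ or $\iterator = 0$ we are in a base case).

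The base cases are immediate: when $d<0$ or $\iterator = 0$, both $g_d(\iterator)$ and $\func{d}{p}(\iterator)$ are defined to be $0$; when $d\ge 0$ and $\iterator = 1$, we have $g_d(1) = 1$ (the rank of a single nonzero row of $\generatingmatrix{d}$, noting $\monomial{p}{d}{0}$ contains just the constant monomial) and $\func{d}{p}(1) = 1$ by definition. For the inductive step with $\iterator \ge 2$ and $d\ge 0$, let $r$ be the unique integer with $p^r \le \iterator < p^{r+1}$. I would split into two cases according to whether $\iterator$ is a power of $p$. If $\iterator = p^r$ with $r\ge 1$, Lemma~\ref{def-g_d} (the special case $c=0$, $k=1$ is not quite what we want; rather we use the form $g_d(p^r) = \sum_{i=0}^{p-1} g_{d-i}(p^{r-1})$, which is the $\iterator = p\cdot p^{r-1}$ instance of the special case with $k=p$) gives $g_d(p^r) = \sum_{i=0}^{p-1} g_{d-i}(p^{r-1})$; each summand has argument $p^{r-1} < \iterator$, so by the induction hypothesis equals $\func{d-i}{p}(p^{r-1})$, and the sum is exactly $\func{d}{p}(p^r)$ by Definition~\ref{def-g_d,q}. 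If $\iterator$ is not a power of $p$, write $\iterator = p^r + c'$ with $1 \le c' < \iterator$; then Lemma~\ref{def-g_d} (with $k=1$, $c = c'$, since $p^r \le \iterator < 2p^r \le p^{r+1}$... wait, this only covers $\iterator < 2p^r$) — here I need to be slightly careful, because Lemma~\ref{def-g_d} is stated with general $k = \lfloor \iterator/p^r\rfloor$, giving $g_d(\iterator) = \sum_{i=0}^{k-1} g_{d-i}(p^r) + g_{d-k}(c)$, whereas Definition~\ref{def-g_d,q} peels off only one copy of $p^r$ at a time. So I would instead verify that the two recursions are equivalent by a short sub-induction on $k$: applying Definition~\ref{def-g_d,q} repeatedly, $\func{d}{p}(\iterator) = \func{d}{p}(p^r) + \func{d-1}{p}(\iterator - p^r)$, and $\iterator - p^r$ still satisfies $p^r \le \iterator - p^r < p^{r+1}$ when $k\ge 2$, so we may unfold again; after $k$ unfoldings we reach $\func{d}{p}(\iterator) = \sum_{i=0}^{k-1}\func{d-i}{p}(p^r) + \func{d-k}{p}(c)$, which matches Lemma~\ref{def-g_d} term by term once we apply the induction hypothesis to each $\func{d-i}{p}(p^r) = g_{d-i}(p^r)$ (valid since $p^r \le \iterator$, with equality only when $c = 0$ and $k=1$, a case already handled) and to $\func{d-k}{p}(c) = g_{d-k}(c)$ (valid since $c < \iterator$).

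I expect the main obstacle to be purely bookkeeping: making sure the unfolding of Definition~\ref{def-g_d,q} genuinely reproduces the closed-form recursion of Lemma~\ref{def-g_d}, i.e. that peeling off one $p^r$ at a time $k$ times yields the same expression as peeling off $k$ copies at once, and that the well-foundedness of the induction is respected at the boundary cases (in particular when $\iterator = p^r$ exactly, or when $c = 0$, where one must be careful not to invoke the induction hypothesis on an instance that is not strictly smaller). Once the equivalence of the recursions is pinned down, the identification $g_d = \func{d}{p}$ follows immediately. Since this is a routine verification that two recursively-defined integer sequences coincide, I would keep the write-up short and essentially just remark that the defining recursions and base cases agree.
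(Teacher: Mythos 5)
Your overall strategy (show the base cases agree and that $g_d$ obeys the same recursion as $\func{d}{p}$, then conclude by induction/uniqueness) is the same as the paper's, and your handling of the non-power case is fine: unfolding Definition~\ref{def-g_d,q} one $p^r$ at a time $k$ times does reproduce the formula of Lemma~\ref{def-g_d}, and all the instances you invoke there have strictly smaller first argument. But there is a genuine gap at the power-of-$p$ step. The identity $g_d(p^r)=\sum_{i=0}^{p-1}g_{d-i}(p^{r-1})$ is \emph{not} an instance of Lemma~\ref{def-g_d}: that lemma fixes $r$ as the unique integer with $p^r\le \iterator<p^{r+1}$, which forces the leading digit $k$ to satisfy $1\le k\le p-1$, so there is no ``$k=p$'' case of its special form $g_d(k\cdot p^r)=\sum_{i=0}^{k-1}g_{d-i}(p^r)$; indeed the lemma's proof indexes the block rows by the value of $a_{\sampledim-r}\in\{0,\dots,k\}\subseteq\{0,\dots,p-1\}$ and cannot accommodate $k=p$. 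Applied to $\iterator=p^r$ itself, Lemma~\ref{def-g_d} only returns the tautology $g_d(p^r)=g_d(p^r)$, so it gives no information at powers of $p$ — and this is exactly the nontrivial clause of Definition~\ref{def-g_d,q} that has to be verified for $g_d$.

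The fix is what the paper does: use Lemma~\ref{thm-g_d-base}, $g_d(p^r)=|\monomial{p}{d}{r}|$, and split the monomials of degree at most $d$ in $r$ variables according to their degree $i\in\{0,\dots,p-1\}$ in the first variable (individual degrees are at most $p-1$), giving
$|\monomial{p}{d}{r}|=\sum_{i=0}^{p-1}|\monomial{p}{d-i}{r-1}|=\sum_{i=0}^{p-1}g_{d-i}(p^{r-1})$.
With that one-line combinatorial argument inserted in place of your appeal to a ``$k=p$'' case of Lemma~\ref{def-g_d}, your induction goes through and coincides with the paper's proof (which phrases the conclusion as: $g_d$ satisfies every defining clause of $\func{d}{p}$, and Definition~\ref{def-g_d,q} determines a unique function).
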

\begin{proof}
When $d<0$ or $\iterator=0$,
by definition of $g_d$, we have
$g_d(\iterator)=0$.
Also,
when $d\geq 0$,
by Theorem~\ref{thm-g_d-base}, we have
$g_d(1)=|\monomial{p}{d}{0}|=1$.

Next, we claim that $g_d(p^r)=\sum_{i=0}^{p-1}g_{d-i}(p^{r-1})$.
Since $g_{d}(p^r)=|\monomial{p}{d}{r}|$,
we can split all monomials over $x_1,\cdots,x_r$ of degree at most $d$ by their degree in $x_1$.
That is,
\[
\monomial{q}{d}{r}=\bigcup_{i=0}^{q-1} x_1^i\cdot \monomial{q}{d-i}{r-1},
\]
which precisely yields the claim.

Finally,
for input $\iterator$ not of the above forms,
let $r$ be the largest integer so that $\iterator\geq q^r$,
then Theorem \ref{def-g_d} implies that
\[
g_d(\iterator)-g_{d-1}(\iterator-p^r)=g_d(p^r).
\]
So, $g_d(\iterator)$ satisfies all the requirements for $\func{d}{p}(\iterator)$.
Notice that for fixed $p$,
Definition \ref{def-g_d,q} uniquely determines a function over $\mathbb{Z}$,
so we have $g_d(\iterator)=\func{d}{p}(\iterator)$.
\end{proof}

We prove the following important property of the $\func{d}{q}$ function that
we will repeatedly apply later.

\begin{proposition}\label{prop-main}
For $d>d'$ and $q,a> 0$,
\[
\func{d}{q}(a)-\func{d}{q}(a-1)\geq \func{d'}{q}(a)-\func{d'}{q}(a-1).
\]
\end{proposition}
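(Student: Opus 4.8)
The plan is to reduce the proposition to an explicit combinatorial formula for $\func{d}{q}$. For $m\ge 0$ and $d\in\mathbb{Z}$ set
\[
N(d,m):=\#\bigl\{(e_0,e_1,e_2,\ldots):\ e_i\in\{0,1,\ldots,q-1\}\ \text{for all }i,\ \ {\textstyle\sum_i e_i\le d\ \text{and}\ \sum_i e_i q^i<m}\bigr\},
\]
where the tuples are finitely supported (indeed $\sum_i e_i q^i<m$ forces $e_i=0$ whenever $q^i\ge m$). I claim $N(d,m)=\func{d}{q}(m)$. By the same uniqueness observation used in the proof of Proposition~\ref{prop-expand}, Definition~\ref{def-g_d,q} determines $\func{d}{q}$ uniquely, so it suffices to check that $N$ satisfies the defining relations. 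The first ones are routine: $N(d,m)=0$ for $d<0$ (a sum of non-negative terms is not negative) or $m=0$; $N(d,1)=1$ for $d\ge 0$ (only the zero tuple qualifies); and, for $r\ge 1$, $N(d,q^r)=\sum_{i=0}^{q-1}N(d-i,q^{r-1})$ by conditioning on the digit $e_{r-1}$ (here $\sum_i e_i q^i<q^r$ forces $e_i=0$ for $i\ge r$). For the remaining relation, take $q^r\le m<q^{r+1}$ and split the tuples counted by $N(d,m)$ on whether $e_r=0$ or $e_r\ge 1$. If $e_r=0$, the constraint $\sum_i e_i q^i<m$ is vacuous since that sum is already $<q^r\le m$, contributing $N(d,q^r)$. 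If $e_r\ge 1$, write $e_r=e_r'+1$ with $e_r'\ge 0$; then $\sum_i e_i q^i<m-q^r<(q-1)q^r$ forces $e_r'\le q-2$ automatically, so this part contributes exactly $N(d-1,m-q^r)$. Hence $N(d,m)=N(d,q^r)+N(d-1,m-q^r)$, completing the verification.

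Granting the formula, the proposition falls out. Indeed $\func{d}{q}(a)-\func{d}{q}(a-1)=N(d,a)-N(d,a-1)$ is the number of tuples $(e_i)$ with $e_i\in\{0,\ldots,q-1\}$, $\sum_i e_i\le d$ and $\sum_i e_i q^i=a-1$; but there is exactly one tuple with $\sum_i e_i q^i=a-1$, namely the base-$q$ expansion of $a-1$, so this difference equals $1$ if the base-$q$ digit sum $s_q(a-1)$ is at most $d$ and $0$ otherwise. Consequently, for $d>d'$: if $s_q(a-1)\le d'$ then also $s_q(a-1)\le d$ and both sides of the claimed inequality equal $1$; otherwise the right-hand side is $0$, which is trivially a lower bound. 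This is Proposition~\ref{prop-main}.

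The one step needing care is the last relation in the verification of $N$: one must notice that the upper digit bound ($e_r\le q-1$, and $e_r'\le q-2$ after the shift) is never the binding constraint, so that each piece of the split reassembles the full count $N(d,q^r)$ or $N(d-1,m-q^r)$; everything else is bookkeeping. (For $q=p$ prime one could instead argue directly from the rank description of Section~\ref{sec:reed-muller}: $\generatingmatrix{d}_{S_a}$ has one more row than $\generatingmatrix{d}_{S_{a-1}}$, so $g_d(a)-g_d(a-1)\in\{0,1\}$ is $1$ exactly when the last row is independent of the previous ones, and any linear dependence among rows of $\generatingmatrix{d}$ restricts to one among the same rows of $\generatingmatrix{d'}$ when $d'<d$; hence the difference only grows with $d$. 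The recursion-based argument above has the advantage of applying verbatim to every $q$.)
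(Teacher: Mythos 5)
Your proof is correct, but it takes a genuinely different route from the paper. The paper argues directly from the recursion: it writes $a$ and $a-1$ in base $q$, locates the highest digit where they differ, and in two cases (no borrow, and a borrow chain of trailing $(q-1)$'s) telescopes Definition~\ref{def-g_d,q} until the difference $\func{d}{q}(a)-\func{d}{q}(a-1)$ collapses to a single term of the form $\func{D}{q}(1)$ with $D$ an explicit affine function of the digits, which is a $\{0,1\}$-valued indicator nondecreasing in $d$. You instead establish a closed-form combinatorial interpretation: $\func{d}{q}(m)$ counts digit strings with entries in $\{0,\dots,q-1\}$, digit sum at most $d$, and value less than $m$; you verify this count satisfies all the defining relations (the only delicate point, which you handle correctly, being that the digit cap at position $r$ is never binding in the split $e_r=0$ versus $e_r\ge 1$), and then invoke the same uniqueness observation the paper uses in Proposition~\ref{prop-expand}. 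From the formula, the difference is immediately $\mathbf{1}[s_q(a-1)\le d]$, monotone in $d$ -- which is in fact exactly the indicator the paper arrives at after its case analysis, so the two arguments meet at the same endpoint. Your route buys a reusable explicit formula for $\func{d}{q}$ (it re-derives Lemma~\ref{thm-g_d-base} and makes Corollaries~\ref{prop-lifting} and~\ref{prop-monotone} transparent) at the cost of a one-time verification of the recursion, whereas the paper stays entirely inside the recursion but pays with a two-case borrow computation. Your parenthetical rank argument is also sound, but as you note it only applies when $q=p$ is the field size, since only then does $g_{d}$ carry the interpretation as $\mathrm{rank}(\generatingmatrix{d}_{S_m})$; the counting argument is the one that covers the general-$q$ statement actually claimed in Proposition~\ref{prop-main}.
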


\begin{proof}
We first write down $\func{d}{q}(a+1)-\func{d}{q}(a)$ as a function of $d$.
We represent $a$ and $a-1$, respectively, in base $q$ as
\[
a=\sum_{i=0}^r a_i\cdot q^i \quad\mbox{ and } \quad  a-1=\sum_{i=0}^r b_i\cdot q^i.
\]
With this representation,
by Definition \ref{def-g_d,q}
we can explicitly compute $$\func{d}{q}(a)=\sum_{i=0}^{r}\func{d-\sum_{j=i+1}^{r}a_j}{q}(a_i\cdot q^r)$$
Let $i_0$ be the largest integer so that $a_{i_0}\neq b_{i_0}$.
Notice that
\[
\func{d}{q}(a)-\func{d}{q}(a-1)=\sum_{i=0}^{r}\func{d-\sum_{j=i+1}^{r}a_j}{q}(a_i\cdot q^r)-\sum_{i=0}^{r}\func{d-\sum_{j=i+1}^{r}b_j}{q}(b_i\cdot q^r)
\]
When $i>i_0$, $a_i=b_i$,
so we have $\func{d-\sum_{j=i+1}^{r}a_j}{q}(a_i\cdot q^r)=\func{d-\sum_{j=i+1}^{r}b_j}{q}(b_i\cdot q^r)$.
Hence
\begin{equation}\label{equation-g_d,q-main}
\func{d}{q}(a)-\func{d}{q}(a-1)=\sum_{i=0}^{i_0}\func{d-\sum_{j=i+1}^{r}a_j}{q}(a_i\cdot q^r)-\sum_{i=0}^{i_0}\func{d-\sum_{j=i+1}^{r}b_j}{q}(b_i\cdot q^r)\tag{*}
\end{equation}

We consider two cases based on the value of $i_0$. 

\bigskip\noindent
\underline{\textsc{Case  $i_0=0$:}}   In this case $b_{0}=a_{0}-1$,
and \eqref{equation-g_d,q-main} is simply
\[
\begin{split}
\func{d}{q}(a)-\func{d}{q}(a-1)
&=\func{d-\sum_{j=1}^r a_j}{q}(a_0)-\func{d-\sum_{j=1}^r a_j}{q}(a_0-1)\\
&=g_{d+1-\sum_{j=0}^r a_j}(1).
\end{split}
\]
Notice that $\func{d+1-\sum_{j=0}^r a_j}{q}(1)=\begin{cases}0 &\mbox{if }d+1<\sum_{j=0}^r a_j,\\ 1&\mbox{otherwise};\end{cases}$
so this is an increasing function of $d$.
Hence we have
$
\func{d}{q}(a)-\func{d}{q}(a-1)\geq \func{d'}{q}(a)-\func{d'}{q}(a-1)
$ 
as required.

\bigskip\noindent\underline{\textsc{Case $i_0>0$:}}
Then, it must be the case that $a_i=0, b_i=q-1$ for $0\leq i<i_0$.
Therefore, using the fact that $\func{d}{q}(0)=0$ for all $d$,
\eqref{equation-g_d,q-main} is
\[
\begin{split}
\func{d}{q}(a)-\func{d}{q}(a-1)
=&\func{d-\sum_{j=i_0+1}^r a_j}{q}(a_{i_0}\cdot q^{i_0})-\func{d-\sum_{j=i_0+1}^r b_j}{q}((a_{i_0}-1)\cdot q^{i_0})\\
&-\sum_{i=0}^{i_0-1}\func{d-\sum_{j=i+1}^{r}b_j}{q}((q-1)\cdot q^i)\\
=&\func{d+1-\sum_{j=i_0}^r a_j}{q}(q^{i_0})-\sum_{i=0}^{i_0-1}\func{d-\sum_{j=i+1}^r b_j}{q}((q-1)\cdot q^i)
\end{split}
\]
Define $h(i)=(\sum_{j=i_0}^r a_j)-1+(q-1)\cdot i$.
Then, we can verify that for $i\leq i_0-1$,
$$\sum_{j=i+1}^r b_j=(\sum_{j=i_0+1}^r a_j)+a_{i_0}-1+(q-1)(i_0-1-i)=h(i_0-1-i).$$ 

We claim that
$\displaystyle
\func{d-h(i)}{q}(q^{i_0-i})-\func{d-h(i)}{q}((q-1)\cdot q^{i_0-1-i})=\func{d-h(i+1)}{q}(q^{i_0-i-1}):
$

\medskip\noindent
Indeed, by Definition \ref{def-g_d,q},
\[
\func{d-h(i)}{q}((q-1)\cdot q^{i_0-i-1})=\sum_{j=0}^{p-2}\func{d-h(i)-j}{q}(q^{i_0-i-1}).
\]
By Proposition \ref{prop-expand},
we have
\[
\func{d-h(i)}{q}(q^{i_0-i})=\sum_{j=0}^{p-1}\func{d-h(i)-j}{q}(q^{i_0-i-1}).
\]
Comparing the two summations,
and using the fact $h(i)+q-1=h(i+1)$ we get the claim.

\medskip\noindent
So,
we can repeatedly apply the claim to get
\[
\begin{split}
\func{d}{q}(a)-\func{d}{q}(a-1)&=\func{d-h(0)}{q}(q^{i_0})-\sum_{i=0}^{i_0-1}\func{d-h(i)}{q}((q-1)\cdot q^{i_0-i-1})\\
&=\func{d-h(1)}{q}(q^{i_0-1})-\sum_{i=1}^{i_0-1}\func{d-h(i)}{q}((q-1)\cdot q^{i_0-i-1})\\
&=\cdots\\
&=\func{d-h(i_0-1)}{q}(q^{0})-\func{d-h(i_0-1)}{q}((q-1)\cdot q^{0})\\
&=\func{d+1-i_0\cdot (q-1)-\sum_{j=i_0}^r a_j}{q}(1).
\end{split}
\]
This is increasing in $d$,
so we obtain the required inequality.
\end{proof}

By telescoping Lemma~\ref{prop-main},
we have 
\begin{corollary}\label{prop-lifting}
For $a>b$, $d>d'$, for any $q>0$,
we have
\[
\func{d}{q}(a)-\func{d}{q}(b)\geq \func{d'}{q}(a)-\func{d'}{q}(b).
\]
\end{corollary}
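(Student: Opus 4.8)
The plan is to obtain Corollary~\ref{prop-lifting} directly from Proposition~\ref{prop-main} by telescoping, as the corollary's preamble already hints. First I would reduce to the case $b\geq 0$: in every application we make of this corollary the arguments are nonnegative integers, and for $b<0$ one can instead invoke the inequality at $b=0$ together with the definitional fact that $\func{d}{q}$ vanishes on nonpositive arguments, so assuming $b\geq 0$ costs nothing.

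Next I would write both sides as telescoping sums over the integers $k$ with $b<k\leq a$,
\[
\func{d}{q}(a)-\func{d}{q}(b)=\sum_{k=b+1}^{a}\bigl(\func{d}{q}(k)-\func{d}{q}(k-1)\bigr),
\]
and the analogous identity with $d'$ in place of $d$. Since $b\geq 0$, every index $k$ appearing in these sums satisfies $k\geq 1>0$, so the hypotheses of Proposition~\ref{prop-main} (namely $d>d'$ and $k>0$, with $q>0$ given) hold for each term. Applying that proposition termwise gives
\[
\func{d}{q}(k)-\func{d}{q}(k-1)\geq \func{d'}{q}(k)-\func{d'}{q}(k-1)
\]
for every such $k$, and summing these inequalities over $k=b+1,\dots,a$ yields the claimed bound.

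There is essentially no obstacle here: all the content lives in Proposition~\ref{prop-main}, and the corollary is merely its closure under summing consecutive first differences. The only point needing a moment's care is verifying that each summand legitimately invokes Proposition~\ref{prop-main}, i.e. that the lower endpoint of the telescope is at least $1$, which is exactly what the reduction to $b\geq 0$ in the first step ensures.
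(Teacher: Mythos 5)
Your proof is correct and matches the paper's argument: the paper obtains Corollary~\ref{prop-lifting} precisely by telescoping Proposition~\ref{prop-main} over consecutive integers between $b$ and $a$. Your extra remark about reducing to $b\geq 0$ is a harmless bit of added care that the paper leaves implicit.
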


Moreover,
if we set $b=0$,
then we can see that for fixed $a$,
$\func{d}{q}(a)$ is monotone in $d$.
\begin{corollary}\label{prop-monotone}
For $d>d'$, for any $q>0$,
we have
$
\func{d}{q}(a)\geq \func{d'}{q}(a).
$
\end{corollary}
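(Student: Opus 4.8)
The plan is to obtain Corollary~\ref{prop-monotone} immediately from Corollary~\ref{prop-lifting} by specializing the smaller argument to $0$. First I would dispose of the degenerate case $a=0$ (and, if one wishes, any $a\le 0$ not already covered): by the base clause of Definition~\ref{def-g_d,q} we have $\func{d}{q}(0)=0=\func{d'}{q}(0)$, so the claimed inequality holds trivially with equality.

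For $a>0$, I would invoke Corollary~\ref{prop-lifting} with the pair $(a,b)=(a,0)$; this is legitimate since $a>b=0$ and $d>d'$ by hypothesis. It yields
\[
\func{d}{q}(a)-\func{d}{q}(0)\ \ge\ \func{d'}{q}(a)-\func{d'}{q}(0).
\]
Since $\func{d}{q}(0)=\func{d'}{q}(0)=0$ by Definition~\ref{def-g_d,q}, the subtracted terms vanish and we are left with $\func{d}{q}(a)\ge\func{d'}{q}(a)$, which is exactly the assertion.

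There is no real obstacle here: all of the content was already extracted in Proposition~\ref{prop-main} (single-step monotonicity of the discrete derivative $\func{d}{q}(a)-\func{d}{q}(a-1)$ in $d$) and its telescoped consequence Corollary~\ref{prop-lifting}; the present statement is just the $b=0$ instance of the latter, combined with the normalization $\func{d}{q}(0)=0$. Accordingly, the proof is a one-line deduction rather than a new argument.
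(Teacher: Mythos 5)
Your proof is correct and is exactly the paper's argument: the paper obtains Corollary~\ref{prop-monotone} by setting $b=0$ in Corollary~\ref{prop-lifting} and using $\func{d}{q}(0)=0$ from Definition~\ref{def-g_d,q}, just as you do (your separate handling of $a=0$ is a harmless extra check). Nothing further is needed.
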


\section{Proof of Lemma \ref{lem-main}}
\label{sec:sub-add}

In this section we prove the following generalized version of 
Lemma~\ref{lem-main} that applies to the generalization $g_{d,q}$ of $g_d$ given 
in the previous section.

\begin{lemma}\label{lem-main-stronger}
For any integer $q>0$,
for integers $a_0\geq a_1\geq \cdots \geq a_{q-1}\geq 0$, and
for any integer $d$,
\[
\func{d}{q}(\sum_{i=0}^{q-1}a_i)\leq \sum_{i=0}^{q-1}\func{d-i}{q}(a_i)
\]
\end{lemma}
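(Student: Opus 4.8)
The plan is to prove Lemma~\ref{lem-main-stronger} by a double induction, outermost on the size of the total $N=\sum_{i=0}^{q-1}a_i$ and, for fixed $N$, an inner induction or case analysis on the "shape'' of the base-$q$ expansions of the $a_i$. The base cases $N=0$ and $N=1$ (and $d<0$) are immediate from $\func{d}{q}(0)=0$, $\func{d}{q}(1)\in\{0,1\}$, and $\func{d}{q}$ being nonnegative and monotone in $d$ (Corollary~\ref{prop-monotone}). The engine of the induction will be the recursive description of $\func{d}{q}$ from Definition~\ref{def-g_d,q}: writing $N=\sum a_i$ with $q^r\le N<q^{r+1}$, I want to "peel off'' a block of size $q^r$ from the left-hand side, since $\func{d}{q}(N)=\func{d}{q}(q^r)+\func{d-1}{q}(N-q^r)$, and match it against a corresponding peeling on the right-hand side. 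The natural way to peel on the right is to subtract $1$ repeatedly from whichever $a_i$ is currently largest, using Proposition~\ref{prop-main} / Corollary~\ref{prop-lifting} to control the cost of each decrement: decrementing the largest term costs at least as much as decrementing a term that is assigned a smaller shift $d-i$, which is exactly the direction we need.

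More concretely, the key step I expect to carry out is a "transfer'' inequality: if $a_0\ge a_1\ge\cdots\ge a_{q-1}$ and we decrease $a_0$ by $1$ (re-sorting if necessary), then
\[
\func{d}{q}\Bigl(\sum_i a_i\Bigr)-\func{d}{q}\Bigl(\sum_i a_i-1\Bigr)\ \ge\ \func{d}{q}(a_0)-\func{d}{q}(a_0-1),
\]
which follows because the digit-expansion formula for $\func{d}{q}(a)-\func{d}{q}(a-1)$ derived in the proof of Proposition~\ref{prop-main} depends only on the trailing digits where $a$ and $a-1$ differ, and $\sum_i a_i$ "carries at least as far'' as $a_0$ does when we subtract $1$. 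Granting such a transfer inequality, one reduces the claim for $N$ to the claim for $N-1$ with $a_0$ replaced by $a_0-1$: the decrement on the left is $\ge$ the decrement in the term $\func{d}{q}(a_0)$ on the right, and the right-hand side with $a_0-1$ is covered by the inductive hypothesis. The ordering hypothesis $a_0\ge\cdots\ge a_{q-1}$ is what guarantees $a_0-1$ can be reinserted so that the sorted order and the index-to-shift assignment still make the inductive hypothesis applicable.

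The place where I expect the real difficulty to lie is precisely in making the "peeling'' consistent between the two sides when carries occur — i.e.\ when $a_0$ is itself a power of $q$ (or ends in many $q-1$ digits) so that subtracting $1$ from it behaves very differently from subtracting $1$ from $\sum_i a_i$. In that regime the single-step transfer inequality above can fail to be tight enough, and one has to peel an entire block $q^s$ at once and argue via the block recursion $\func{d}{q}(q^s)=\sum_{j=0}^{q-1}\func{d-j}{q}(q^{s-1})$, matching it against a sub-sum of the $a_i$'s of total size $q^s$; this is the multi-term analogue of the two-term split that Keevash--Sudakov only had to handle for $q=2$, and it is where the combinatorics of distributing $q^s$ among up to $q$ decreasing parts — with the sharper shifts $d-i$ going to the larger parts — has to be controlled, presumably by yet another layer of induction on $r$ together with Corollary~\ref{prop-lifting} to absorb the mismatch in shifts. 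I would organize the write-up so that this block-peeling lemma is isolated and proved first, and then the main induction on $N$ becomes routine.
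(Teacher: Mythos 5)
Your overall strategy --- induction on $N=\sum_i a_i$ driven by a single-decrement ``transfer'' inequality --- breaks down, and the problem is already visible in the statement of the key step. For the reduction you describe (apply the hypothesis to the vector with $a_0$ replaced by $a_0-1$, then add back the two decrements), what you actually need is $\func{d}{q}(N)-\func{d}{q}(N-1)\le \func{d}{q}(a_0)-\func{d}{q}(a_0-1)$, i.e.\ the \emph{reverse} of the inequality you wrote: with your direction ($\ge$) the chain $\func{d}{q}(N)\le\func{d}{q}(N-1)+\Delta_L\le(\mbox{RHS with }a_0-1)+\Delta_L$ cannot be capped by the RHS with $a_0$. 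Worse, the inequality in the direction that is actually needed is false in general. The computation inside Proposition~\ref{prop-main} shows that $\func{d}{q}(a)-\func{d}{q}(a-1)$ equals $1$ exactly when $d$ is at least the base-$q$ digit sum of $a-1$, and the digit sum of $N-1$ can be strictly smaller than that of $a_0-1$: take $q=3$, $\vec{a}=(3,1,0)$, $d=1$. Then $\func{1}{3}(4)-\func{1}{3}(3)=3-2=1$ while $\func{1}{3}(3)-\func{1}{3}(2)=2-2=0$, so the decrement on the left strictly exceeds the decrement on the right, and the one-step induction yields only $\func{1}{3}(4)\le 4$ against the target $3$. (Your heuristic that $\sum_i a_i$ ``carries at least as far'' as $a_0$ is exactly what fails: $a_0=3=(10)_3$ borrows when decremented, while $N=4=(11)_3$ does not.) The lemma still holds at this point only because the inductive inequality for the decremented vector had slack, and the single-decrement scheme has no mechanism to detect or propagate such slack.

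The part you defer as ``where the real difficulty lies'' (handling carries by peeling blocks $q^s$) is in fact the entire content of the lemma, and the paper does not resolve it by peeling $N$ at all. Instead it orders the vectors $\vec{a}$ first by $\|\vec{a}\|_1$ and then by \emph{reverse} lexicographic order, exhibits an explicit extremal family $V_*$ on which equality $v_d(\vec{a})=\func{d}{q}(\|\vec{a}\|_1)$ holds by direct computation, and shows that any $\vec{a}\notin V_*$ can be replaced by a strictly smaller $\vec{a'}$ with the same $\ell_1$ norm and $v_d(\vec{a'})\le v_d(\vec{a})$ via three norm-preserving operations (singularization, transposing, repacking), each relying on Corollary~\ref{prop-lifting} and on the induction hypothesis at smaller norms (as in Lemma~\ref{lem-subadd}). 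To salvage your plan you would have to replace the pointwise transfer inequality by a global rearrangement argument of this kind; as written, the proposal's main reduction step is stated in the wrong direction and is false in the direction required.
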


\begin{proof}
Define $V\subset \mathbb{Z}^q$ as
$
V:=\{\vec{x}\in \mathbb{Z}^q\mid x_0\geq x_1\geq \cdots\geq x_{q-1}\geq 0\}.
$
For $\vec{a}\in V$,
define the \emph{value} of $\vec{a}$ of degree $d$ as
$
v_d(\vec{a}):=\sum_{i=0}^{q-1}\func{d-i}{q}(a_i).
$
With these notations,
Lemma \ref{lem-main-stronger} can be stated as $$\forall \vec{a}\in V,\quad
v_d(\vec{a})\geq \func{d}{q}(\|\vec{a}\|_1).$$

We define a total order $<$ on $V$ as $\vec{x}<\vec{y}$ iff either 
$\|\vec{x}\|_1<\|\vec{y}\|_1$,
or $\|\vec{x}\|_1=\|\vec{y}\|_1$ and $\vec{x}$ is lexicographically larger than 
$\vec{y}$.
Here we say that $\vec{x}$ is lexicographically larger than $\vec{y}$ if there 
exists and $i\geq 0$ so that $x_j=y_j$ for $j<i$,
and $x_j>y_j$.\footnote{Our definition of the order $<$ with lexicographically
larger is counterintuitive, but this is what we need.} 

We prove Lemma \ref{lem-main-stronger} by induction on this order $<$ over $V$.

\bigskip\noindent
\textbf{Induction Hypothesis:} For all $\vec{a'}\in V$ with $\vec{a'}<\vec{a}$,
\[
v_d(\vec{a'})\geq \func{d}{q}(\|\vec{a'}\|_1)\label{induction_hypothesis}\tag{**}
\]
In order to prove the inductive step we divide the proof into four cases depending
of the properties of $\vec{a}$:

We divide $V$ into 4 categories in which we prove the induction step $v_d(\vec{a})\geq \func{d}{q}(\|\vec{a}\|_1)$ with different methods.

Define $V_*:=\left\{\vec a\in V\mid  \exists t,r \text{ so that } a_0=\cdots=a_{t-1}=q^r,a_{t}<q^r,a_{t+1}=\cdots=a_{q-1}=0\right\}$.

\bigskip\noindent\textbf{CASE I $\quad\vec{a}\in V_{*}:\qquad$} 
The set $V_*$ is a special class of vectors for which we can directly show the $v_d(\vec{a})\geq \func{d}{q}(\|\vec{a}\|_1)$ without using the induction hypothesis.
Indeed,
for $\vec{a}\in V_*$,
by repeatedly applying Definition \ref{def-g_d,q}, we have
$$\func{d}{q}(\|\vec{a}\|_1)=\func{d}{q}(\sum_{i=0}^{q-1}a_i)=\func{d}{q}(t\cdot q^r+a_t)=\sum_{i=0}^{t-1}\func{d-i}{q}(q^r)+\func{d-t}{q}(a_t)=v_d(\vec{a})$$
So we are done if $\vec{a}\in V_*$.

\medskip\noindent
For the remaining cases,
we prove the inductive step by showing the following claim:

\begin{claim}\label{claim-main}
If $\vec{a}\not \in V_*$,
then $\exists \vec{a'}\in V$ so that $\vec{a'}<\vec{a}$,
$\|\vec{a'}\|_1=\|\vec{a}\|_1$ and $v_d(\vec{a'})\leq v_d(\vec{a})$.
\end{claim}
\noindent
Together with \eqref{induction_hypothesis},
we have
\[
\func{d}{q}(\|\vec{a}\|_1)=\func{d}{q}(\|\vec{a'}\|_1)\leq v_d(\vec{a'})\leq v_d(\vec{a})
\]
which is precisely what we need.
Our proof is algorithmic,
in the sense that we actually provide algorithms to construct $\vec{a'}$ explicitly.
Since we have different operations based on the structure of $\vec{a}$,
we introduce the structural properties that allow us to separate out the cases and
introduce the propositions that allow us to prove the claim in each case.
The proofs of these propositions are then completed in the following subsections.

Let $hp(\vec{a})$ (\emph{highest power} of $\vec{a}$) be the largest integer $r$ so that $a_0> q^r$.
Then for $hp(\vec{a})=r$ we can write $a_i=k_i\cdot q^r+c_i$, where $k_i\in \{0,\cdots,q\}$ and $0\leq c_i<q^r$.
We can divide $\vec{a}$ into groups so that in each group we have the same $k_i$.
That is,
we divide the interval $[0,q-1]$ into $\ell<p$ intervals $\{[s_i,t_i]\}_{i=1}^{\ell}$ where $s_i=0$,
$t_\ell=q-1$ and $s_{i+1}=t_i+1$, so that
\begin{itemize}
\item $\forall j\in[s_i,t_i]$, $k_j=k_{s_i}$, and
\item $k_{s_i}>k_{s_{i+1}}$ for $i=1,\cdots,\ell-1$.
\end{itemize}

\begin{defn}
We define $\groupheight{i}:=k_{s_i}$ as the \emph{height} of the interval $[s_i,t_i]$.
An interval $[s_i,t_i]$ is called \emph{singularized} if $c_{s_i+1}=\cdots=c_{t_i}=0$ when $h_i<q$,
and $c_{s_i}=\cdots=c_{t_i}=0$ when $h_i=q$.  (That is, at most the first element
in the interval has a non-zero remainder.)
$\vec{a}$ is \emph{singularized} if all its intervals are singularized. (Observe that every $\vec{a}\in V_{*}$ is singularized.)
\end{defn}

We show that if $\vec{a}$ is not singularized
then it can also be improved.

\bigskip\noindent\textbf{CASE II\quad} 
$\vec{a}$ is not singularized:\qquad In this case the inductive claim is an immediate consequence of the following proposition:

\begin{proposition}\label{prop-regroup}
Assuming induction hypothesis \eqref{induction_hypothesis},
if $\vec{a}$ is not singularized, then $\exists \vec{a'}\in V$ so that  $\vec{a'}<\vec{a}$,
$\|\vec{a'}\|_1=\|\vec{a}\|_1$ and $v_d(\vec{a'})\leq v_d(\vec{a})$.
\end{proposition}

It now suffices to consider singularized $\vec{a}$.

\begin{defn}\label{def-narrow}
Define the \emph{width} of $a_i$ to be $w_i:=\#\{i\leq j<q\mid a_j\geq q^r\}$.
An interval $[s_i,t_i]$ with height $h_i$ is called \emph{narrow} if $h_i\geq w_{s_i}$.
We say $\vec{a}$ is \emph{narrow} if all its intervals are narrow.  (Observe that
every $\vec{a}\in V_{*}$ is narrow.)
\end{defn}

If $\vec{a}$ singularized but not narrow, it can also be improved.

\bigskip\noindent\textbf{CASE III\quad} $\vec{a}$ is singularized but not narrow:\qquad
In this case the inductive claim is an immediate consequence of the following 
proposition:

\begin{proposition}\label{prop-transpose}
If $\vec{a}$ is singularized but not narrow,
then $\exists \vec{a'}\in V$ so that  $\vec{a'}<\vec{a}$,
$\|\vec{a'}\|_1=\|\vec{a}\|_1$ and $v_d(\vec{a'})=v_d(\vec{a})$.
\end{proposition}

Finally, we consider the remaining case that $\vec{a}$ is singularized, narrow, and
not in $V_*$.

\bigskip\noindent\textbf{CASE IV\quad} $\vec{a}\not \in V_{*}$,
but $\vec{a}$ is both singularized  and narrow:\qquad
In this case the inductive claim is an immediate consequence of the following
proposition:

\begin{proposition}\label{prop-repack}
Assuming induction hypothesis \eqref{induction_hypothesis},
for $\vec{a} \in V \slash V_{*}$,
if $\vec{a}$ is singularized and narrow,
then $\exists \vec{a'}\in V$ so t{}hat  $\vec{a'}<\vec{a}$,
$\|\vec{a'}\|_1=\|\vec{a}\|_1$ and $v_d(\vec{a'})\leq v_d(\vec{a})$.
\end{proposition}

\noindent
Now, assuming Propositions \ref{prop-regroup}, \ref{prop-transpose} and \ref{prop-repack}, which we prove in the following sections, and putting Cases II, III, and IV together, we immediately derive
Claim \ref{claim-main}.

From Case I if $\vec{a}\in V_{*}$,
then $v_d(\vec{a})=\func{d}{q}(\|\vec{a}\|_1)$ which is sufficient.
Otherwise, we apply Claim \ref{claim-main} to obtain $\vec{a'}\in V$ so that $\vec{a'}<\vec{a}$,
$\|\vec{a'}\|_1=\|\vec{a}\|_1$ and $v_d(\vec{a'})\leq v_d(\vec{a})$.
By the induction hypothesis \eqref{induction_hypothesis},
$v_d(\vec{a'})\geq \func{d}{q}(\|\vec{a'}\|_1)$.
Therefore
\[
\func{d}{q}(\|\vec{a}\|_1)=\func{d}{q}(\|\vec{a'}\|_1)\leq v_d(\vec{a'})\leq v_d(\vec{a})
\]
as required and the lemma follows by induction.
\end{proof}

\noindent
We now finish the overall argument by proving each of the Propositions \ref{prop-regroup}, \ref{prop-transpose} and \ref{prop-repack} in each of the following subsections.

\subsection{Singularization}

In this subsection we prove Proposition \ref{prop-regroup}.
\begin{proof}[Proof of Proposition \ref{prop-regroup}]
We claim that the following algorithm on input $\vec{a}$ that is not singularized,
outputs $\vec{a'}\in V$ so that  $\vec{a'}<\vec{a}$,
$\|\vec{a'}\|_1=\|\vec{a}\|_1$ and $v_d(\vec{a'})\leq v_d(\vec{a})$.
\bigskip\\

\begin{algorithm}[H]\label{algorithm-regroup}
\caption{Singularization}
\KwIn {$\vec{a}=(a_0,\cdots,a_{q-1})$.}
\KwOut {Singularized $\vec{a'}$.}
\For{$i=1,\cdots,\ell$}{
  Compute $d_i=\sum_{j=s_i}^{t_i} c_j$.\\
  Compute $0\leq e_i<q,0\leq f_i<q^t$ so that $d_i=e_i\cdot q^r+f_i$.\\
  \tcc{Here we use the fact that $e_i<\frac 1 {q^r}\cdot (t_i-s_i+1)\cdot q^r=t_i-s_i+1$}
  $c_{s_i+j}'\leftarrow \begin{cases} f_i & j=e_i\\ 0 & \mbox{else}\end{cases}$.\\
  $k_j'\leftarrow \begin{cases} k_{j}+1 & j=s_i,\cdots,s_{i}+e_i-1\\ 0 & j=s_i+e_i,\cdots,t_i\end{cases}$.\\
}
\For{$i=0,\cdots,q-1$}{
  $a_i'=k_i'\cdot q^r+c_i'$
}
\end{algorithm}

\begin{figure} 
\centering
    \includegraphics[width=0.5\textwidth]{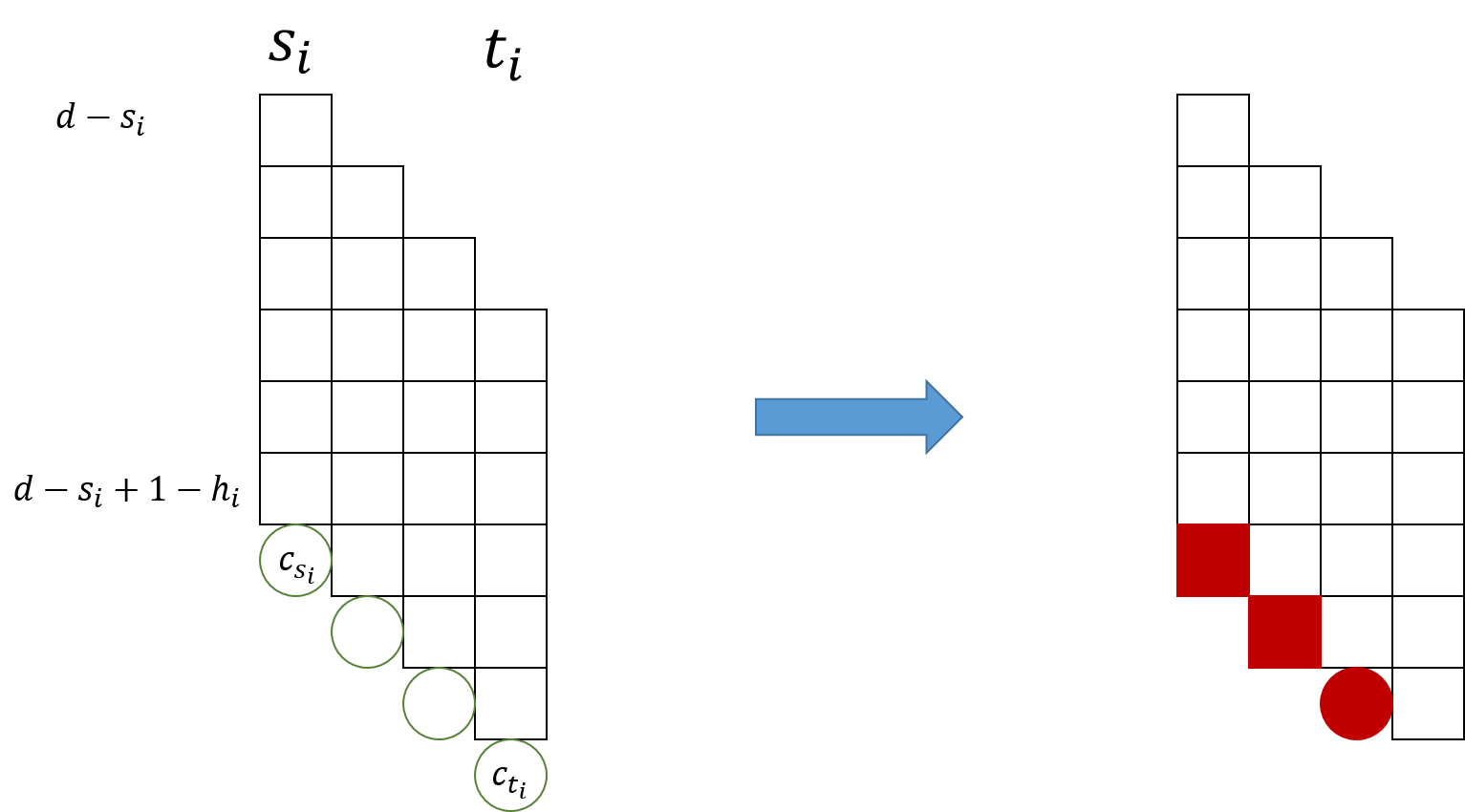}
  \caption{A visualization of Algorithm \ref{algorithm-regroup} on the interval $[s_i,t_i]$. In the diagram, columns are indexed by $\{0,\cdots,q-1\}$ and rows are index by integers in decreasing order. A box on level $d$ represents $\func{d}{q}(q^r)$, and a circle on level $d$ labeled by $c$ represents $\func{d}{q}(c)$. The diagram represents $v_d(\vec{a})$. Since $\func{d-i}{q}(a_i)=\func{d-i}{q}(k_i\cdot q^r+c_i)=\sum_{j=0}^{k_i-1}\func{d-i-j}{q}(q^r)+\func{d-i-k_i}{q}(c_i)$,
  in column $j$ we have 1 box from level $d-i$ to $d-i+1-k_i$, and 1 circle labeled by $c_{i}$ on level $d-i-k_i$.
Algorithm \ref{algorithm-regroup} turns the original diagram on the left to the new diagram on the right. In the right diagram we omit all the circles labeled with 0.
  } 
\end{figure}

\bigskip\noindent
First,
inside each original group $[s_i,t_i]$,
we always have $a_{j}'\geq a_{j+1}'$ by construction.
Moreover,
between groups we have
\[
a_{t_i}'=\groupheight{i}\cdot q^r>\groupheight{i+1}\cdot q^r\geq (\groupheight{i+1}+1)\cdot q^r\geq a_{s_{i+1}}'
\] 
and therefore $\vec{a'}\in V$.
Now let us consider $\func{d-j}{q}(a_j)-\func{d-j}{q}(a_j')$ for $j\in [s_i,t_i]$.
By the algorithm and definition \ref{def-g_d,q},
we have
\[
\begin{split}
\func{d-j}{q}(a_j)-\func{d-j}{q}(a_j')=&\sum_{m=0}^{\groupheight{i}-1}\func{d-j-m}{q}(q^r)+\func{d-j-\groupheight{i}}{q}(c_j)\\
&-[\sum_{m=0}^{\groupheight{i}-1}\func{d-j-m}{q}(q^r)+\func{d-j-\groupheight{i}}{q}(a_j'-\groupheight{i}\cdot q^r)]\\
=&\func{d-j-\groupheight{i}}{q}(c_j)-\func{d-j-\groupheight{i}}{q}(a_j'-\groupheight{i}\cdot q^r);
\end{split}
\]
so,
\[
\begin{split}
\sum_{j=s_i}^{t_i} \func{d-j}{q}(a_j)-\sum_{j=s_i}^{t_i} \func{d-j}{q}(a_j')&=\sum_{j=s_i}^{t_i} \func{d-j-\groupheight{i}}{q}(c_j)-\sum_{j=s_i}^{t_i} \func{d-j-\groupheight{i}}{q}(a_j'-q^r)\\
&=\sum_{j=s_i}^{t_i}\func{d-j-\groupheight{i}}{q}(c_j)-[\sum_{j=s_i}^{s_i+e_i-1}\func{d-j-\groupheight{i}}{q}(q^r)+\func{d-s_i-e_i-\groupheight{i}}{q}(f_i)]\\
&=\sum_{j=s_i}^{t_i}\func{d-j-\groupheight{i}}{q}(c_j)-\func{d-s_i-\groupheight{i}}{q}(e_i\cdot q^r+f_i).
\end{split}
\]

This is the place where we can use the induction hypothesis.
Notice that $t_i-s_i+1\leq q$.
Consider the configuration $\vec{c}=(c_{s_i},c_{s_i+1},\cdots,c_{t_i},0,\cdots,0)$.
Then $\|\vec{c}\|_1<\sum_{j=s_i}^{t_i}a_i\leq \|\vec{a}\|_1$.
Moreover,
$c_{s_i}\geq c_{s_i+1}\geq \cdots \geq c_{t_i}\geq 0$.
So by \eqref{induction_hypothesis} we have
\[
\sum_{j=s_i}^{t_i}\func{d-j-\groupheight{i}}{q}(c_j)=v_{d-s_i-\groupheight{i}}(\vec{c})\geq \func{d-s_i-\groupheight{i}}{q}(\sum_{j=s_i}^{t_i}c_i)=\func{d-s_i-\groupheight{i}}{q}(e_i\cdot q^r+f_i)
\]
which gives
\[
\sum_{j=s_i}^{t_i} \func{d-j}{q}(a_j)-\sum_{j=s_i}^{t_i} \func{d-j}{q}(a_j')\geq 0.
\]
This holds for every original interval. So in total we have $v_d(\vec{a})\geq v_d(\vec{a'})$.
Also,
the construction clearly gives us
\[
\|\vec{a}\|_1=\sum_{i=1}^l\sum_{j=s_i}^{t_i} a_j=\sum_{i=1}^l\sum_{j=s_i}^{t_i} (\groupheight{i}\cdot q^r+c_j)=\sum_{i=1}^l((\sum_{j=s_i}^{t_i} \groupheight{i}\cdot q^r)+e_i\cdot q^r+f_i)=\sum_{i=1}^l\sum_{j=s_i}^{t_i} a_j'=\|\vec{a'}\|_1
\]

Let $[s_i,t_i]$ be the first original interval that is not singularized. 
Then by construction we can see that $a_j=a_j'$ for $j<s_i$.
If $k_{s_i}'>k_{s_i}$,
then clearly $a_{s_i}<a_{s_i}'$.
Otherwise $c_{s_i}'=\sum_{j=s_i}^{t_i} c_{j}>c_{s_i}$ because $[s_i,t_i]$ is not singularized.
Again we have $a_{s_i}<a_{s_i}'$.
So $\vec{a}$ is strictly lexicographically larger than $\vec{a'}$,
which implies  $\vec{a'}<\vec{a}$.
\end{proof}

\subsection{Transposing}

In this subsection we prove Proposition \ref{prop-transpose}.
For convenience of notations,
we use $w_{i,j}:=\#\{i\leq t<q\mid a_t\geq j\cdot q^r\}$ which we call the \emph{$j$-th order width} of $a_i$.

\begin{proof}[Proof of Proposition \ref{prop-transpose}]
We claim that the following algorithm on input $\vec{a}$ that is not narrow,
outputs  $\vec{a'}\in V$ so that  $\vec{a'}<\vec{a}$,
$\|\vec{a'}\|_1=\|\vec{a}\|_1$ and $v_d(\vec{a'})=v_d(\vec{a})$.

\bigskip
\begin{algorithm}[H]\label{algorithm-transpose}
\caption{Transposing}
\KwIn {Singularized $\vec{a}=(a_0,\cdots,a_{q-1})$.}
\KwOut {$\vec{a'}$.}
Let $i_0$ be the smallest integer so that $[s_{i_0},t_{i_0}]$ is not narrow.\\
\For{$i=0,\cdots,q-1$}{
  \If{$i<s_{i_0}$}{
  \tcc{We do not change numbers before $i_0$-th group}
  $a_i'\leftarrow a_i$
  }
  \Else{
  \tcc{first set $k_i'$}
  $k_i'\leftarrow w_{s_{i_0},1+i-s_{i_0}}$ \\%\sum_{j=s_{i_0}}^{p-1}\mathbf{1}_{k_j\geq (i+1-s_{i_0})}$\\
  $c_i'\leftarrow 0$
  }
}
\For{$i=i_0,\cdots,\ell$}{
  $c_{s_{i_0}+h_{i}}'\leftarrow c_{s_{i}}$
}

\For{$i=0,\cdots,q-1$}{
  $a_i'=k_i'\cdot q^r+c_i'$
}
\end{algorithm}

\begin{figure} 
\centering
    \includegraphics[width=0.5\textwidth]{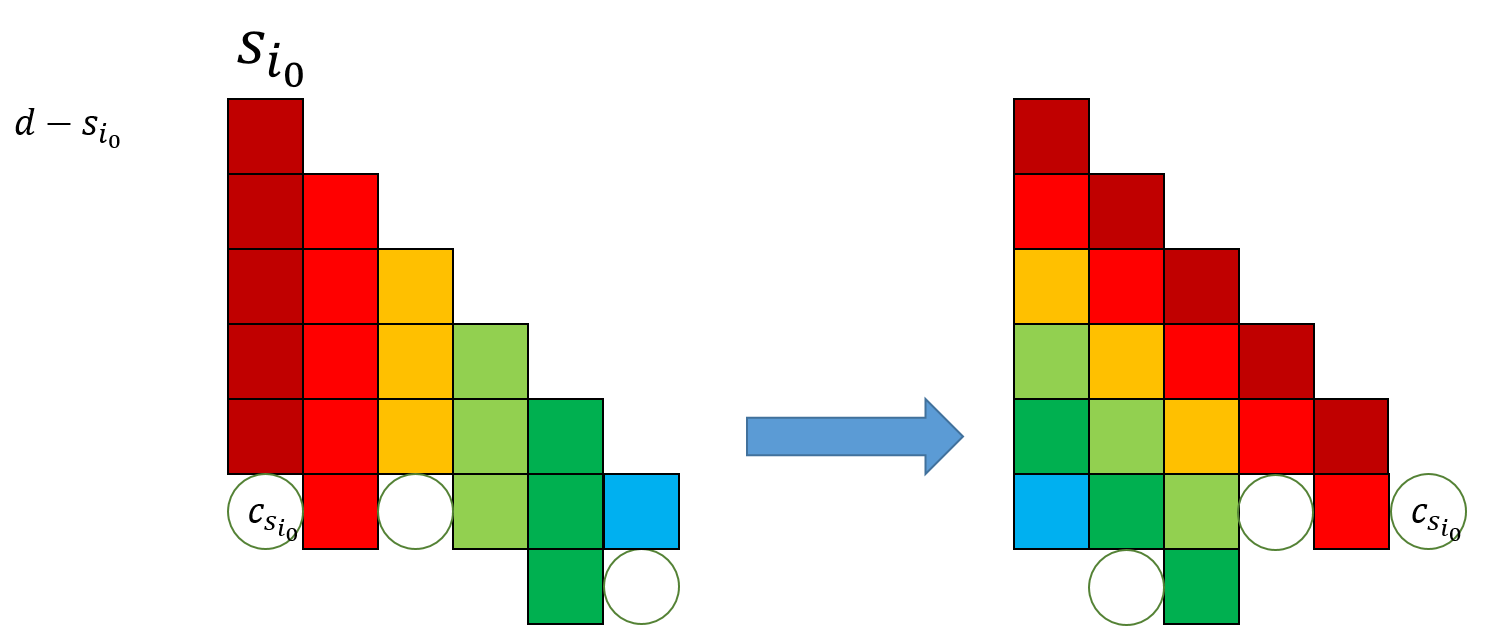}
  \caption{A visualization of Algorithm \ref{algorithm-transpose}.
  } 
\end{figure}

\bigskip\noindent
We first argue that $\vec{a'}\in V$.
Notice that $w_{s_{i_0},1+i-s_{i_0}}$ is a decreasing function on $i$.
By construction,
for $i\geq s_{i_0'}$,
$k_{i}'\geq k_{i+1}'$.
When $k_i'>k_{i+1}'$,
clearly $a_i'\geq k_i'\cdot q^{r}\geq a_{i+1}'$.
When $k_i'=k_{i+1}'$,
we claim that $c_{i+1}'$ must be 0.
This is because for contradiction assume $i+1=s_{i_0}+h_q$ for some $q\in \{i_0,\cdots,\ell\}$,
then by setting $j=s_{q}$,
we have $k_j=k_{s_q}=h_q=i+1-s_{i_0}$,
while $k_j<(i+1)+1-s_{i_0}$,
which implies $k_i'>k_{i+1}'$.
So $a_{i}'\geq k_i'\cdot q^r=a_{i+1}'$.
We conclude that when $i\geq i_0$,
$a_i'\geq a_{i+1}'$.
If $i_0>1$,
because $\vec{a}\in V$,
we have $a_i'=a_i\geq a_{i+1}=a_{i+1}'$ when $i+1<s_{i_0}$.
So we only need to show that $a_{s_{i_0}-1}'\geq a_{s_{i_0}}'$.
To see this,
because $[s_{i_0-1},t_{i_0-1}]$ is narrow,
we have $h_{i_0-1}\geq w_{s_{i_0-1},1}$.
Also,
$h_{i_0-1}\geq 1$ otherwise we will not have the next group.
So
\[
a_{s_{i_0}-1}'=a_{s_{i_0}-1}=a_{t_{i_0-1}}\geq h_{i_0-1}\cdot q^r\geq (w_{s_{i_0-1},1})\cdot q^r\geq (1+w_{s_{i_0},1})\cdot q^r=(1+k_{s_{i_0}}')\geq a_{s_{i_0}}'.
\]
Therefore $\vec{a'}\in V$.  Then we argue that $\|\vec{a'}\|_1=\|\vec{a}\|_1$.
This is because for singularized $\vec{a}$,
\[
\begin{split}
\|\vec{a}\|_1=\sum_{j=0}^{q-1}a_j=\sum_{j=0}^{s_{i_0}-1}a_j+(\sum_{j=s_{i_0}}^{q-1} k_j)\cdot q^r +\sum_{i=i_0}^{\ell} c_{s_{\ell}}.
\end{split}
\]
On the other hand,
\[
\begin{split}
\|\vec{a}'\|_1=\sum_{j=0}^{q-1}a_j'&=\sum_{j=0}^{s_{i_0}-1}a_j'+(\sum_{j=s_{i_0}}^{q-1} k_j')\cdot q^r +\sum_{i=i_0}^{\ell} c_{s_{i_0}+h_i}'\\
&=\sum_{j=0}^{s_{i_0}-1}a_j+(\sum_{j=s_{i_0}}^{q-1} k_j')\cdot q^r +\sum_{i=i_0}^{\ell} c_{s_{i_0}+h_i}'.
\end{split}
\]
But we have
\[
\sum_{i=i_0}^{\ell} c_{s_{i_0}+h_i}'=\sum_{i=i_0}^{\ell} c_{s_{\ell}}.
\]
Notice that when $i\geq s_{i_0}$, we have $k_i\leq \hone$,
so 
\[
\sum_{j=s_{i_0}}^{q-1} k_j=\sum_{j=s_{i_0}}^{q-1}(\sum_{i=1}^{\hone}\mathbf{1}_{k_j\geq i})=\sum_{i=1}^{\hone}\sum_{j=s_{i_0}}^{q-1}\mathbf{1}_{k_j\geq i}
\]
and
\[
\sum_{j=s_{i_0}}^{q-1} k_j'=\sum_{i=s_{i_0}}^{q-1}w_{s_{i_0},i+1-s_{i_0}}=\sum_{i=s_{i_0}}^{q-1}(\sum_{j=s_{i_0}}^{q-1} \mathbf{1}_{k_j\geq (i+1-s_{i_0})})=\sum_{i=1}^{p-s_{i_0}}\sum_{j=s_{i_0}}^{q-1}\mathbf{1}_{k_j\geq i}=\sum_{i=1}^{\hone}\sum_{j=s_{i_0}}^{q-1}\mathbf{1}_{k_j\geq i}.
\]
Here we use the fact that $[s_{i_0},t_{i_0}]$ is not narrow,
hence $\hone\leq q-s_{i_0}$.
Combining these lines gives us $\|\vec{a'}\|_1=\|\vec{a}\|_1$.

We next argue that  $\vec{a'}$ is lexicographically larger than $\vec{a}$:
For $i<s_{i_0}$,
we have $a_i'=a_i$.
But since $[s_{i_0},t_{i_0}]$ is not narrow, we have
\[
a_{s_{i_0}}'=(\sum_{j=s_{i_0}}^{q-1}\mathbf{1}_{k_j\geq 1})\cdot q^r \geq (h_{s_{i_0}}+1)\cdot q^r\geq a_{s_{i_0}}.
\]
Therefore $\vec{a'}<\vec{a}$.   We finally show that $v_d(\vec{a'})=v_d(\vec{a})$:
We have 
\[
\begin{split}
v_d(\vec{a})&=\sum_{i=0}^{q-1}g_{d-i}(a_i)=\sum_{i=0}^{s_{i_0}-1}g_{d-i}(a_i)+\sum_{i=s_{i_0}}^{q-1}g_{d-i}(k_i\cdot q^r+c_i)\\
&=\sum_{i=0}^{s_{i_0}-1}g_{d-i}(a_i)+(\sum_{i=s_{i_0}}^{q-1}\sum_{j=0}^{k_i-1}g_{d-i-j}(q^r))+\sum_{i=i_0}^{\ell}g_{d-s_i-h_i}(c_{s_i})
\end{split}
\]
On the other hand,
we have for $i=i_0,\cdots,\ell$,
$k_{s_{i_0}+h_i}'=\sum_{j=s_{i_0}}^{q-1}\mathbf{1}_{k_j\geq h_i+1}=s_i-s_{i_0}$.
So
\[
v_d(\vec{a'})=\sum_{i=0}^{s_{i_0}-1}g_{d-i}(a_i')+(\sum_{i=s_{i_0}}^{q-1}g_{d-i}(k_i'\cdot q^r))+\sum_{i=i_0}^{\ell}g_{d-(s_{i_0}+h_i)-(s_i-s_{i_0})}(c_{s_i})
\]
Comparing the two expressions,
it is sufficient to show that $$\sum_{i=s_{i_0}}^{q-1}\sum_{j=0}^{k_i-1}g_{d-i-j}(q^r)=\sum_{i=s_{i_0}}^{q-1}g_{d-i}(k_i'\cdot q^r).$$
Indeed,
we have
\[
\begin{split}
\sum_{i=s_{i_0}}^{q-1}\sum_{j=0}^{k_i-1}g_{d-i-j}(q^r)
&=\sum_{j=0}^{\hone-1}\sum_{i=s_{i_0}}^{q-1}\mathbf{1}_{k_i\geq j+1}\cdot g_{d-i-j}(q^r)\\
&=\sum_{j=0}^{\hone-1}g_{\done-j}((\sum_{i=s_{i_0}}^{q-1}\mathbf{1}_{k_i\geq j+1})\cdot q^r)\\
&=\sum_{j=0}^{\hone-1}g_{\done-j}(k_{s_0+j}'\cdot q^r)\\
&=\sum_{i=s_{i_0}}^{q-1}g_{d-i}(k_i'\cdot q^r)
\end{split}
\]
where we use the fact that $k_i'=0$ when $i\geq s_{i_0}+\hone$.
\end{proof}

\subsection{Repacking}

In this section we prove Proposition \ref{prop-repack}.
We need one auxiliary lemma.

\begin{lemma}\label{lem-subadd}
Assuming induction hypothesis \eqref{induction_hypothesis},
then for all integers $ x,y\geq 0$ so that $x+y< \|\vec{a}\|_1$, for every integer $d\geq 0$,
$$\func{d}{q}(x)+\func{d}{q}(y)\geq \func{d}{q}(x+y).$$
\end{lemma}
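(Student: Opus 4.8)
The plan is to deduce this two-term sub-additivity directly from the $q$-term version supplied by the induction hypothesis \eqref{induction_hypothesis}. By symmetry I would first assume without loss of generality that $x\ge y$, and I would assume $q\ge 2$ (for $q=1$ the function $\func{d}{q}$ is not even well defined on inputs exceeding $1$, and in the applications $q=p$ is prime, so this costs nothing).

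The crucial step is to choose the right witness vector to feed into the induction hypothesis. I would take $\vec b:=(x,y,0,\dots,0)\in\mathbb Z^q$. Since $x\ge y\ge 0$ we have $\vec b\in V$, and since $\|\vec b\|_1=x+y<\|\vec a\|_1$ by the hypothesis of the lemma, the first clause in the definition of the total order on $V$ gives $\vec b<\vec a$. Thus \eqref{induction_hypothesis} applies to $\vec b$ and yields $v_d(\vec b)\ge \func{d}{q}(\|\vec b\|_1)=\func{d}{q}(x+y)$. Unwinding the definition of $v_d$ and using $\func{\cdot}{q}(0)=0$, we have $v_d(\vec b)=\func{d}{q}(x)+\func{d-1}{q}(y)$, so this reads
$$\func{d}{q}(x)+\func{d-1}{q}(y)\ge \func{d}{q}(x+y).$$

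It then remains only to absorb the degree shift on the second summand. Monotonicity of $\func{\cdot}{q}(y)$ in the degree, i.e.\ Corollary~\ref{prop-monotone}, gives $\func{d}{q}(y)\ge \func{d-1}{q}(y)$, and chaining this with the displayed inequality gives $\func{d}{q}(x)+\func{d}{q}(y)\ge \func{d}{q}(x)+\func{d-1}{q}(y)\ge \func{d}{q}(x+y)$, which is the claim. The boundary case $d=0$ causes no trouble: there $\func{d-1}{q}(y)=\func{-1}{q}(y)=0$, and every inequality above still holds (with Corollary~\ref{prop-monotone} applied at degrees $0>-1$ to give $\func{0}{q}(y)\ge 0$).

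There is no deep obstacle here; the only point requiring genuine care is the verification that the witness $\vec b$ really lies in $V$ and is strictly below $\vec a$ in the order, so that invoking \eqref{induction_hypothesis} is legitimate. This is precisely where the strict hypothesis $x+y<\|\vec a\|_1$ is used, and it is the reason Lemma~\ref{lem-subadd} must be stated with this restriction rather than as an unconditional sub-additivity statement for $\func{d}{q}$.
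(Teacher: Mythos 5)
Your proof is correct and follows essentially the same route as the paper: apply the induction hypothesis \eqref{induction_hypothesis} to the witness $\vec{b}=(x,y,0,\dots,0)$ (valid since $x\geq y$ puts $\vec{b}\in V$ and $x+y<\|\vec{a}\|_1$ gives $\vec{b}<\vec{a}$), obtaining $\func{d}{q}(x)+\func{d-1}{q}(y)\geq \func{d}{q}(x+y)$, and then remove the degree shift via Corollary~\ref{prop-monotone}. No issues to report.
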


\begin{proof}
Without loss of generality assume that $x\geq y$.
Then consider $\vec{b}=(x,y,0,\cdots,0)$.
We can verify that $\vec{b}\in V$ and $\vec{b}<\vec{a}$.
Hence $\func{d}{q}(x)+\func{d-1}{q}(y)=v_d(\vec{b})\geq \func{d}{q}(\|\vec{b}\|_1)=\func{d}{q}(x+y)$.
Now,
by Corollary \ref{prop-monotone},
$\func{d-1}{q}(y)\leq \func{d}{q}{y}$,
therefore 
\[\func{d}{q}(x)+\func{d}{q}(y)\geq \func{d}{q}(x)+\func{d-1}{q}(y)\geq \func{d}{q}(x+y)\\[-4ex]\]
\end{proof}

Now consider $\vec{a}\not \in V_{*}$ that is singularized and narrow.
Vectors in $V_{*}$ are very structured in the sense that they can have at most 3 heights.
Inspired by the definition of $V_{*}$,
we have the following definition.

\begin{defn}\label{def-packed}
An interval $[s,q-1]$ is called \emph{$k$-packed} if one of the following happens:
\begin{itemize}
\item $a_s=k\cdot q^r+c_s$ and $a_{s+1}=\cdots=a_{q-1}=k\cdot q^r$ where $0\leq c_s<q^r$ or 
\item $\exists t> s$ so that $a_s=k\cdot q^r+c_s$, $a_{s+1}=\cdots=a_{t-1}=k\cdot q^r$, $a_t<k\cdot q^r$ and $a_{t+1}=\cdots =a_{q-1}=0$.
\end{itemize}
Interval $[s,q-1]$ is called packed if there exists a $k$ so that $[s,q-1]$ is $k$-packed.
\end{defn}

With this definition,
we can verify that $\forall \vec{b} \in V_{*}$,
$\vec{b}$ is $q$-packed on $[0,q-1]$.
Since $\vec{a}\not \in V_{*}$,
let $i$ be the smallest integer so that $[i,q-1]$ is packed;
then either $i\neq 0$,
or $[0,q-1]$ is $k$-packed for some $k<q$.
We are going to rearrange $[i,q-1]$ so that it is $(k+1)$-packed to improve $\vec{a}$.

\begin{proof}[Proof of Proposition \ref{prop-repack}]

We claim that the following operation on input $\vec{a}\not\in V_{*}$ that is singularized and narrow,
outputs  $\vec{a'}\in V$ so that  $\vec{a'}<\vec{a}$,
$\|\vec{a'}\|_1=\|\vec{a}\|_1$ and $v_d(\vec{a'})\leq v_d(\vec{a})$.

We first select the smallest $i_0$ so that $[s_{i_0},q-1]$ is packed.
Then $[s_{i_0},q-1]$ is $k$-packed where $k$ is the height of the interval $[s_{i_0},t_{i_0}]$.
Then we set $a_i'=a_i$ for all $i<s_{i_0}$,
that is,
we only ``repack'' the interval $[s_{i_0},q-1]$.
We compute $\csum=\sum_{j=s_{i_0}}^{q-1} a_i$,
then compute $t=\lfloor \frac{\csum}{(h_{i_0}+1)\cdot q^r} \rfloor$ and $c=\csum-t\cdot (h_{i_0}+1)\cdot q^r$.
Then we set $a_{s_{i_0}+j}'=(h_{i_0}+1)\cdot q^r$ for $j=0,\cdots,t-1$;
set $a_{s_{i_0}+t}=c$;
and all the other $a_i'$ to be 0.

\begin{figure} 
\centering
    \includegraphics[width=0.5\textwidth]{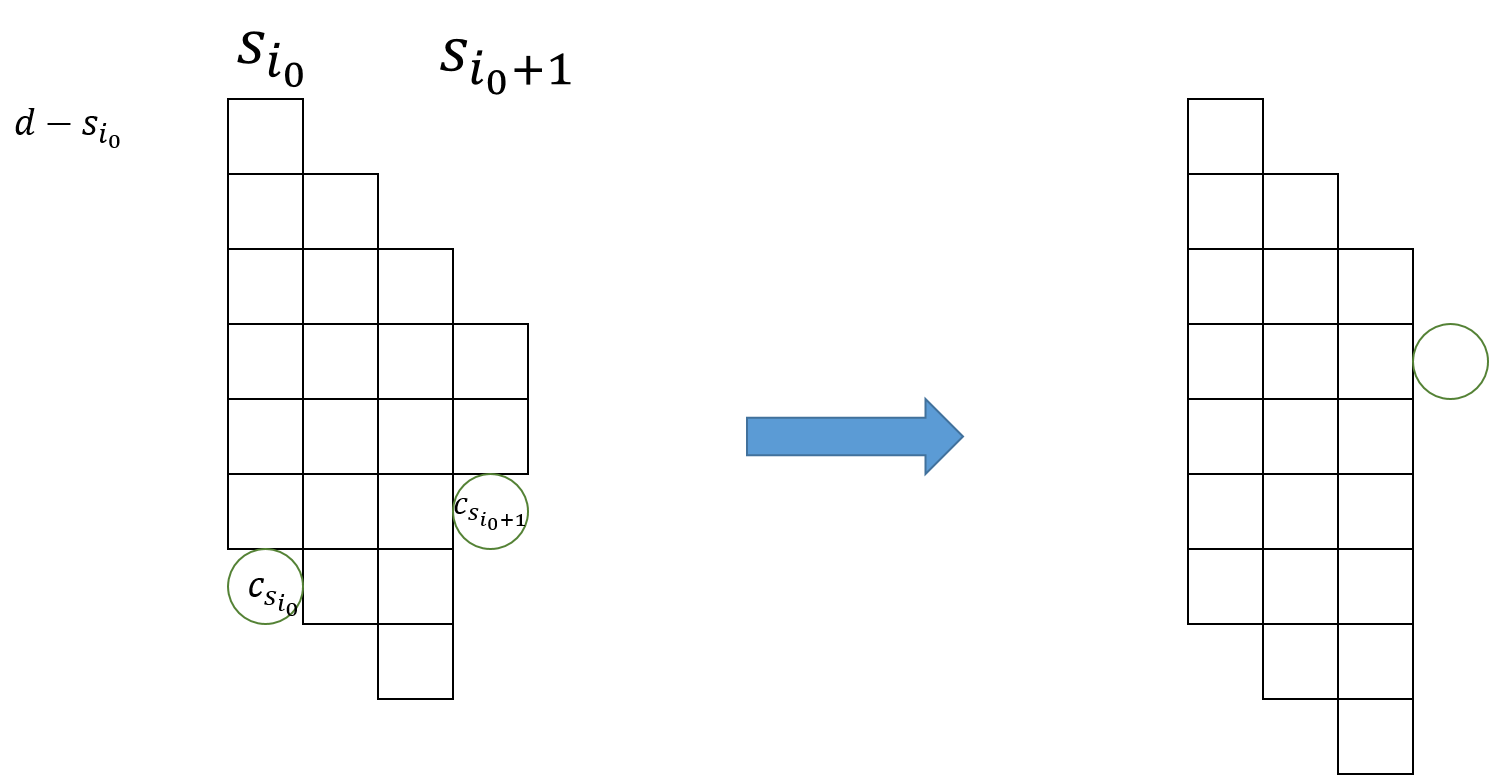}
  \caption{A visualization of the operation.
  } 
\end{figure}

We first argue that $h_{i_0}>0$.
Here we use the fact that $\vec{a}$ is singularized,
therefore when $h_{i_0}=0$,
$a_j=0$ for all $j>s_{i_0}$.
If $i_0=1$,
then
we have $\vec{a}\in V_{*}$ because when $h_{1}=0$,
 $\vec{a}$ is of the form $(a_0,0,\cdots,0)$;
if $i_0>1$,
we can see that $[s_{i_0-1},q-1]$ is also packed,
which contradicts the assumption of $i_0$.

Now we are ready to prove properties of $\vec{a'}$.
First we show that $\vec{a'}\in V$.
This is obvious if $i_0=1$.
Otherwise,
by construction we only need to show that $a_{s_{i_0}-1}'\geq a_{s_{i_0}}'$.
Since $h_{i_0}+1\leq h_{i_0-1}$,
we have $a_{s_{i_0}-1}'\geq h_{i_0-1}\cdot q^r\geq (h_{i_0}+1)\cdot q^r=a_{s_{i_0}}'$.
So $\vec{a'}\in V$.

By construction,
we can see
$\|\vec{a}\|_1=\|\vec{a'}\|_1$.
Now we argue that $\vec{a'}<\vec{a}$.
If $t\neq 0$,
this is obvious from construction.
Otherwise if $t=0$,
first we must have $s_{i_0}=t_{i_0}$,
otherwise since $h_{i_0}>0$,
$\csum=\sum_{j=s_{i_0}}^{q-1} a_i>a_{s_{i_0}}+a_{t_{i_0}}\geq 2h_{i_0}\cdot q^r\geq (h_{i_0}+1)\cdot q^r$.
But by the choice of $i_0$,
we must have $t_{i_0}<q-1$ and $a_{t_{i_0}+1}>0$,
otherwise we could choose $i_0-1$.
So $a_{s_{i_0}}'=\sum_{j=s_{i_0}}^{q-1}a_j\geq a_{s_{i_0}}+a_{t_{i_0}+1}>a_{s_{i_0}}$,
which implies that $\vec{a'}< \vec{a}$.

The proposition is immediate from the following claim whose proof
is somewhat tedious.

\begin{claim}\label{prop-repacking-long-proof}
Let $\vec{a'}$ be the vector obtained by repacking.
Assuming induction hypothesis \eqref{induction_hypothesis},
we have
$
v_d(\vec{a})\geq v_d(\vec{a'})$.
\end{claim}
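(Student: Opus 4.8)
The plan is to track the difference $v_d(\vec a) - v_d(\vec a')$ restricted to the coordinates $[s_{i_0}, q-1]$ that were modified, since the coordinates before $s_{i_0}$ are untouched and contribute zero. Write $k = h_{i_0}$ for the height of the packed interval, so the old values on $[s_{i_0},q-1]$ consist of one or more blocks at height $k$ (possibly with a remainder $c_{s_{i_0}}$ in the leading slot, since $\vec a$ is singularized) followed by a tail that is either all-$k$ or $(\text{something}<k\cdot q^r)$ then zeros; and the new values form a $(k{+}1)$-packed block: $t$ copies of $(k{+}1)q^r$, then a remainder $c$, then zeros, where $t q^r(k{+}1) + c = \xi := \sum_{j=s_{i_0}}^{q-1} a_j$. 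Using Definition \ref{def-g_d,q} to peel off $k$ (resp. $k+1$) full layers of $q^r$ from each $\func{d-j}{q}(a_j)$, I would reduce both $v_d(\vec a)\big|_{[s_{i_0},q-1]}$ and $v_d(\vec a')\big|_{[s_{i_0},q-1]}$ to sums of $\func{\cdot}{q}$ of the \emph{remainders} plus an array of $\func{\cdot}{q}(q^r)$-layer terms; the layer terms are exactly the combinatorial bookkeeping of how many height-$k$ columns sit in each row, and the key identity (entirely analogous to the width-counting computation in the proof of Proposition~\ref{prop-transpose}) is that the multiset of layer contributions is preserved under repacking because $\xi$ is fixed — I would verify this by the same $\sum_i \sum_j \mathbf 1_{k_j\ge i}$ double-counting argument, crucially invoking narrowness to guarantee no layer ``spills past'' column $q-1$.

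With the layer terms cancelled, what remains is to compare two expressions built from $\func{\cdot}{q}$ applied to the small remainder values. On the $\vec a$ side these are the remainders $c_{s_{i_0}}$ (at the top) and the tail remainder $a_t - k q^r$ at some lower degree; on the $\vec a'$ side it is the single remainder $c < (k{+}1)q^r$ at the degree corresponding to column $s_{i_0}+t$. The inequality $v_d(\vec a)\ge v_d(\vec a')$ then becomes a statement of the form ``a sum of $\func{}{}$-values at various degrees of a collection of summands is at least $\func{}{}$ of their combined value (plus possibly extra $q^r$-layers),'' which is precisely the regime of Lemma~\ref{lem-subadd} (sub-additivity of $\func{d}{q}$ under the induction hypothesis) together with Corollary~\ref{prop-monotone} (monotonicity in $d$) to move everything to a common degree. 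Concretely I expect to argue: the $\vec a'$-remainder value $c$ equals $\xi \bmod (k{+}1)q^r$, and the $\vec a$-remainders together with the leftover height-$k$ mass account for the same total, so repeated application of Lemma~\ref{lem-subadd} bounds the $\vec a$-side from below by the $\vec a'$-side; at each application one checks the side condition $x+y < \|\vec a\|_1$, which holds since all the quantities involved are strictly smaller than the leading block of $\vec a$ and hence than $\|\vec a\|_1$.

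The main obstacle, I expect, is the careful case analysis around the tail of the $k$-packed interval: whether $[s_{i_0},q-1]$ is of the first type (all columns at height exactly $k$ after the leading remainder) or the second type (a strictly smaller column $a_t$ then zeros), and correspondingly whether the repacked $\vec a'$ has its remainder column $s_{i_0}+t$ landing inside or at the boundary of what used to be the height-$k$ region. Getting the degree indices exactly right — i.e.\ matching $\func{d-j}{q}$ on the old side with $\func{d-j'}{q}$ on the new side after the layer-counting shuffle — is where the "somewhat tedious" bookkeeping lives, and it is essentially the same kind of index arithmetic that appears in the proof of Proposition~\ref{prop-transpose} (the functions $h(\cdot)$ and the $w_{i,j}$ widths). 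Once the indices are aligned, the actual inequalities are all immediate consequences of Lemma~\ref{lem-subadd}, Corollary~\ref{prop-monotone}, and Corollary~\ref{prop-lifting}, with no new ideas required; so I would organize the proof as (1) reduce to the modified coordinates, (2) peel $k$ layers and establish the layer-preservation identity via width-counting, (3) reduce to a remainder inequality, (4) dispatch it by sub-additivity and monotonicity, handling the two tail types in parallel.
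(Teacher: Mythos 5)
There is a genuine gap, and it sits exactly at the step you call the ``key identity.'' The claim that the multiset of $\func{\cdot}{q}(q^r)$-layer terms is preserved under repacking is false: repacking raises the column heights from $\hone$ to $\hone+1$ and merges the two remainders, so layers both move in degree and can be created. A small example: take $q=3$, $r=1$, $\vec{a}=(8,2,0)$ (singularized, narrow, not in $V_*$); repacking gives $\vec{a'}=(9,1,0)$. Here $\vec{a}$ contributes layers at degrees $d,d-1$ only, while $\vec{a'}$ contributes layers at degrees $d,d-1,d-2$, because the remainders $2+2$ combine into a new full layer. So your step (2) does not go through as an identity analogous to the width-counting in Proposition~\ref{prop-transpose} (that computation gives an exact equality $v_d(\vec{a'})=v_d(\vec{a})$ precisely because transposing preserves the staircase cells; repacking does not), and the residual comparison you reach in step (3) is not the one you expect. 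Moreover, the use you assign to narrowness (``no layer spills past column $q-1$'') is not what is actually needed; its role in the paper is to pin down the number $t$ of full $(\hone+1)q^r$ blocks to $t\in\{\tlength-1,\tlength\}$, which is what organizes the case analysis.

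The second, related gap is step (4): the surviving inequality is \emph{not} an immediate consequence of Lemma~\ref{lem-subadd} and Corollary~\ref{prop-monotone}. In the example above, after cancelling the common layers one must show
\begin{equation*}
\func{d'}{q}(\cone)+\func{d'}{q}(\ctwo)\ \geq\ \func{d'}{q}(q^r)+\func{d'}{q}(c'),\qquad \cone+\ctwo=q^r+c',\quad c'<\ctwo\leq\cone<q^r,
\end{equation*}
and plain sub-additivity goes the wrong way here: it only gives $\func{d'}{q}(\cone)+\func{d'}{q}(\ctwo)\geq\func{d'}{q}(q^r+c')=\func{d'}{q}(q^r)+\func{d'-1}{q}(c')$, which is \emph{weaker} than what is needed (the missing amount is exactly $\func{d'}{q}(c')-\func{d'-1}{q}(c')\geq 0$). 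The paper's proof closes this by a different device: apply Corollary~\ref{prop-lifting} to lift the degree of the largest remainder by one, i.e.\ $\func{d'}{q}(q^r)-\func{d'}{q}(x)\leq\func{d'+1}{q}(q^r)-\func{d'+1}{q}(x)$, and then invoke the induction hypothesis \eqref{induction_hypothesis} on $(x,y,0,\dots,0)$ at the shifted degree to get $\func{d'+1}{q}(x)+\func{d'}{q}(y)\geq\func{d'+1}{q}(q^r+c')$. You do list Corollary~\ref{prop-lifting} among your tools, but your outline never isolates this rebalancing inequality, and without it (and without the $t\in\{\tlength-1,\tlength\}$ case split) the plan as written cannot be completed.
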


It remains to prove this claim.
By construction,
when $i<s_{i_0}$,
$a_i'=a_i$,
hence 
\begin{align}\label{eq-diff}
v_d(\vec{a'})-v_d(\vec{a})=\sum_{j=s_{i_0}}^{q-1} \func{d-j}{q}(a_j')-\sum_{j=s_{i_0}}^{q-1} \func{d-j}{q}(a_j)
\end{align}

To prove the claim,
we will expand the $\func{d}{q}$ function in the summations.
We will see that those summations have many terms in common,
so we can do a lot of cancellation.
Furthermore,
we will use Corollary \ref{prop-lifting},
which allows us to simplify the expression greatly.
Then we are able to show that after simplification,
the right-hand side of \eqref{eq-diff} is no more than 0 and hence the
claim follows.

Since $\vec{a}$ is singularized and $[s_{i_0},q-1]$ is packed,
without losing generality we may assume $\vec{a}$ is of the form
$$a_i=\begin{cases} h_{i_0}\cdot q^r+c_{s_{i_0}}&i=s_{i_0}\\  
 h_{i_0}\cdot q^r& s_{i_0}<i\leq t_{i_0}\\  
 h_{i_0+1}\cdot q^r+c_{s_{i_0}+1}& \mbox{if } i=t_{i_0}+1\\  
0&\mbox{if }  i>t_{i_0}+1.\\  
\end{cases}$$
Here, $h_{i_0}$ is the height of the interval $[s_{i_0},t_{i_0}]$
and $h_{i_0+1}$ is the height of the interval $[s_{i_0+1},t_{i_0+1}]$.
To avoid unnecessary special cases,
we set $h_{i_0+1}=c_{s_{i_0}+1}=0$ if $i_0=\ell$,
then we can view all four of these cases as occurring.

Notice that in \eqref{eq-diff},
all the terms in the summations start with $j=s_{i_0}$.
To simplify notation, we shift the indices so that we start from $0$.
For $j\geq 0$,
we set $b_j=a_{j+s_{i_0}}$ and $b_j'=a_{j+s_{i_0}}'$.
Then $\vec{b}$ is a vector of length $q-s_{i_0}$. 
Let $d'=d-s_{i_0}$ and $t'=t_{i_0}-s_{i_0}+1$.
Also in order to drop messy subscripts,
we set
$c_1=c_{s_{i_0}}$,
$c_2=c_{s_{i_0}+1}$,
$k=h_{i_0}$ 
and $k'=h_{i_0+1}$.

By replacing symbols,
we have
$$b_i=\begin{cases} k\cdot q^r+\cone &i=0\\  
k\cdot q^r& 0<i< t'\\  
 k'\cdot q^r+\ctwo& \mbox{if } i=t'\\  
0&\mbox{if }  i>t'. 
\end{cases}$$
Hence we can compute $\csum=\sum_{i=0}^{t'} b_i=\tlength\cdot \hone\cdot q^r+\htwo\cdot q^r+\cone+\ctwo$.
Also, since
$\vec{a}$ is narrow,
 by Definition \ref{def-narrow},
 $\hone\geq \tlength$.
Recall that $t=\lfloor \frac{\csum}{(\hone+1)\cdot q^r}\rfloor$.
We argue that $\tlength-1\leq t\leq \tlength$.
Indeed,
since
\[
\tlength\cdot \hone-(\tlength-1)(\hone+1)=\hone-(\tlength-1)\geq 0
\]
we get $\tlength-1\leq t$.
On the other hand,
if $t>\tlength$,
then $\csum\geq (\tlength+1)\cdot (\hone+1)\cdot q^r$,
which is equivalent to
\[
\htwo\cdot q^r+\cone+\ctwo>(\tlength+\hone+1)\cdot q^r.
\]
But this cannot be true,
since $\htwo<\hone$ and $\cone+\ctwo<2\cdot q^r$.

\medskip\noindent
So we know that $t$ can only be $\tlength-1$ or $\tlength$.
Observe that
\begin{align}\label{eq-expand-a'}
\sum_{j=s_{i_0}}^{q-1} \func{d-j}{q}(a_j')=\sum_{j=0}^{q-1-s_{i_0}} \func{d'}{q}(b_j')=\sum_{j=0}^{t-1}\func{\done-j}{q}((\hone+1)\cdot q^r)+\func{\done-t}{q}(c)
\end{align}
and
\begin{align}\label{eq-expand-a}
\sum_{j=s_{i_0}}^{q-1} \func{d-j}{q}(a_j)=\sum_{j=0}^{q-1-s_{i_0}} \func{d'}{q}(b_j)&=\func{\done}{q}(\hone\cdot q^r+\cone)+\sum_{j=1}^{\tlength-1}\func{\done-j}{q}(\hone\cdot q^r)+\func{\dtwon}{q}(\htwo\cdot q^r+\ctwo)\nonumber \\
&=\sum_{j=0}^{\tlength-1}\func{\done-j}{q}(\hone\cdot q^r)+\func{\done-\hone}{q}(\cone)+\func{\dtwon}{q}(\htwo\cdot q^r+\ctwo).
\end{align}

These two summations are very similar in the sense that we can break them into two parts:
one structured summation where the input of the $\func{d}{q}$ function is some power of $q^r$,
and another part where the input is quite irregular.
If $t=t'$,
then we can ``align'' the structured part;
otherwise, there will be one more term in \eqref{eq-expand-a}.
We break things into cases based on the value of $t$.

\bigskip\noindent\underline{\textsc{Case 1}}\quad $t=\tlength-1$:\quad If this happens,
we must have $\csum<\tlength\cdot (\hone+1)\cdot q^r$,
which is equivalent to
\[
\htwo\cdot q^r+\cone+\ctwo<\tlength\cdot q^r
\]
Notice that for $j=0,\cdots,t-1$, by Definition \ref{def-g_d,q} we have
\[
\func{\done-j}{q}((\hone+1)\cdot q^r)-\func{\done-j}{q}(\hone\cdot q^r)=\func{\done-\hone-j}{q}(q^r).
\]
Together with \eqref{eq-expand-a}  and \eqref{eq-expand-a'}, \eqref{eq-diff} becomes
\[
\begin{split}
v_d(\vec{a'})-v_d(\vec{a})=&\sum_{j=0}^{t-1}\func{\done-\hone-j}{q}(q^r)+\func{\done-t}{q}(c)\\&-(\func{d'-t'+1}{q}(\hone\cdot q^r)+\func{\done-\hone}{q}(\cone)+\func{\dtwon}{q}(\htwo\cdot q^r+\ctwo)).
\end{split}
\]
Since we have 
\[
c=\csum-t\cdot (\hone+1)\cdot q^r=(\hone-(\tlength-1))\cdot q^r+\htwo\cdot q^r+\cone+\ctwo
\]
and since $c<q^{r+1}$ and $\hone-(\tlength-1)\geq 1$,
by Definition \ref{def-g_d,q} we have
\[
\func{\done-t}{q}(c)=\sum_{j=0}^{\hone-\tlength}\func{d'-t'-j}{q}(q^r)+\func{\done-\hone}{q}(\htwo\cdot q^r+\cone+\ctwo).
\]
Now, because we have
\[
\sum_{j=0}^{t-1}\func{\done-\hone-j}{q}(q^r)+\sum_{j=0}^{\hone-\tlength}\func{d'-t'+1-j}{q}(q^r)=\sum_{j=0}^{\hone-1}\func{d'-t'+1-j}{q}(q^r)=\func{d'-t'+1}{q}(\hone\cdot q^r),
\]
\eqref{eq-diff} can be simplified further to
\[
v_d(\vec{a'})-v_d(\vec{a})=\func{\done-\hone}{q}(\htwo\cdot q^r+\cone+\ctwo)
-(\func{\done-\hone}{q}(\cone)+\func{\dtwon}{q}(\htwo\cdot q^r+\ctwo)).
\]
Notice that $\dtwon\geq \done-\hone$;
so, by Corollary \ref{prop-monotone},
$\func{\dtwon}{q}(\htwo\cdot q^r+\ctwo)\geq \func{\done-\hone}{q}(\htwo\cdot q^r+\ctwo)$.
Hence we can apply Lemma \ref{lem-subadd} by using $x=\cone$, $y=\htwo\cdot q^r+\ctwo$ to get
\[
\begin{split}
\func{\done-\hone}{q}(\cone)+\func{\dtwon}{q}(\htwo\cdot q^r+\ctwo)
&\geq \func{\done-\hone}{q}(\cone)+\func{\done-\hone}{q}(\htwo\cdot q^r+\ctwo)\\
&\geq \func{\done-\hone}{q}(\htwo\cdot q^r+\cone+\ctwo)
\end{split}
\]
which leads to
$v_d(\vec{a'})-v_d(\vec{a})\leq 0$.

\bigskip\noindent
\underline{\textsc{Case 2}}\quad $t=\tlength$: \quad
In this case,
we must have $\csum\geq t\cdot (\hone+1)\cdot q^r$,
which is equivalent to
\[
\htwo\cdot q^r+\cone+\ctwo\geq \tlength\cdot q^r
\]
So $c=\csum-t\cdot (\hone+1)\cdot q^r=\htwo\cdot q^r+\cone+\ctwo-\tlength\cdot q^r$.
In this case,
\eqref{eq-diff} becomes
\[
\begin{split}
v_d(\vec{a'})-v_d(\vec{a})=&\sum_{j=0}^{t-1}\func{\done-\hone-j}{q}(q^r)+\func{\done-t}{q}(c)-(\func{\done-\hone}{q}(\cone)+\func{\dtwon}{q}(\htwo\cdot q^r+\ctwo))\\
=&(\func{\done-\hone}{q}(t\cdot q^r)+\func{\dtwon}{q}(c))-(\func{\done-\hone}{q}(\cone)+\func{\dtwon}{q}(\htwo\cdot q^r+\ctwo)).
\end{split}
\]
We observe here that $t\cdot q^r+c=\cone+\htwo\cdot q^r+\ctwo$
and let $\ttwo=\lfloor \frac{\htwo\cdot q^r+\cone+\ctwo}{q^r}\rfloor$.
Because $c=\htwo\cdot q^r+\cone+\ctwo-t\cdot q^r$,
we have $\lfloor \frac{c}{q^r}\rfloor=\ttwo-t$.
Letting $c'=c-(\ttwo-t)\cdot q^r<q^r$,
then we have
\[
\func{\dtwon}{q}(c)=\func{\dtwon}{q}((\ttwo-t)\cdot q^r)+\func{\dtwon-(\ttwo-t)}{q}(c')
\]
Notice that $\ttwo\leq \htwo+1$ and $t\geq 1$, hence $\ttwo-t\leq \htwo$, so
\[
\func{\dtwon}{q}(\htwo\cdot q^r+\ctwo))=\func{\dtwon}{q}((\ttwo-t)\cdot q^r)+\func{\dtwon-(\ttwo-t)}{q}((\htwo-(\ttwo-t))\cdot q^r+\ctwo)).
\]
So we can plug  \eqref{eq-expand-a}  and \eqref{eq-expand-a'} into
\eqref{eq-diff} to get
\begin{align}\label{eq-diff-2}
v_d(\vec{a'})-v_d(\vec{a})
=&(\func{\done-\hone}{q}(t\cdot q^r)+\func{\dtwon-(\ttwo-t)}{q}(c'))\\
-&(\func{\done-\hone}{q}(\cone)+\func{\dtwon-(\ttwo-t)}{q}((\htwo-(\ttwo-t))\cdot q^r+\ctwo)).\nonumber
\end{align}
We now break this case into sub-cases depending on the relationship  between $\done-\hone$ and $\dtwon-\htwo$.

\bigskip\noindent
\underline{\textsc{Sub-Case} 2a}\quad   $\done-\hone< \dtwon-\htwo$:\qquad
Since $t\geq 1$,
$t\cdot q^r>\cone$.
So by Corollary \ref{prop-lifting},
we have
\[
\func{\done-\hone}{q}(t\cdot q^r)-\func{\done-\hone}{q}(\cone)\leq \func{\dtwon-\htwo}{q}(t\cdot q^r)-\func{\dtwon-\htwo}{q}(\cone)
\]
So \eqref{eq-diff-2} can be reduced as
\begin{align}\label{eq-diff-3}
v_d(\vec{a'})-v_d(\vec{a})
\leq &(\func{\dtwon-\htwo}{q}(t\cdot q^r)+\func{\dtwon-(\ttwo-t)}{q}(c'))\\
-&(\func{\dtwon-\htwo}{q}(\cone)+\func{\dtwon-(\ttwo-t)}{q}((\htwo-(\ttwo-t))\cdot q^r+\ctwo)).\nonumber
\end{align}
First suppose that $\htwo-(\ttwo-t)=0$: \qquad
Since $\htwo\leq \ttwo\leq \htwo+1$ and $t\geq 1$,
this can only happen when $\ttwo=\htwo+1$ and $t=1$.
For notational convenience,
let $d'=\dtwon-\htwo$,
then we only need to argue that
\begin{align}\label{eq-subadd-2}
\func{d'}{q}(q^r)+\func{d'}{q}(c')\leq \func{d'}{q}(\cone)+\func{d'}{q}(\ctwo).
\end{align}
where $q^r+c=\cone+\ctwo$.
To see this,
let $x=\max\{\cone,\ctwo\}$ and $y=\min\{\cone,\ctwo\}$ and note that
$c<y\leq x<q^r$.
By Corollary \ref{prop-lifting},
$\func{d'}{q}(q^r)-\func{d'}{q}(x)\leq \func{d'+1}{q}(q^r)-\func{d'+1}{q}(x)$.
Therefore
\[
\begin{split}
(\func{d'}{q}(q^r)+\func{d'}{q}(c'))-( \func{d'}{q}(\cone)+\func{d'}{q}(\ctwo))
&=(\func{d'}{q}(q^r)+\func{d'}{q}(c'))-(\func{d'}{q}(x)+\func{d'}{q}(y))\\
&\leq (\func{d'+1}{q}(q^r)+\func{d'}{q}(c'))-(\func{d'+1}{q}(x)+\func{d'}{q}(y))\\
&=\func{d'+1}{q}(q^r+c')-(\func{d'+1}{q}(x)+\func{d'}{q}(y)).
\end{split}
\]
Now,
consider the induction hypothesis
applied to $\vec{b}=(x,y,0,\cdots,0)$,
since $x+y<\|\vec{a}\|_1$,
we have $\func{d'+1}{q}(x)+\func{d'}{q}(y)=v_{d'+1}(\vec{b})\geq \func{d'+1}{q}(\|\vec{b}\|_1)=\func{d'+1}{q}(x+y)=\func{d'+1}{q}(q^r+c')$.
This implies
\[
(\func{d'}{q}(q^r)+\func{d'}{q}(c'))-( \func{d'}{q}(\cone)+\func{d'}{q}(\ctwo))\leq 0
\]
which gives the result.

\medskip\noindent
Alternatively, assume that $\htwo-(\ttwo-t)>0$:\qquad
Notice that $\ttwo\geq \htwo$,
so $t\geq \htwo-(\ttwo-t)$.
Hence $$\func{\dtwon-\htwo}{q}(t\cdot q^r)=\func{\dtwon-\htwo}{q}((\htwo-(\ttwo-t))\cdot q^r)+\func{\dtwon-\htwo-((\htwo-(\ttwo-t)))}{q}((\ttwo-\htwo)\cdot q^r).$$
Now,
$\dtwon-\htwo<\dtwon-(\ttwo-t)$ and $(\htwo-(\ttwo-t))\cdot q^r\geq q^r>c'$,
so we can apply Corollary \ref{prop-lifting} to get
\[
\begin{split}
&\func{\dtwon-\htwo}{q}(t\cdot q^r)+\func{\dtwon-(\ttwo-t)}{q}(c')\\
&=\func{\dtwon-\htwo}{q}((\htwo-(\ttwo-t))\cdot q^r)+\func{\dtwon-\htwo-((\htwo-(\ttwo-t)))}{q}((\ttwo-\htwo)\cdot q^r)+\func{\dtwon-(\ttwo-t)}{q}(c')\\
&\leq \func{\dtwon-(\ttwo-t)}{q}((\htwo-(\ttwo-t))\cdot q^r)+\func{\dtwon-\htwo-((\htwo-(\ttwo-t)))}{q}((\ttwo-\htwo)\cdot q^r)+\func{\dtwon-\htwo}{q}(c').
\end{split}
\]
Plugging into \eqref{eq-diff-3},
we get
\[
\begin{split}
&v_d(\vec{a'})-v_d(\vec{a})\\
&\quad\leq (\func{\dtwon-(\ttwo-t)}{q}((\htwo-(\ttwo-t))\cdot q^r)+\func{\dtwon-\htwo-((\htwo-(\ttwo-t)))}{q}((\ttwo-\htwo)\cdot q^r)+\func{\dtwon-\htwo}{q}(c'))\\
&\qquad-(\func{\dtwon-\htwo}{q}(\cone)+\func{\dtwon-(\ttwo-t)}{q}((\htwo-(\ttwo-t))\cdot q^r+\ctwo))\\
&\quad=(\func{\dtwon-\htwo-((\htwo-(\ttwo-t)))}{q}((\ttwo-\htwo)\cdot q^r)+\func{\dtwon-\htwo}{q}(c'))\\
&\qquad -(\func{\dtwon-\htwo}{q}(\cone)+\func{\dtwon-\htwo}{q}(\ctwo))\\
&\quad\leq (\func{\dtwon-\htwo}{q}((\ttwo-\htwo)\cdot q^r)+\func{\dtwon-\htwo}{q}(c'))\\
&\qquad-(\func{\dtwon-\htwo}{q}(\cone)+\func{\dtwon-\htwo}{q}(\ctwo)).\\
\end{split}
\]
If $\ttwo=\htwo$,
then we can apply Lemma~\ref{lem-subadd} to show that $v_d(\vec{a'})-v_d(\vec{a})\leq 0$.
Otherwise, if $\ttwo=\htwo+1$,
then we are back to inequality in the form of \eqref{eq-subadd-2},
which also gives us $v_d(\vec{a'})-v_d(\vec{a})\leq 0$ as required.

\bigskip\noindent

\underline{\textsc{Case} 2b} \quad   $\done-\hone\geq \dtwon-\htwo$:\qquad
Notice that $(\done-\hone)-(\dtwon-\htwo)=t+\htwo-\hone$,
and $\ttwo\leq \htwo+1\leq \hone$,
so
\[
\begin{split}
&\func{\dtwon-(\ttwo-t)}{q}((\htwo-(\ttwo-t))\cdot q^r+\ctwo)\\
&=\func{\dtwon-(\ttwo-t)}{q}((\hone-\ttwo)\cdot q^r)+\func{\done-\hone}{q}((t+\htwo-\hone)\cdot q^r)
+\func{\dtwon-\htwo}{q}(\ctwo)
\end{split}
\]
and
\[
\func{\done-\hone}{q}(t\cdot q^r)=\func{\done-\hone}{q}((t+\htwo-\hone)\cdot q^r)+\func{\dtwon-\htwo}{q}((\hone-\htwo)\cdot q^r).
\]
Therefore
\eqref{eq-diff-2} can be rewritten as
\[
\begin{split}
v_d(\vec{a'})-v_d(\vec{a})
=&(\func{\dtwon-\htwo}{q}((\hone-\htwo)\cdot q^r)+\func{\dtwon-(\ttwo-t)}{q}(c'))\\
-&(\func{\done-\hone}{q}(\cone)+\func{\dtwon-(\ttwo-t)}{q}((\hone-\ttwo)\cdot q^r)
+\func{\dtwon-\htwo}{q}(\ctwo)).\nonumber
\end{split}
\]
Notice that $\dtwon-(\ttwo-t)-(\hone-\ttwo)=\done-\hone$;
so, by the fact that
$(\hone-\ttwo)\cdot q^r+\cone<q^{r+1}$,
we actually have $$\func{\done-\hone}{q}(\cone)+\func{\dtwon-(\ttwo-t)}{q}((\hone-\ttwo)\cdot q^r)=\func{\dtwon-(\ttwo-t)}{q}((\hone-\ttwo)\cdot q^r+\cone).$$
So, we have
\begin{align}\label{eq-diff-4}
v_d(\vec{a'})-v_d(\vec{a})
=&(\func{\dtwon-\htwo}{q}((\hone-\htwo)\cdot q^r)+\func{\dtwon-(\ttwo-t)}{q}(c'))\\
-&(\func{\dtwon-(\ttwo-t)}{q}((\hone-\ttwo)\cdot q^r+\cone)
+\func{\dtwon-\htwo}{q}(\ctwo))\nonumber
\end{align}
This is a somewhat familiar expression.
Actually,
the right-hand side of \eqref{eq-diff-4} is of the same form as \eqref{eq-diff-3}.
With a similar argument
we can also conclude that $v_d(\vec{a'})-v_d(\vec{a})\leq 0$.
\end{proof}

\newpage
\bibliographystyle{plain}
\bibliography{mybib}

\end{document}